\pgfplotsset{compat=1.10}
\theoremstyle{plain}
\theoremstyle{definition}
\newtheorem{rrule}[theorem]{Reduction Rule}
\crefname{rrule}{Reduction Rule}{Reduction Rules}
\crefname{observation}{Observation}{Observations}
\newtheorem{hypothesis}{Hypothesis}
\crefname{hypothesis}{Hypothesis}{Hypotheses}
\Crefname{rrule}{Reduction Rule}{Reduction Rules}
\Crefname{corollary}{Corollary}{Corollaries}
\Crefname{proposition}{Proposition}{Propositions}
\crefname{claim}{claim}{claims}
\Crefname{claim}{Claim}{Claims}
\crefname{enumi}{Property}{Properties}
\Crefname{enumi}{Property}{Properties}
\newcommand{\reduced}{\operatorname{red}}
\newcommand{\nt}{\textsc{Negative Triangle}}
\newcommand{\OV}{\textsc{Orthogonal Vectors}}
\newcommand{\kernelsizeconstant}{\beta}
\newcommand{\samplesizeexponent}{\epsilon}
\newcommand{\kernelcomputationexponent}{{\gamma}}
\DeclareMathOperator{\dist}{dist}
\newcommand{\yes}{``Yes''}
\newcommand{\no}{``No''}
\newcommand{\N}{\ensuremath{\mathds{N}}}
\newcommand{\cocl}[1]{\ensuremath{\operatorname{#1}}}
\newcommand{\NP}{\cocl{NP}}
\newcommand{\coNP}{\cocl{coNP}}
\newcommand{\poly}{\cocl{poly}}
\newcommand{\NPnotincoNPslashpoly}{\ensuremath{\NP\not\subseteq \coNP/\poly}}
\newcommand{\vareps}{\varepsilon}
\newcommand{\LCS}{\textsc{Longest Common Subsequence}}
\newcommand{\optalgexponent}{\alpha}
\newcommand{\parameter}{\ell}
\newcommand{\comptimeexponent}{\nu}
\newcommand{\compparameter}{\mu}
\newcommand{\numinstances}{\lambda}
\newcommand{\mcconv}{\textsc{Multicolored Convolution 3-Sum}}
\newcommand{\decprob}[3]{
	\begin{center}
		\begin{minipage}{0.96\linewidth}
			\noindent
			\textsc{#1}
			\begin{compactdesc}
			 \item[\textbf{Input:}]  #2
			 \item[\textbf{Question:}]  #3
			\end{compactdesc}
		\end{minipage}
	\end{center}
}
\title{Parameterized Lower Bounds for Problems in P via Fine-Grained Cross-Compositions}
\author{Klaus Heeger}{Technische Universität Berlin, Algorithmics and Computational Complexity, Germany}{heeger@tu-berlin.de}{https://orcid.org/0000-0001-8779-0890}{Supported by DFG project NI 369/16 ``FPTinP''.}
\author{André Nichterlein}{Technische Universität Berlin, Algorithmics and Computational Complexity, Germany}{andre.nichterlein@tu-berlin.de}{https://orcid.org/0000-0001-7451-9401}{}
\author{Rolf Niedermeier}{Technische Universität Berlin, Algorithmics and Computational Complexity, Germany}{}{https://orcid.org/0000-0003-1703-1236}{}
\authorrunning{Klaus Heeger, André Nichterlein, Rolf Niedermeier}
\keywords{FPT in P, Kernelization, Decomposition}
\begin{document}

\maketitle

\begin{abstract}

	We provide a general framework to exclude parameterized running times of the form~$O(\ell^\beta+ n^\gamma)$ for problems that have polynomial running time lower bounds under hypotheses from fine-grained complexity.
	Our framework is based on cross-compositions from parameterized complexity. 
	We (conditionally) exclude running times of the form $O(\ell^{{\gamma}/{(\gamma-1)} - \vareps} + n^\gamma)$ for any $1 < \gamma < 2$ and $\vareps > 0$ for the following problems:
	\begin{itemize}
		\item \textsc{Longest Common (Increasing) Subsequence}: Given two length-$n$ strings over an alphabet~$\Sigma$ (over~$\N$) and~$\ell \in \N$, is there a common (increasing) subsequence of length~$\ell$ in both strings?
		\item \textsc{Discrete Fr\'echet Distance}: Given two lists of~$n$ points each and~$k \in N$, is the Fr\'echet distance of the lists at most~$k$? Here $\ell$ is the maximum number of points which one list is ahead of the other list in an optimum traversal. 
		\item \textsc{Planar Motion Planning}: Given a set of~$n$ non-intersecting axis-parallel line segment obstacles in the plane and a line segment robot (called rod), can the rod be moved to a specified target without touching any obstacles?
		Here~$\ell$ is the maximum number of segments any segment has in its vicinity.
	\end{itemize}

	\noindent Moreover, we exclude running times $O(\ell^{{2\gamma}/{(\gamma -1)}-\vareps} + n^\gamma)$ for any $1 < \gamma < 3$ and $\vareps > 0$ for:
	\begin{itemize}
		\item \textsc{Negative Triangle}: Given an edge-weighted graph with $n$ vertices, is there a triangle whose sum of edge-weights is negative?
		Here~$\ell$ is the order of a maximum connected component.
		\item \textsc{Triangle Collection}: Given a vertex-colored graph with $n$ vertices, is there for each triple of colors a triangle whose vertices have these three colors?
		Here~$\ell$ is the order of a maximum connected component.
		\item \textsc{2nd Shortest Path}: Given an $n$-vertex edge-weighted directed graph, two vertices~$s$ and~$t$, and $k \in \N$, has the second longest $s$-$t$-path length at most~$k$?
		Here~$\ell$ is the directed feedback vertex set number.
	\end{itemize}
	
	Except for \textsc{2nd Shortest Path} all these running time bounds are tight, that is, algorithms with running time $O(\ell^{{\gamma}/{(\gamma-1)}} + n^\gamma )$ for any $1 < \gamma < 2$ and $O(\ell^{{2\gamma}/{(\gamma -1)}} + n^\gamma)$ for any $1 < \gamma < 3$, respectively, are known.
	Our running time lower bounds also imply lower bounds on kernelization algorithms for these problems.
\end{abstract}

\newpage

\section{Introduction}

In recent years, many results in Fine-Grained Complexity showed that many decade-old textbook algorithms for polynomial-time solvable problems are essentially optimal:
Consider as an example \textsc{Longest Common Subsequence} (LCS) where, given two input strings with~$n$ characters each, the task is to find a longest string that appears as subsequence in both input strings.
The classic $O(n^2)$-time algorithm is often taught in introductory courses to dynamic programming \cite{CLRS09}.
\citet{BK15} and \citet{ABW15} independently showed that an algorithm solving LCS in~$O(n^{2-\vareps})$ time for any~$\vareps > 0$ would refute the Strong Exponential Time Hypothesis (SETH).
Such conditional lower bounds have been shown for many polynomial-time solvable problems in the recent years \cite{VW18b}.

One approach to circumvent such lower bounds is ``FPT in P'' \cite{GMN17,AWW16}.
For \LCS{} there is a (quite old) parameterized algorithm running in $O(kn + n \log n)$ time, where $k$ is the length of the longest common subsequence~\cite{Hir77}.
Thus, if~$k$ is small (e.\,g.~$O(n^{0.99})$), then the~$O(n^2)$ barrier can be broken (without refuting the SETH).
A natural question is whether we can do better.
As~$k \le n$, an algorithm running in~$O(k^{1-\vareps}n)$ time for any~$\vareps > 0$ would break the SETH.
However, there are no obvious arguments excluding a running time of~$O(k^2 + n)$.
In fact, such additive running times are not only desirable (as again, for small~$k$ this would be faster than even~$O(kn)$) but also quite common in parameterized algorithmics by employing kernelization:
For \LCS{} the question would be whether there are linear-time applicable data reduction rules that shrink the input to size~$O(k)$.
Then we could simply apply the textbook algorithm to solve \LCS{} in overall~$O(k^2 + n)$ time.
Kernelization is well-studied in the parameterized community~\cite{FLSZ19,ALMNSS20} and also effective in practice for polynomial-time solvable problems such as \textsc{Maximum Matching}~\cite{KKNNZ21} or \textsc{Minimum Cut}~\cite{HN0S20}.

\citet{BK18} showed in an extensive study that such an $O(k^2 + n)$-time algorithm (and indeed many other parameterized algorithms for \LCS{}) would refute the SETH.
This also implies that no such kernelization algorithm as mentioned above is likely to exist.
The results of \citet{BK18} are based on very carefully crafted reductions.

In this work, we follow a different route to obtain similar results for several problems.
We provide an easy-to-apply, general framework to (conditionally) exclude algorithms with running time~$O(k^\beta+ n^\gamma)$ for problems admitting conditional running time lower bounds.
Indeed we show for various string (including \LCS{}) and graph problems as well as problems from computational geometry tight trade-offs between~$\beta$ and~$\gamma$.
This shows that the trivial trade-offs are often the best one can hope for.

\subsection{Related work}

Fine-grained complexity is an active field of research with hundreds of papers.
We refer to the survey of \citet{VW18b} for an overview of the results and employed hypotheses.

Over the last couple of years there has been a lot of work in the direction of ``FPT in P'' for various problems such as \textsc{Maximum Matching}~\cite{KKNNZ21,MNN20,HK19,FLPSW18,IOO18,KN18,CDP19,Duc21}, \textsc{Hyperbolicity} \cite{FKMNNT19,CDP19}, and \textsc{Diameter} \cite{AWW16,CDP19}.
Parameterized lower bounds are rare in this line of work.
Certain linear-time reductions can be used to exclude any kind of meaningful FPT-running times; this is also known as General-Problem-Hardness~\cite{BFNN19}.
Using various carefully crafted reductions, Bringmann and Künnemann~\cite{BK18} show parameterized running time lower bounds (under SETH) for \LCS{} with respect to seven different parameters.
In a similar fashion, Duraj et al.~\cite{DBLP:journals/algorithmica/DurajKP19} show that solving \textsc{Longest Common Increasing Subsequence} in $O( (n \ell)^{1-\epsilon})$ time where $\ell $ is the solution size for some $\epsilon>0$ would refute SETH.

\citet{FMN18} provide lower bounds for \emph{strict} kernelization (i.\,e.\ kernels where the parameter is not allowed to increase) for subgraph detection problems such as \textsc{Negative Weight Triangle} and \textsc{Triangle Collection}.
Conceptually, they use the \emph{diminisher}-framework~\cite{CFM11,FFHKMN20} which was originally developed to exclude polynomial-size strict kernels under the assumption~P${}\ne{}$NP.
The basic idea is to iteratively apply a diminisher (an algorithm that reduces the parameter at a cost of increasing the instance size) and an (assumed) strict kernel (to shrink and control the instance size) to an instance~$I$ of an NP-hard problem.
After a polynomial number of rounds, this overall polynomial-time algorithm will return a constant size instance which is equivalent to~$I$, thus arriving at~P${}={}$NP.
\citet{FMN18} applied the same idea to polynomial-time solvable problems.
In contrast, we rely and adjust the composition-framework by \citet{BDFH09} which was developed to exclude (general) polynomial-size kernels under the stronger assumption~\NPnotincoNPslashpoly.

The composition framework works as follows. 
Consider the example of the NP-hard problem \textsc{Negative-Weight Clique}: Given an edge-weighted graph~$G$ and an integer~$k$, does~$G$ contain a negative-weight $k$-clique, that is, a clique on~$k$ vertices where the sum of the edge-weights of the edges within the clique is negative.

Let~$(G_1, k), (G_2, k), \ldots, (G_t, k)$ be several instances of \textsc{Negative-Weight Clique} with the same~$k$.
Clearly, the graph~$G$ obtained by taking the disjoint union of all~$G_i$ contains a negative-weight $k$-clique if and only if some~$G_i$ contains a negative-weight $k$-clique.
Moreover, the largest connected component of~$G$ has order~$\max_{i \in [t]}\{|V(G_i)|\}$.
Now assume that \textsc{Negative-Weight Clique} has a kernel of size~$O(\ell^c)$ for some constant~$c$ where~$\ell$ is the order of a largest connected component.
By choosing~$t = k^{c+1}$, it follows that kernelizing the instance~$(G,k)$ yields an instance of size less than~$\ell$, that is, less bits than the number of instances encoded in~$G$.
Given the NP-hardness of \textsc{Negative-Weight Clique} such a compression seems challenging; indeed it would imply~$\NP\subseteq \coNP/\poly$~\cite{FS11}, which in turn results in a collapse of the polynomial hierarchy.

Compositions and their extension cross-composition~\cite{BJK14} are extensively employed in the parameterized complexity literature.
Moreover, to exclude kernels whose size is bounded by polynomials of a specific degree adjustments have been made to the composition framework~\cite{DM14}.

\subparagraph{Parameter trade-offs.}
For several of our running time lower bounds we have tight upper bounds that are derived from a simple case distinction argument.

\begin{observation}[folklore]\label{thm:upper-bound}
	If a problem~$\mathcal{P}$ admits an $\widetilde O(\parameter^\beta n^\gamma)$-time algorithm\footnote{The $\widetilde O$ hides polylogarithmic factors.}, then it admits for every $\lambda > 0$ an~$\widetilde O(\parameter^{\beta + \frac{\gamma \cdot \beta}{\lambda}} + n^{\gamma + \lambda})$-time algorithm.
\end{observation}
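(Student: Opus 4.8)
The plan is to observe that no new algorithmic idea is required: simply running the assumed $\widetilde O(\ell^\beta n^\gamma)$-time algorithm already meets the claimed bound, and all the work lies in the running-time analysis by a case distinction on the relative sizes of the parameter~$\ell$ and the input size~$n$.

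First I would fix $\lambda > 0$ and set the threshold $\tau := \ell^{\beta/\lambda}$, i.e.\ the value of~$n$ at which~$\ell^\beta$ equals~$n^\lambda$. Then I would split into two cases. If $n \le \tau$, then $\ell^\beta n^\gamma \le \ell^\beta \tau^\gamma = \ell^{\beta + \gamma\beta/\lambda}$. If instead $n > \tau$, then $\ell^\beta = \tau^\lambda < n^\lambda$, and hence $\ell^\beta n^\gamma < n^{\gamma + \lambda}$. In either case $\ell^\beta n^\gamma \le \ell^{\beta + \gamma\beta/\lambda} + n^{\gamma + \lambda}$, so the assumed algorithm runs in $\widetilde O(\ell^{\beta + \gamma\beta/\lambda} + n^{\gamma + \lambda})$ time. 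The polylogarithmic factors hidden by~$\widetilde O$ cause no trouble, since $\log(\ell^\beta n^\gamma) = O(\beta \log \ell + \gamma \log n)$ is polylogarithmic in $\ell + n$ and can be absorbed into the polylogarithmic slack of the target bound.

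I do not expect a genuine obstacle here; the only points needing a little care are that the inequality chains must be stated for arbitrary real $\beta, \gamma > 0$ rather than integers, and that degenerate parameter values (such as $\ell \le 1$) be checked — but these are immediate, as the case $n \le \tau$ then forces $n \le 1$ and the bound is trivial. If desired, one can phrase the resulting algorithm as explicitly branching on whether $n \le \ell^{\beta/\lambda}$ and invoking the same subroutine in both branches; this is purely cosmetic and does not change the analysis.
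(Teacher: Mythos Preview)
Your proposal is correct and essentially identical to the paper's proof: both simply run the given $\widetilde O(\ell^\beta n^\gamma)$-time algorithm and bound its running time by the same case distinction on whether $n \le \ell^{\beta/\lambda}$ (equivalently, $\ell \le n^{\lambda/\beta}$). The only difference is cosmetic phrasing of the threshold.
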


\begin{proof} 
	If $\parameter \le n^{\frac{\lambda}{\beta}}$, then the $\widetilde O(\parameter^\beta n^\gamma)$-time algorithm runs in $\widetilde O(n^{\gamma +\lambda})$ time.
	Otherwise~$n \le \parameter^{\frac{\beta}{\lambda}}$.
	Then the $\widetilde O(\parameter^\beta n^{\gamma} )$-algorithm then in $\widetilde O (\parameter^{\beta + \frac{\gamma \cdot \beta}{\lambda}})$ time.
\end{proof}

\subsection{Our Results \& Technique} \label{ssec:our-results}

We provide a composition-based framework to establish parameterized running time lower bounds and apply the framework to \textsc{Longest Common Subsequence}, \textsc{Longest Common (Weakly) Increasing Subsequence}, \textsc{Discrete Fr\'echet Distance}, \textsc{Planar Motion Planning}, \textsc{Negative Triangle}, and \textsc{2nd Shortest Path} (see \cref{ssec:prelim} for the problem definitions). 
Using similar ideas we obtain running time lower bounds for \textsc{Triangle Collection}.
For all these problems except \textsc{2nd Shortest Path} parameterized by the directed feedback vertex set there exist matching running time upper bounds.
We refer to \cref{tab:overview} for an overview on the specific results and the parameterization.
\begin{table}[t!]
	\caption{Overview of achievable running times.
	The upper part of the table lists the results for four problems that can be solved in~$O(n^2)$ time but under SETH or 3SUM-hypothesis not in~$O(n^{2-\vareps})$ time for any~$\vareps > 0$.
	The lower part lists results for three graph problems that, based on the APSP-hypothesis, do not admit $O(n^{3-\vareps})$-time algorithms.
	The parameterized upper and lower bounds are visualized in \cref{fig:running-time-bounds}.
	}
	\label{tab:overview}
	\begin{tabularx}{\textwidth}{llX}
		\toprule
		\multirow{4}{*}{\rotatebox[origin=c]{90}{problems}}
		& \multicolumn{2}{l}{\textsc{Longest Common Subsequence} \hfill $\ell \hat{=}{}$solution size} \\ 
		& \multicolumn{2}{l}{\textsc{Longest Common (Weakly) Increasing Subsequence} \hfill $\ell \hat{=}{}$solution size} \\ 
		& \multicolumn{2}{l}{\textsc{Discrete Fr\'echet Distance} \hfill $\ell \hat{=}{}$maximum shift} \\ 
		& \multicolumn{2}{l}{\textsc{Planar Motion Planning} \hfill $\ell \hat{=}{}$max.\ number of segments in the vincinty of any segment}\\ 
		\midrule
		\multirow{5}{*}{\rotatebox[origin=c]{90}{results}}
		& upper bounds		& lower bounds \\
		\cmidrule(lr){2-3}
		& $O(n^2)$ \cite{DBLP:journals/dam/Grabowski16,DBLP:journals/siamcomp/AgarwalAKS14,DBLP:conf/swat/Vegter90,DBLP:conf/isaac/AgrawalG20}  & no $O(n^{2-\vareps})$ assuming SETH / 3SUM \cite{ABW15,DBLP:conf/focs/Bringmann14,BK15,DBLP:journals/comgeo/GajentaanO95} \\
		& $\widetilde O(\ell n)$ \cite{Hir77,DBLP:journals/algorithmica/SifronyS87,DBLP:journals/jda/KutzBKK11} & \\
		& $\widetilde O(\ell^{{\gamma}/{(\gamma-1)}} + n^{\gamma})$ for each $\gamma > 1$ & no $O(\ell^{{\gamma}/{(\gamma-1)} - \vareps} + n^\gamma)$ for any $\gamma < 2$ \cite{BK18,DBLP:journals/algorithmica/DurajKP19} \\
		& (\Cref{thm:upper-bound}) & (\Cref{cor:LCSlower,cor:DFDlower,cor:PMPlower})\\
		\bottomrule
		\toprule
		\multirow{2}{*}{\rotatebox[origin=c]{90}{problems}}
		& \multicolumn{2}{l}{\textsc{Negative Triangle} \hfill $\ell \hat{=}{}$size of maximum component} \\
		& \multicolumn{2}{l}{\textsc{Triangle Collection} \hfill $\ell \hat{=}{}$size of maximum component} \\
		& \multicolumn{2}{l}{\textsc{2nd Shortest Path} (only lower bounds) \hfill $\ell \hat{=}{}$directed feedback vertex set size} \\ 
		\midrule
		\multirow{5}{*}{\rotatebox[origin=c]{90}{results}}
		& upper bounds		& lower bounds \\
		\cmidrule(lr){2-3}
		& $O(n^3/2^{\Theta (\log^{0.5}n)})$ \cite{DBLP:journals/siamcomp/Williams18,DBLP:journals/talg/ChanW21} & no $O(n^{3-\vareps})$ assuming APSP \cite{DBLP:journals/jacm/WilliamsW18}  \\
		& $\widetilde O(\ell^2 n)$ (folklore) \\
		& $\widetilde O(\ell^{{2\gamma}/{(\gamma-1)}} + n^\gamma)$ for each $\gamma > 1$ & no $O(\ell^{{2\gamma}/{(\gamma -1)}-\vareps} + n^\gamma)$ for any $\gamma < 3$ \\
		& (\Cref{thm:upper-bound}) & (\Cref{cor:nt-2SPlower,cor:ntlower,prop:tclower}) \\
		\bottomrule
	\end{tabularx}
\end{table}
Moreover, we visualize in \cref{fig:running-time-bounds} the trade-offs in the running times that are (im-)possible.
\begin{figure}
	\centering
	\begin{tikzpicture} 
		\begin{axis}[
				width=0.4\textwidth,
				height=0.3\textwidth,
				title={$O(n^2)$-solvable},
				xlabel={$\gamma$},ylabel={$\beta$},
				ymax=5,xmax=4,
				ymin=1,xmin=1
			]
			\addplot[name path=f,domain=1:4,green!50!black,samples=100] {x / (x-1)};
			\addplot[color=green!50!black,mark=none] coordinates {(2, 1) (2, 2)};

			\path[name path=axis] (axis cs:0,0) -- (axis cs:10,0);
			\path[name path=axis-top] (axis cs:0,10) -- (axis cs:10,10);

			\addplot [thick,fill=green,fill opacity=0.15] fill between[of=f and axis-top,soft clip={domain=1:4}];
			\addplot [thick,fill=green,fill opacity=0.15] fill between[of=f and axis,soft clip={domain=2:4}];
			\addplot [thick,color=red,fill=red,fill opacity=0.15] fill between[of=f and axis,soft clip={domain=1:2}];
		\end{axis}
	\end{tikzpicture}  \hfil
	\begin{tikzpicture} 
		\begin{axis}[
				width=0.4\textwidth,
				height=0.3\textwidth,
				title={$O(n^3)$-solvable},
				xlabel={$\gamma$},ylabel={$\beta$},
				ymax=5,xmax=4,
				ymin=1,xmin=1
			]
			\addplot[name path=f,domain=1:4,green!50!black,samples=100] {2*x / (x-1)};
			\addplot[color=green!50!black,mark=none] coordinates {(3, 1) (3, 3)};

			\path[name path=axis] (axis cs:0,0) -- (axis cs:10,0);
			\path[name path=axis-top] (axis cs:0,10) -- (axis cs:10,10);

			\addplot [thick,fill=green,fill opacity=0.15] fill between[of=f and axis-top,soft clip={domain=1:4}];
			\addplot [thick,fill=green,fill opacity=0.15] fill between[of=f and axis,soft clip={domain=3:4}];
			\addplot [thick,color=red,fill=red,fill opacity=0.15] fill between[of=f and axis,soft clip={domain=1:3}];
		\end{axis}
	\end{tikzpicture}
	\caption{
		Overview on the possible (in green) and unlikely (in red) trade-offs in running times of the form~$O(n^\gamma + \ell^\beta)$.
		Left: First category for~$O(n^2)$-time solvable problems (upper part in \cref{tab:overview}); right: second category for~$O(n^3)$-time solvable problems (lower part in \cref{tab:overview}).
	}
	\label{fig:running-time-bounds}
\end{figure}
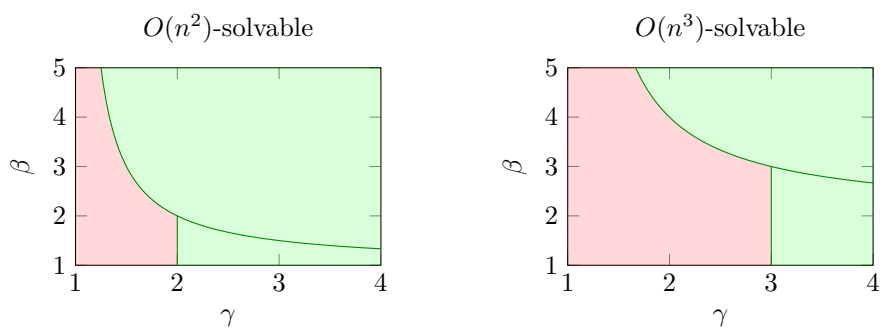

\subparagraph{Framework.} 
We adjust the cross-composition framework to obtain lower bounds for polynomial time solvable problems.
As an example, consider \textsc{Negative-Weight Triangle}, that is, \textsc{Negative-Weight Clique} with~$k$ fixed to three.
Assuming the APSP-hypothesis, \textsc{Negative-Weight Clique} cannot be solved in~$O(n^{3-\vareps})$ time~\cite{DBLP:journals/jacm/WilliamsW18}.
The first difference to the cross-composition framework is that we start with one instance~$G$ of \textsc{Negative-Weight Triangle} which we then decompose into many small instances as follows:
Partition the vertices~$V(G)$ of~$G$ into~$z$ many sets~$V_1, \ldots, V_z$ of size~$n/z$, where~$z$ is chosen depending on the running time we want to exclude (see the proof of \cref{lem:sampling} in \cref{ssec:neg-wei-clique} for the actual formula specifying~$z$). 
Then, we create~$z^3$ instances of~\textsc{Negative-Weight Clique}: for each~$(i,j,k) \in [z]^3$ take the graph~$G[V_i \cup V_j \cup V_k]$.
Clearly, we have that~$G$ contains a negative-weight triangle if and only if at least one of the created instances contains a negative-weight triangle.

Next, we apply the composition as explained above (the disjoint union) for \textsc{Negative-Weight Clique} to obtain an instance~$G'$ with~$n' = z^3 \cdot \nicefrac{n}{z} =  z^2 n$ vertices.
Note that the size~$\ell$ of a largest connected component in~$G'$ is~$3n/z$.
Hence, an algorithm running in time~$O(n^\gamma + \ell^\beta)$ for \textsc{Negative-Weight Triangle} solves~$G'$ in time~$O(z^{2\gamma} n^\gamma + (3n/z)^\beta)$.
By carefully choosing~$z$ as a function in~$n$, $\beta$, and $\gamma$, we get that this is in~$O(n^{3-\vareps})$ for various combinations of~$\gamma$ and~$\beta$.

The property that \textsc{Negative-Weight Triangle} can be decomposed as above is not unique to the problem. 
In fact, this has been observed already:
``Many problems, like SAT, have a simple self-reduction proving that the ``Direct-OR'' version is hard, assuming the problem itself is hard''~\cite{DBLP:journals/talg/AbboudBHS22}.
Our framework formalizes this notion of decomposition (see \cref{sec:framework} for a definition) and adjusts the cross-composition definition.
We furthermore show that commonly used ``hard'' problems such as \textsc{Orthogonal Vectors}, \textsc{3-Sum}, and \textsc{Negative-Weight $k$-Clique} are decomposable.
Thus, it remains to show cross-compositions in order to apply our framework and obtain lower bounds.

\subsection{Preliminaries and Notation.} \label{ssec:prelim} 

\subparagraph*{Problem definitions.} For $\ell \in \N$ we set~$[\ell] := \{1,2,\ldots,\ell\}$.

\decprob{\textsc{Orthogonal Vectors}}
{Two size-$n$ sets~$A, B \subseteq \{0, 1\}^d$ for some~$d \in \N$.}
{Are there~$a \in A $ and $b \in B$ so that $a$ and $b$ are orthogonal, i.e., for each $i \in [d]$, the $i$-th coordinate of~$a$ or the $i$-th coordinate of~$b$ is zero.}

We denote the restriction of \textsc{Orthogonal Vectors} to instances with $d \le O (\log n)$ as \textsc{Orthogonal Vectors}  \emph{with logarithmic dimension}.

\decprob{\textsc{3-Sum}}
{An array~$A$ of $n$ integers.}
{Are there $i, j, h\in [n]$ such that $A[i] + A[j] + A[h] = 0$?}

\decprob{\textsc{Negative-Weight $k$-Clique}}
{An edge-weighted graph~$G$ on~$n$ vertices.}
{Does $G$ contain a $k$-clique of negative weight?}

\decprob{\LCS}
{Two strings~$x^1$ and~$x^2$ of length~$n$ over an alphabet~$\Sigma$ and~$k \in \N$.}
{Decide whether there is a common subsequence of length~$k$ of~$x^1$ and~$x^2$.}

\decprob{\textsc{Longest Common (Weakly) Increasing Subsequence}}
{Two strings~$x^1$ and~$x^2$ of length~$n$ over~$\N$ and~$k \in \N$.}
{Decide whether there is a common subsequence~$y$ of length~$k$ of $x^1$ and $x^2$ with~$y[i] < y[{i+1}]$ ($y[i] \le y[i+1]$) for all~$i\in [k -1]$.}

\decprob{\textsc{Discrete Fr\'echet Distance}}{
Two lists~$P = (p_1, \dots, p_n)$ and $Q = (q_1, \dots, q_n)$ of points in the plane and~$k \in \N$.}{
Is the Fr\'echet distance of $P$ and $Q$ at most~$k$, that is, there are two surjective, non-decreasing functions~$f_P, f_Q : [2n] \rightarrow [n]$ with $f_P (1) = 1 = f_Q (1)$, $f_P (2n) = n = f_Q (2n)$ and $\max_{i \in [2n]}\operatorname{dist} (p_{f_P (i)}, q_{f_Q (i)}) \le k$?}

\decprob{\textsc{Planar Motion Planning}}
{A set of~$n$ non-intersecting, non-touching, axis-parallel line segment obstacles in the plane and a line segment robot (a rod or ladder), a given source, and a given goal.}
{Can the rod be moved (allowing both translation and rotation) from the source to the goal without colliding with the obstacles?}

\decprob{\textsc{2nd Shortest Path}}
{An $n$-vertex edge-weighted directed graph~$G$, vertices $s$ and~$t$, and~$k \in N$.}
{Has the 2nd-shortest $s$-$t$-path length at most~$k$?}

\decprob{\textsc{Triangle Collection}}
{A vertex-colored graph~$G$ on~$n$ vertices.}
{For each combination of three colors, does $G$ contain a triangle whose vertices are colored with three colors?}

\subparagraph{Hypotheses.}
The conditional lower bounds in this work are based on SETH, 3-Sum-, and the APSP-Hypothesis (see \citet{VW18b} for more details).

\begin{hypothesis}[SETH]\label{hyp:SETH}
	For every $\vareps > 0$ there exists a $k \in \N$ such that $k$-SAT cannot be solved in $O(2^{(1-\vareps)n})$ time, where $n$ is the number of variables.
\end{hypothesis}
\begin{hypothesis}[3SUM]\label{hyp:3SUM}
	\textsc{3-Sum} on~$n$ integers in~$\{-n^4,\ldots,n^4\}$ cannot be solved in~$O(n^{2-\vareps})$ time for any $\vareps > 0$.
\end{hypothesis}
\begin{hypothesis}[APSP]\label{hyp:APSP}
	\textsc{All Pairs Shortest Path} on $n$-vertex graphs with polynomial edge weights cannot be solved in~$O(n^{3-\vareps})$ time for any $\vareps > 0$.
\end{hypothesis}

\subparagraph{Parameterized Complexity.}
In many of the above problems, $n$ is \emph{not} the input size but a parameter and the input size is bounded by~$n^{O(1)}$.
A parameterized problem is a set of instances~$(I,p)\in \Sigma^*\times\Sigma^*$, where~$\Sigma$ denotes a finite alphabet, $I$ denotes the classical instance and~$p$ the parameter.
A \emph{kernelization} is a polynomial-time algorithm that maps any instance~$(I,p)$ to an equivalent instance~$(I',p')$ (the \emph{kernel}) such that~$|I'|+p'\leq f(p)$ for some computable function~$f$.
If~$f$ is a polynomial, then~$(I',p')$ is a polynomial-size kernel.

In this work we restrict ourselves to the following:
Either~$p = n$ ($p$ is a single parameter) or~$p = (n, \ell)$ ($p$ is a combined parameter).
Moreover, both $n$ and~$\ell$ are always nonnegative integers, $n$ is related to the input size but~$\ell$ is not ($\ell$ can be seen as ``classical'' parameter).

\section{Framework} \label{sec:framework}

Our framework has the following three steps (see \cref{ssec:our-results} for a high-level description).
\begin{enumerate}
	\item Start with an instance~$(I,n_\mathcal{P})$ of a ``hard'' problem~$\mathcal{P}$ and decompose it into the disjunction of~$t$ instances~$(I_1,n_1), \dots, (I_t,n_t)$ of~$\mathcal{P}$.
	In \cref{sec:decomp}, we provide such decompositions for the frequently used hard problems \textsc{3-Sum}, \textsc{Orthogonal Vectors}, and \textsc{Negative Weight $k$-Clique}.

	\item Compose $(I_1,n_1), \dots, (I_t,n_t)$ into one instance~$(J, (n, \ell))$ of the ``target'' problem using an OR-cross-composition.
	This step has to be done for the application at hand.
 
	\item Apply the assumed $\widetilde{O} (n^{\kernelcomputationexponent} + \parameter^\kernelsizeconstant)$-time algorithm to~$J$.
	If the combination of $\kernelcomputationexponent$ and $\kernelsizeconstant$ is small enough, then the resulting algorithm will be faster than the lower bound for~$\mathcal{P}$.
\end{enumerate}

To give a more formal description of our framework, we first define decompositions and cross-compositions.
Note that all mentioned problems are parameterized problems.

\begin{definition}[OR-decomposition]
	For~$\alpha > 1$ an \emph{$\alpha$-OR-decomposition} for a problem~$\mathcal{P}$ is an algorithm that, given~$\numinstances > 0$ and an instance~$(I,n)$ of~$\mathcal{P}$, computes for some $\alpha' < \alpha$ in $\widetilde{O}(n^{\alpha'})$~time $t \in \widetilde{O} (n^{\alpha \numinstances / (\alpha + \numinstances)})$ many instances~$(I_1,n_1), \ldots, (I_t, n_t)$ of~$\mathcal{P}$ such that 
	\begin{itemize}
		\item $(I,n) \in \mathcal{P}$ if and only if $(I_i,n_i) \in \mathcal{P} $ for some~$i \in [t]$, and
		\item $n_i \in \widetilde{O} (n^{\alpha / (\alpha + \numinstances)})$ for all $i \in [t]$.
	\end{itemize}
\end{definition}

We say a problem~$\mathcal{P}$ is \emph{$\alpha$-OR-decomposable} if there exists an $\alpha$-OR-decomposition for it.
For some problems it is easier to show OR-decomposability than others.
Thus, using appropriate reductions to transfer OR-decomposability can be desirable (we do so in \cref{sec:decomp} when showing that \textsc{3-Sum} is $2$-OR-decomposable).
Quasi-linear time reductions that do not increase the parameter to much are one option.
To this end, we say a reduction that given an instance~$(I^\mathcal{P},n^\mathcal{P})$ of~$\mathcal{P}$ produces an instance~$(I^\mathcal{Q},n^\mathcal{Q})$ of~$\mathcal{Q}$ is \emph{parameter-preserving} if~$n^\mathcal{Q} \in \widetilde{O} (n^\mathcal{P})$. 

\begin{proposition}
	\label{prop:decompose}
	Let $\mathcal{P}$ and $\mathcal{Q}$ be two problems such that there are quasi-linear-time parameter-preserving reductions from~$\mathcal{P}$ to~$\mathcal{Q}$ and from~$\mathcal{Q}$ to $\mathcal{P}$.
	Then for any $\alpha > 1$, $\mathcal{P}$ is $\alpha$-OR-decomposable if and only if $\mathcal{Q}$ is.
\end{proposition}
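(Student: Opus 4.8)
The plan is to prove both directions symmetrically, so it suffices to show one implication: if $\mathcal{Q}$ is $\alpha$-OR-decomposable, then so is $\mathcal{P}$. The idea is to sandwich the decomposition for $\mathcal{Q}$ between the two reductions. Given $\numinstances > 0$ and an instance $(I^\mathcal{P}, n^\mathcal{P})$ of $\mathcal{P}$, first apply the quasi-linear-time parameter-preserving reduction $\mathcal{P} \to \mathcal{Q}$ to obtain an equivalent instance $(I^\mathcal{Q}, n^\mathcal{Q})$ with $n^\mathcal{Q} \in \widetilde{O}(n^\mathcal{P})$. Then invoke the $\alpha$-OR-decomposition of $\mathcal{Q}$ on $(I^\mathcal{Q}, n^\mathcal{Q})$ with the same value $\numinstances$, producing $t \in \widetilde{O}((n^\mathcal{Q})^{\alpha\numinstances/(\alpha+\numinstances)})$ instances $(I_1, n_1), \dots, (I_t, n_t)$ of $\mathcal{Q}$ with $n_i \in \widetilde{O}((n^\mathcal{Q})^{\alpha/(\alpha+\numinstances)})$. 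Finally, apply the quasi-linear-time parameter-preserving reduction $\mathcal{Q} \to \mathcal{P}$ to each $(I_i, n_i)$, obtaining instances $(I'_i, n'_i)$ of $\mathcal{P}$ with $n'_i \in \widetilde{O}(n_i)$.

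The correctness of the disjunction is immediate from transitivity of equivalence: $(I^\mathcal{P}, n^\mathcal{P}) \in \mathcal{P}$ iff $(I^\mathcal{Q}, n^\mathcal{Q}) \in \mathcal{Q}$ iff some $(I_i, n_i) \in \mathcal{Q}$ iff some $(I'_i, n'_i) \in \mathcal{P}$. For the parameter bound, chaining the $\widetilde{O}$-estimates gives $n'_i \in \widetilde{O}(n_i) \subseteq \widetilde{O}((n^\mathcal{Q})^{\alpha/(\alpha+\numinstances)}) \subseteq \widetilde{O}((n^\mathcal{P})^{\alpha/(\alpha+\numinstances)})$, using that $\alpha/(\alpha+\numinstances) < 1$ so raising to this power preserves $\widetilde{O}$-domination after substituting $n^\mathcal{Q} \in \widetilde{O}(n^\mathcal{P})$. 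Similarly $t \in \widetilde{O}((n^\mathcal{Q})^{\alpha\numinstances/(\alpha+\numinstances)}) \subseteq \widetilde{O}((n^\mathcal{P})^{\alpha\numinstances/(\alpha+\numinstances)})$, as required.

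For the running time: the first reduction costs $\widetilde{O}(n^\mathcal{P})$ time, which also bounds $|I^\mathcal{Q}|$ by $\widetilde O((n^\mathcal P)^{O(1)})$; the decomposition of $\mathcal{Q}$ runs in $\widetilde{O}((n^\mathcal{Q})^{\alpha'})$ time for some $\alpha' < \alpha$; and the $t$ applications of the second reduction each run in quasi-linear time in their input size, contributing $\widetilde O(\sum_i |I_i|)$. The one point needing a small argument is that the total time stays in $\widetilde{O}((n^\mathcal{P})^{\alpha''})$ for some $\alpha'' < \alpha$: this follows because the decomposition produces its output (hence $\sum_i |I_i|$) within its own $\widetilde{O}((n^\mathcal{Q})^{\alpha'})$ time budget, and $n^\mathcal{Q} \in \widetilde O(n^\mathcal P)$, so every term is $\widetilde{O}((n^\mathcal{P})^{\alpha'})$ with the same $\alpha' < \alpha$ (absorbing polylogarithmic blow-ups from the quasi-linear reductions into the $\widetilde{O}$). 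Taking $\alpha'' = \alpha'$ works.

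I expect the only genuine obstacle to be bookkeeping the relationship between the parameter $n$ and the true instance size across the reductions — the definitions treat $n$ as a parameter that merely controls the input size polynomially, so one must be slightly careful that "quasi-linear in the input size" and "$\widetilde O(n^{\alpha'})$" compose correctly rather than clash. Since both reductions are parameter-preserving and run in quasi-linear time, all size and time quantities telescope cleanly through the single chain $n^\mathcal{P} \rightsquigarrow n^\mathcal{Q} \rightsquigarrow n_i \rightsquigarrow n'_i$, and no exponent other than the harmless polylog factors changes. The symmetric direction ($\mathcal{P}$-decomposable $\Rightarrow$ $\mathcal{Q}$-decomposable) is obtained by swapping the roles of $\mathcal{P}$ and $\mathcal{Q}$, using the reductions in the opposite order, which completes the proof.
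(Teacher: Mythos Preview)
Your proposal is correct and follows essentially the same ``sandwich'' approach as the paper: reduce to the other problem, apply its $\alpha$-OR-decomposition, then reduce each piece back. The paper's proof is in fact terser than yours (it simply asserts ``this clearly is an $\alpha$-OR-decomposition''), so your explicit verification of the parameter, instance-count, and running-time bounds only adds rigor without departing from the intended argument.
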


{
\begin{proof}
	Assume that $\mathcal{P}$ is $\alpha$-OR-decomposable (the case that $\mathcal{Q}$ is $\alpha$-OR-decomposable is symmetric).
	We now give an $\alpha$-OR-decomposition for~$\mathcal{Q}$.
	Given an instance~$(I^{\mathcal Q},n^{\mathcal{Q}})$ and $\lambda > 0$, we first reduce~$(I^{\mathcal{Q}},n^{\mathcal{Q}})$ to an instance~$(I^{\mathcal{P}},n^{\mathcal{P}})$.
	Afterwards, we apply the $\alpha$-OR-decomposition from~$\mathcal{P}$, resulting in instances~$(I_1^{\mathcal{P}},n_1^{\mathcal{P}}), \dots, (I_t^{\mathcal{P}},n_t^{\mathcal{P}})$ of~$\mathcal{P}$.
	Finally, we reduce each instance~$(I_i^{\mathcal{P}},n_i^{\mathcal{P}})$ to an instance~$(I_i^{\mathcal{P}},n_i^{\mathcal{P}})$.
	This clearly is an $\alpha$-OR-decomposition for~$\mathcal{Q}$.
\end{proof}
}

For the second step of our framework, we introduce fine-grained OR-cross-compositions:

\begin{definition}[fine-grained OR-cross-composition]
	For~$\comptimeexponent \ge 1,\compparameter \ge 0$ an \emph{$(\comptimeexponent,\compparameter)$-OR-cross-composition} from a problem~$\mathcal{P}$ to a problem~$\mathcal Q$ is an algorithm~$\mathcal{A}$ which 
	takes as an input~$t$~instances~$(I_1^{\mathcal{P}},n_1^{\mathcal{P}}), \dots, (I_t^{\mathcal{P}},n_t^{\mathcal{P}})$ of $\mathcal{P}$,
	runs in~$\widetilde{O}( t \cdot n_{\max}^\comptimeexponent + \sum_{i=1}^t |I_i^{\mathcal{P}}|)$ time with~$n_{\max} \coloneqq \max_{i \in [t]} n_i^{\mathcal{P}}$, and computes an instance~$(I^{\mathcal{Q}}, (n^{\mathcal{Q}},\parameter^{\mathcal{Q}}))$ of $\mathcal{Q}$ such that
	\begin{enumerate}
		\item $(I^{\mathcal{Q}}, (n^{\mathcal{Q}},\parameter^{\mathcal{Q}})) \in \mathcal{Q}$ if and only if $(I_i^{\mathcal{P}},n_i^{\mathcal{P}}) \in \mathcal{P} $ for some~$i \in [t]$, and
		\item $n^{\mathcal{Q}} \in \widetilde{O}( t \cdot n_{\max}^\comptimeexponent)$ and $\parameter^{\mathcal{Q}} \in \widetilde{O} (n_{\max}^\compparameter)$.
	\end{enumerate}
\end{definition}

We say a problem~$\mathcal{P}$ \emph{$(\comptimeexponent,\compparameter)$-OR-cross-composes} into a problem~$\mathcal{Q}$ if there exists an $(\comptimeexponent,\compparameter)$-OR-cross-composition from~$\mathcal{P}$ to~$\mathcal{Q}$.

\begin{theorem} \label{thm:meta-composition}
	Let~$\alpha > \comptimeexponent \ge 1$, $\kernelcomputationexponent > 1$, and~$\compparameter > 0$ with $\alpha > \comptimeexponent \cdot \kernelcomputationexponent$.
	Let~$\mathcal{P}$ be an $\alpha$-OR-decomposable problem with parameter~$n_\mathcal{P}$ that $(\comptimeexponent,\compparameter)$-OR-cross-composes into a problem~$\mathcal{Q}$ with parameters~$n_\mathcal{Q}$ and~$\parameter_\mathcal{Q}$.
	If there is an~$\widetilde{O} (n_\mathcal{Q}^\kernelcomputationexponent + \parameter_\mathcal{Q}^\kernelsizeconstant)$-time algorithm for~$\mathcal{Q}$ and 
	\[0 < \kernelsizeconstant < \frac{\kernelcomputationexponent \cdot (\optalgexponent - \comptimeexponent)}{(\kernelcomputationexponent - 1) \cdot \compparameter},\]
	then~$\mathcal{P}$ can be solved in~$O(n_\mathcal{P}^{\alpha - \vareps})$ time for some~$\vareps > 0$ time.
\end{theorem}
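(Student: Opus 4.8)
The plan is to chain together the three ingredients in order: first decompose the given hard instance of $\mathcal{P}$ into many small instances, then cross-compose these into a single instance of $\mathcal{Q}$, and finally run the hypothetical fast algorithm for $\mathcal{Q}$ on it, choosing the free parameter $\lambda$ in the decomposition so that the total running time beats $n_\mathcal{P}^\alpha$. Concretely, given an instance $(I,n_\mathcal{P})$ of $\mathcal{P}$ and a value $\lambda > 0$ to be fixed later, apply the $\alpha$-OR-decomposition to get $t \in \widetilde{O}(n_\mathcal{P}^{\alpha\lambda/(\alpha+\lambda)})$ instances, each with parameter $n_i \in \widetilde{O}(n_\mathcal{P}^{\alpha/(\alpha+\lambda)})$, in time $\widetilde{O}(n_\mathcal{P}^{\alpha'})$ for some $\alpha' < \alpha$. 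Then apply the $(\comptimeexponent,\compparameter)$-OR-cross-composition, obtaining an instance $(I^\mathcal{Q},(n_\mathcal{Q},\parameter_\mathcal{Q}))$ with
\[
n_\mathcal{Q} \in \widetilde{O}(t \cdot n_{\max}^\comptimeexponent) \subseteq \widetilde{O}\bigl(n_\mathcal{P}^{\alpha\lambda/(\alpha+\lambda)} \cdot n_\mathcal{P}^{\comptimeexponent\alpha/(\alpha+\lambda)}\bigr) = \widetilde{O}\bigl(n_\mathcal{P}^{\alpha(\lambda+\comptimeexponent)/(\alpha+\lambda)}\bigr)
\]
and $\parameter_\mathcal{Q} \in \widetilde{O}(n_{\max}^\compparameter) \subseteq \widetilde{O}(n_\mathcal{P}^{\compparameter\alpha/(\alpha+\lambda)})$, within time dominated by $\widetilde{O}(t \cdot n_{\max}^\comptimeexponent + \sum_i |I_i|)$, which is bounded by $\widetilde{O}(n_\mathcal{Q})$ plus the decomposition cost, hence sub-$n_\mathcal{P}^\alpha$ as long as the exponent of $n_\mathcal{Q}$ is.

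Next, feed $(I^\mathcal{Q},(n_\mathcal{Q},\parameter_\mathcal{Q}))$ into the assumed $\widetilde{O}(n_\mathcal{Q}^\kernelcomputationexponent + \parameter_\mathcal{Q}^\kernelsizeconstant)$-time algorithm; by the equivalence in the cross-composition this decides $(I,n_\mathcal{P})$. The resulting running time for $\mathcal{P}$ is, up to polylogarithmic factors and the additive $n_\mathcal{P}^{\alpha'}$ term,
\[
n_\mathcal{P}^{\alpha(\lambda+\comptimeexponent)\kernelcomputationexponent/(\alpha+\lambda)} + n_\mathcal{P}^{\alpha\compparameter\kernelsizeconstant/(\alpha+\lambda)}.
\]
It remains to show that for a suitable choice of $\lambda > 0$ both exponents are strictly less than $\alpha$, which then yields the claimed $O(n_\mathcal{P}^{\alpha-\vareps})$ bound (absorbing polylog factors into $\vareps$, and noting $\alpha' < \alpha$). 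The first exponent is below $\alpha$ iff $(\lambda+\comptimeexponent)\kernelcomputationexponent < \alpha+\lambda$, i.e. $\lambda(\kernelcomputationexponent-1) < \alpha - \comptimeexponent\kernelcomputationexponent$; since $\alpha > \comptimeexponent\kernelcomputationexponent$ and $\kernelcomputationexponent > 1$, this holds for all sufficiently small $\lambda > 0$. The second exponent is below $\alpha$ iff $\compparameter\kernelsizeconstant < \alpha + \lambda$; in the limit $\lambda \to 0^+$ this becomes $\compparameter\kernelsizeconstant < \alpha$, but the hypothesis gives the stronger bound $\kernelsizeconstant < \kernelcomputationexponent(\alpha-\comptimeexponent)/((\kernelcomputationexponent-1)\compparameter)$, and one checks $\kernelcomputationexponent(\alpha-\comptimeexponent)/(\kernelcomputationexponent-1) \le \alpha$ precisely when $\alpha \ge \comptimeexponent\kernelcomputationexponent$ — wait, that inequality goes the wrong way, so I should instead argue directly: pick $\lambda$ small enough that simultaneously $\lambda(\kernelcomputationexponent-1) < \alpha-\comptimeexponent\kernelcomputationexponent$ (for the first exponent) and $\compparameter\kernelsizeconstant < \alpha + \lambda$ — but the latter need only hold for *some* valid $\lambda$, and since we may also take $\lambda$ as large as we like subject to the first constraint, the binding case is matching the decomposition's freedom against the algebraic bound on $\kernelsizeconstant$.

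The delicate point — and the step I expect to be the main obstacle — is exactly this bookkeeping of the two competing constraints on $\lambda$: the first exponent forces $\lambda$ small, while making the second exponent small wants $\lambda$ large, and one must verify that the admissible window is nonempty precisely under the stated bound $\kernelsizeconstant < \kernelcomputationexponent(\alpha-\comptimeexponent)/((\kernelcomputationexponent-1)\compparameter)$. The cleanest way is to set $\lambda$ to (just below) its maximal allowed value $\lambda^* = (\alpha - \comptimeexponent\kernelcomputationexponent)/(\kernelcomputationexponent-1)$, so that the first exponent approaches $\alpha$ from below, and then check that at $\lambda = \lambda^*$ one has $\alpha + \lambda^* = \alpha + (\alpha-\comptimeexponent\kernelcomputationexponent)/(\kernelcomputationexponent-1) = (\alpha\kernelcomputationexponent - \comptimeexponent\kernelcomputationexponent)/(\kernelcomputationexponent-1) = \kernelcomputationexponent(\alpha-\comptimeexponent)/(\kernelcomputationexponent-1)$, so the second-exponent condition $\compparameter\kernelsizeconstant < \alpha+\lambda^*$ reads exactly $\kernelsizeconstant < \kernelcomputationexponent(\alpha-\comptimeexponent)/((\kernelcomputationexponent-1)\compparameter)$, which is the hypothesis. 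Thus choosing $\lambda$ slightly below $\lambda^*$ makes both exponents strictly below $\alpha$ by a uniform constant margin, and the remaining work is purely to check that the additive decomposition and cross-composition overheads ($n_\mathcal{P}^{\alpha'}$ with $\alpha' < \alpha$, and the $\sum_i |I_i|$ term, which is at most roughly $t \cdot \mathrm{poly}(n_{\max})$ and hence also sub-$n_\mathcal{P}^\alpha$ for small $\lambda$) do not dominate, after which setting $\vareps$ to the minimum of the three resulting gaps finishes the proof.
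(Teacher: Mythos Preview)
Your approach is correct and follows the paper's three-step pipeline (decompose, cross-compose, run the assumed algorithm, then choose $\lambda$ so both resulting exponents fall below $\alpha$). The paper instead sets $\lambda = \beta\mu/\gamma - \nu$ to equalize the two exponents and then needs an extra argument (increasing $\beta$) to ensure $\lambda > 0$; your choice of $\lambda$ just below $\lambda^* = (\alpha-\nu\gamma)/(\gamma-1)$ is automatically positive and avoids that case distinction, which is a mild simplification. One small fix: your bound on $\sum_i |I_i|$ as ``$t \cdot \mathrm{poly}(n_{\max})$'' is not justified by the definitions (only $n_i$ is bounded, not $|I_i|$); the correct argument---also what the paper uses---is that the decomposition runs in $\widetilde{O}(n_{\mathcal P}^{\alpha'})$ time, so its total output, and hence $\sum_i |I_i|$, is at most $\widetilde{O}(n_{\mathcal P}^{\alpha'})$.
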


\begin{proof}
	Let~$(I_{\mathcal{P}},n_{\mathcal{P}})$ be an instance of~$\mathcal{P}$.
	Our algorithm to solve~$(I_{\mathcal{P}},n_{\mathcal{P}})$ runs in the following steps:
	\begin{enumerate}
		\item Apply the~$\alpha$-OR-decomposition (with~$\numinstances$ specified below) to obtain the instances~$(I_1,n_1), \allowbreak\ldots, \allowbreak(I_t,n_t)$ with~$\max_{i \in [t]} n_i \le q \coloneqq n_{\mathcal{P}}^{\alpha / (\alpha + \numinstances)}$ and $t \coloneqq n_{\mathcal{P}}^{\alpha \numinstances / (\alpha + \numinstances)} = q^\numinstances$.
		Note that, by definition, this step runs in $ \widetilde O (n_{\mathcal P}^{\alpha'})$ time for some~${\alpha' < \alpha}$.

		\item Apply the $(\comptimeexponent,\compparameter)$-OR-cross-composition to compute the instance~$(I_{\mathcal{Q}}, (n_{\mathcal{Q}},\ell_{\mathcal{Q}}))$ for~$\mathcal{Q}$ from~$(I_1,n_1), \ldots, (I_t, n_t)$. 
			Note that the running time and~$n_\mathcal{Q}$ is in~$\widetilde{O} (t \cdot q^\comptimeexponent + \sum_{i=1}^t |I_i^{\mathcal{P}}|) = \widetilde{O} (q^{\numinstances + \comptimeexponent} + n_{\mathcal P}^{\alpha'})$.
			Moreover, $\ell_{\mathcal{Q}} \in \widetilde{O} (q^\compparameter)$.
		\item Apply the algorithm with running time~$\widetilde{O} (n_\mathcal{Q}^\kernelcomputationexponent + \parameter_\mathcal{Q}^\kernelsizeconstant)$.
			This requires~$\widetilde{O}(q^{(\numinstances + \comptimeexponent)\kernelcomputationexponent} + q^{\compparameter \cdot \kernelsizeconstant})$ time.
	\end{enumerate}
	It remains to show that all three steps run in~$O(n_\mathcal{P}^{\alpha - \vareps})$ time for some~$\vareps > 0$.
	To this end, we now specify~$\numinstances = \kernelsizeconstant \compparameter / \kernelcomputationexponent - \comptimeexponent$.
	Note that $\numinstances + \comptimeexponent = \kernelsizeconstant \compparameter / \kernelcomputationexponent < \compparameter \cdot \kernelsizeconstant $ and thus it suffices to show that the third step runs in $\widetilde{O} (n_{\mathcal{P}}^{\alpha - \epsilon}) $ for some $\epsilon > 0$.
	The last step runs in~$\widetilde{O}(q^{\compparameter \cdot \kernelsizeconstant}) = \widetilde{O}(n_\mathcal{P}^{\alpha \compparameter \cdot \kernelsizeconstant / (\alpha + \numinstances)})$ time.
	The exponent is 
	\begin{align*}
		\frac{\alpha \compparameter \kernelsizeconstant} {\alpha + \numinstances} 
			& = \frac{\alpha \compparameter \kernelsizeconstant} {\alpha + \kernelsizeconstant \compparameter / \kernelcomputationexponent - \comptimeexponent} 
			  = \frac{\alpha \kernelsizeconstant \kernelcomputationexponent\compparameter} {\kernelcomputationexponent (\alpha - \comptimeexponent) + \kernelsizeconstant \compparameter}
			  = \frac{\alpha \kernelcomputationexponent \compparameter } {\kernelcomputationexponent (\alpha - \comptimeexponent) / \kernelsizeconstant + \compparameter}\,.
	\end{align*}
	By assumption we have~$\kernelsizeconstant < \frac{\kernelcomputationexponent \cdot (\optalgexponent - \comptimeexponent)}{(\kernelcomputationexponent - 1) \cdot \compparameter}$ and thus the exponent is
	\begin{align*}
		\frac{\alpha \kernelcomputationexponent \compparameter } {\kernelcomputationexponent (\alpha - \comptimeexponent) / \kernelsizeconstant + \compparameter}
		< \frac{\alpha \kernelcomputationexponent\compparameter } {\kernelcomputationexponent (\alpha - \comptimeexponent) / \left( \frac{\kernelcomputationexponent \cdot (\optalgexponent - \comptimeexponent) }{(\kernelcomputationexponent - 1) \cdot \compparameter}\right) + \compparameter}
		= \frac{\alpha \kernelcomputationexponent\compparameter } {(\kernelcomputationexponent - 1) \cdot \compparameter + \compparameter}
		= \alpha.
	\end{align*}
	
	There is still one thing left to do:
	We must ensure that $\lambda > 0$.
	This will not always be the case as when $\kernelsizeconstant \rightarrow 0$, $\lambda$ gets negative.
	However, an~$\widetilde{O} (n_\mathcal{Q}^\kernelcomputationexponent + \parameter_\mathcal{Q}^\kernelsizeconstant)$-time algorithm also implies for any~$\kernelsizeconstant' > \kernelsizeconstant$ an~$\widetilde{O} (n_\mathcal{Q}^\kernelcomputationexponent + \parameter_\mathcal{Q}^{\kernelsizeconstant'})$-time algorithm.
	Thus, we can simply pick some larger~$\kernelsizeconstant'$ such that the corresponding $\lambda'$ is larger than 0.
	To do so, let $\kernelsizeconstant_{\max} \coloneqq \frac{\kernelcomputationexponent \cdot (\optalgexponent - \comptimeexponent)}{(\kernelcomputationexponent - 1) \cdot \compparameter}$ the upper bound for $\kernelsizeconstant$.
	Note that~$\alpha > \comptimeexponent \cdot \kernelcomputationexponent$ implies that
	\begin{align*}
      \lambda_{\max} \coloneqq \frac{\kernelsizeconstant_{\max} \compparameter}{ \kernelcomputationexponent} - \comptimeexponent & = \frac{\frac{\kernelcomputationexponent \cdot (\optalgexponent - \comptimeexponent)}{(\kernelcomputationexponent - 1) \cdot \compparameter}\compparameter}{ \kernelcomputationexponent} - \comptimeexponent
      = \frac{\optalgexponent - \comptimeexponent}{\kernelcomputationexponent - 1} - \comptimeexponent
      = \frac{\optalgexponent - \comptimeexponent - \comptimeexponent \cdot (\kernelcomputationexponent - 1)}{\kernelcomputationexponent - 1} = \frac{\optalgexponent - \comptimeexponent \cdot \kernelcomputationexponent}{\kernelcomputationexponent - 1} > 0
	\end{align*}
	Thus, we can pick~$\kernelsizeconstant < \kernelsizeconstant' < \kernelsizeconstant_{\max} $ such that $\lambda' := \frac{\kernelsizeconstant' \compparameter}{\kernelcomputationexponent} - \comptimeexponent> 0$.
\end{proof}

Note that if for $\mathcal{P}$ there is a (conditional) running time lower bound of~$\Omega(n^\alpha)$, then \Cref{thm:meta-composition} excludes (conditionally) running times of the form $\widetilde{O}(n_\mathcal{P} + \parameter^\kernelsizeconstant)$ for any $\kernelsizeconstant \in \mathbb{R}$ as $\lim_{\gamma \rightarrow 1} \frac{\kernelcomputationexponent \cdot (\optalgexponent - \comptimeexponent)}{(\kernelcomputationexponent - 1) \cdot \compparameter} = \infty$.
This running time also excludes linear-time computable polynomial-size kernels.
More precisely, we get the following.

\newcommand{\Qoptalgexponent}{\xi}
\begin{corollary}\label{cor:meta-kernel}
	Let~$\alpha > \comptimeexponent \ge 1$, $\kernelcomputationexponent > 1$, and~$\compparameter > 0$ with $\alpha > \comptimeexponent \cdot \kernelcomputationexponent$.
	Let~$\mathcal{P}$ be an $\alpha$-OR-decomposable problem with parameter~$n_\mathcal{P}$ that $(\comptimeexponent,\compparameter)$-OR-cross-composes into a problem~$\mathcal{Q}$ with parameters~$n_\mathcal{Q}$ and~$\parameter_\mathcal{Q}$.
	Assume that there is an~$\widetilde{O}( n_\mathcal{Q}^{\Qoptalgexponent})$~algorithm for deciding~$\mathcal{Q}$ and that~$n_\mathcal{Q}$ is upper bounded by the input size.
	If there exists an $\widetilde{O}(\parameter_\mathcal{Q}^\kernelsizeconstant)$-size $\widetilde{O} (n_\mathcal{Q}^{\kernelcomputationexponent})$-time kernel for~$\mathcal{Q}$ for some $\kernelcomputationexponent > 1$, $\kernelsizeconstant \in \mathbb{R}$, and
	\[0 < \kernelsizeconstant < \frac{\kernelcomputationexponent \cdot (\optalgexponent - \comptimeexponent)}{(\kernelcomputationexponent - 1) \cdot \compparameter \cdot \Qoptalgexponent},\]
	then~$\mathcal{P}$ can be solved in~$O(n_\mathcal{P}^{\alpha - \vareps})$ time for some~$\vareps > 0$.
\end{corollary}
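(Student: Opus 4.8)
The plan is to reduce \Cref{cor:meta-kernel} to \Cref{thm:meta-composition} by converting the assumed kernel together with the decision algorithm for~$\mathcal{Q}$ into a single algorithm of exactly the shape that \Cref{thm:meta-composition} consumes. The key observation is the following: running the $\widetilde{O}(n_\mathcal{Q}^{\kernelcomputationexponent})$-time kernelization first produces, in $\widetilde{O}(n_\mathcal{Q}^{\kernelcomputationexponent})$ time, an equivalent instance of~$\mathcal{Q}$ whose total encoding length is $\widetilde{O}(\parameter_\mathcal{Q}^{\kernelsizeconstant})$; since by hypothesis the ``$n$''-parameter of any $\mathcal{Q}$-instance is upper bounded by its input size, this kernelized instance has ``$n$''-parameter $n' \in \widetilde{O}(\parameter_\mathcal{Q}^{\kernelsizeconstant})$. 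Feeding it to the $\widetilde{O}(n_\mathcal{Q}^{\Qoptalgexponent})$-time decision algorithm then costs $\widetilde{O}((n')^{\Qoptalgexponent}) = \widetilde{O}(\parameter_\mathcal{Q}^{\kernelsizeconstant \cdot \Qoptalgexponent})$ time. Composing the two phases yields an $\widetilde{O}(n_\mathcal{Q}^{\kernelcomputationexponent} + \parameter_\mathcal{Q}^{\kernelsizeconstant \cdot \Qoptalgexponent})$-time algorithm for~$\mathcal{Q}$, i.e.\ an algorithm of the form required by \Cref{thm:meta-composition} with size exponent $\kernelsizeconstant' \coloneqq \kernelsizeconstant \cdot \Qoptalgexponent$ in place of~$\kernelsizeconstant$.

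With this in hand I would invoke \Cref{thm:meta-composition} verbatim. All its remaining hypotheses carry over unchanged: $\alpha > \comptimeexponent \ge 1$, $\kernelcomputationexponent > 1$, $\compparameter > 0$, and $\alpha > \comptimeexponent \cdot \kernelcomputationexponent$ hold by assumption; $\mathcal{P}$ is $\alpha$-OR-decomposable and $(\comptimeexponent,\compparameter)$-OR-cross-composes into~$\mathcal{Q}$; and multiplying the inequality $0 < \kernelsizeconstant < \frac{\kernelcomputationexponent \cdot (\alpha - \comptimeexponent)}{(\kernelcomputationexponent - 1) \cdot \compparameter \cdot \Qoptalgexponent}$ through by $\Qoptalgexponent > 0$ gives precisely $0 < \kernelsizeconstant' < \frac{\kernelcomputationexponent \cdot (\alpha - \comptimeexponent)}{(\kernelcomputationexponent - 1) \cdot \compparameter}$, which is the bound demanded there. \Cref{thm:meta-composition} then directly outputs an $O(n_\mathcal{P}^{\alpha - \vareps})$-time algorithm for~$\mathcal{P}$ for some~$\vareps > 0$, which is the claim.

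Since the argument is essentially substitution into the previously proved theorem, I do not expect a genuine obstacle. The one spot deserving care is the bookkeeping in the first paragraph: one must be sure that it is the ``$n$''-part of the kernelized instance's parameter pair (the one the decision algorithm's running time depends on), and not merely the ``$\parameter$''-part, that is controlled by the kernel's size bound --- which is exactly why the hypothesis that $n_\mathcal{Q}$ is bounded by the input size is included. A second, purely routine point is that if $\Qoptalgexponent$ is so small that the auxiliary exponent $\lambda$ arising inside \Cref{thm:meta-composition} would be non-positive, that theorem already absorbs this by enlarging the size exponent, so nothing extra is needed here.
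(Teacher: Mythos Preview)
Your proposal is correct and follows essentially the same route as the paper: combine the kernel with the $\widetilde{O}(n_\mathcal{Q}^{\Qoptalgexponent})$-time decision algorithm to obtain an $\widetilde{O}(n_\mathcal{Q}^{\kernelcomputationexponent} + \parameter_\mathcal{Q}^{\kernelsizeconstant\Qoptalgexponent})$-time algorithm for~$\mathcal{Q}$, and then invoke \Cref{thm:meta-composition} with the exponent $\kernelsizeconstant\Qoptalgexponent$ in place of~$\kernelsizeconstant$. Your explicit justification of why the hypothesis ``$n_\mathcal{Q}$ is bounded by the input size'' is needed is a nice addition that the paper leaves implicit.
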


\begin{proof}
	An $\widetilde{O}(\parameter_\mathcal{Q}^\kernelsizeconstant)$-size $\widetilde{O} (n_\mathcal{Q}^{\kernelcomputationexponent})$-time kernel together with an $\widetilde{O}( n_\mathcal{Q}^{\Qoptalgexponent})$-time algorithm solving $\mathcal{Q}$ yields an $\widetilde{O}(n_\mathcal{Q}^{\kernelcomputationexponent} + \parameter_\mathcal{Q}^{\kernelsizeconstant \Qoptalgexponent})$ algorithm for~$\mathcal{Q}$.
	The corollary now follows from \Cref{thm:meta-composition}.
\end{proof}

Our general approach to apply our framework is follows. 
Start with a problem~$\mathcal{P}$ that (under some hypothesis) cannot be solved in $O(n^{\alpha-\vareps})$ time for~$\vareps > 0$.
Then construct an $\alpha$-decomposition for~$\mathcal{P}$ followed by a (1,1)-OR-cross-composition into the target problem.

\section{OR-decomposable problems}
\label{sec:decomp}

In order to apply our framework, we first need some OR-decomposable problems.
We will observe that three fundamental problems from fine-grained complexity, namely \textsc{Orthogonal Vectors}, \textsc{3-SUM} and \textsc{Negative-Weight $k$-Clique}, are OR-decomposable.
These problems are also our source for running time lower bounds: 
Note that the former two problems cannot be solved in~$O(n^{2-\vareps})$ time unless the SETH respectively 3-Sum-hypothesis fail~\cite{VW18b}.
We moreover use that \textsc{Negative-Weight $3$-Clique} ($={}$\nt{}) cannot be solved in~$O(n^{3-\vareps})$ time unless APSP-hypothesis fails \cite{DBLP:journals/jacm/WilliamsW18}.

We will use that ``Many problems, like SAT, have a simple self-reduction proving that the “Direct-OR” version is hard, assuming the problem itself is hard''~\cite{DBLP:journals/talg/AbboudBHS22}.
This self-reduction is based on partitioning the instance into many small ones, with at least one of them containing the small desired structure (i.e., a pair of orthogonal vectors, three numbers summing to 0, or a negative triangle).

\subparagraph*{Orthogonal Vectors.}
We now show that \textsc{Orthogonal Vectors} is 2-OR-decomposable.

\begin{lemma}\label{lem:OV}
  \textsc{Orthogonal Vectors} parameterized by the number of vectors is 2-OR-decomposable.
\end{lemma}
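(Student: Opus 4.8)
The plan is to give a direct partitioning argument in the spirit of the quoted ``Direct-OR'' self-reduction. Given an instance $(A,B)$ of \textsc{Orthogonal Vectors} with $|A| = |B| = n$ and a target number of instances governed by $\numinstances > 0$, I would set $q := n^{2/(2+\numinstances)}$ and partition $A$ into $\lceil n/q \rceil$ blocks $A_1, \dots, A_{\lceil n/q\rceil}$ each of size at most $q$, and similarly partition $B$ into blocks $B_1, \dots, B_{\lceil n/q\rceil}$ of size at most $q$. For every pair $(i,j)$ I would output the instance $(A_i, B_j)$ (padding both sides up to size exactly $q$ with all-ones vectors, which can never participate in an orthogonal pair and hence do not change the answer). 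Each output instance has at most $q$ vectors per side, so its parameter is $O(q) = O(n^{2/(2+\numinstances)}) = O(n^{\alpha/(\alpha+\numinstances)})$ for $\alpha = 2$, as required.

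The correctness equivalence is immediate: a pair $a \in A$, $b \in B$ with $a \perp b$ lies in a unique block pair $(A_i, B_j)$, so $(A,B)$ is a yes-instance iff some $(A_i,B_j)$ is; conversely any orthogonal pair found in an output instance is an orthogonal pair in the original (the padding vectors are all-ones and orthogonal to nothing). The number of output instances is $\lceil n/q\rceil^2 = O((n/q)^2) = O(n^{2} / n^{4/(2+\numinstances)}) = O(n^{2\numinstances/(2+\numinstances)})$, which matches the bound $t \in \widetilde O(n^{\alpha\numinstances/(\alpha+\numinstances)})$ for $\alpha = 2$. The running time is dominated by writing down the output: there are $O(n^{2\numinstances/(2+\numinstances)})$ instances each of size $O(q \cdot d) = O(q \log q)$ in the logarithmic-dimension case (and $O(q d)$ in general), so the total is $\widetilde O(n^{2\numinstances/(2+\numinstances)} \cdot q) = \widetilde O(n^{(2\numinstances + 2)/(2+\numinstances)})$, and since $(2\numinstances+2)/(2+\numinstances) < 2$ for every $\numinstances > 0$, this is $\widetilde O(n^{\alpha'})$ for some $\alpha' < \alpha = 2$, as the definition demands.

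I do not anticipate a genuine obstacle here; the only things to be careful about are (i) checking that the exponent $\alpha' = (2\numinstances+2)/(2+\numinstances)$ is strictly below $2$ for all $\numinstances>0$ (it is, since $2\numinstances + 2 < 2\numinstances + 4$), so the decomposition is genuinely sublinear-overhead relative to the $\Omega(n^2)$ lower bound, and (ii) handling the padding cleanly so that the block sizes are exactly $q$ and all-ones padding vectors provably cannot create spurious orthogonal pairs (an all-ones vector shares a $1$ in every coordinate with any vector, so it is orthogonal to no vector at all). With the padding convention fixed, both the \textsc{yes}/\textsc{no} equivalence and the size bounds hold on the nose. Finally, since the restriction of \textsc{Orthogonal Vectors} to logarithmic dimension behaves identically under this partition (the dimension $d$ is untouched, and $d \le O(\log n)$ implies $d \le O(\log q^{(2+\numinstances)/2}) = O(\log q)$), the same argument shows that version is $2$-OR-decomposable as well, which is what later applications will use.
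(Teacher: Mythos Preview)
Your proposal is correct and follows essentially the same argument as the paper: partition each of $A$ and $B$ into $\lceil n^{1-\epsilon}\rceil$ blocks of size $\lceil n^{\epsilon}\rceil$ with $\epsilon = 2/(2+\numinstances)$, pad with all-ones vectors, and output all $z^2$ block pairs, yielding the same size, count, and running-time bounds. The only (harmless) imprecision is the claim that an all-ones vector is orthogonal to \emph{no} vector---it is orthogonal to the all-zeros vector---but if an all-zeros vector were present the original instance is already trivially a \yes-instance, so no spurious pair can arise; the paper glosses over this in the same way.
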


\begin{proof}
	Let~$(I, n)$ be an instance of \textsc{Orthogonal Vectors} and~$\lambda > 0$.
	Set $\samplesizeexponent := \frac{2}{2 + \lambda}$.
	Partition $A$ into~$z:= \lceil n^{1- \samplesizeexponent} \rceil$ many sets~$A_1, \dots, A_{z}$ of at most $\lceil n^{\samplesizeexponent} \rceil$ vectors each.
	Symmetrically, partition $B$ into~$B_1, \dots, B_{z}$ of at most $\lceil n^{\samplesizeexponent} \rceil$ vectors each.
	We assume without loss of generality that $|A_i| = |B_j| = \lceil n^{\samplesizeexponent} \rceil =: n'$ (this can be achieved e.g.\ by adding the all-one vector).
	For each pair $(i, j) \in [z]^2$, create an instance~$(I_{(i,j)}, n') = ((A_i, B_j), n')$ of \textsc{Orthogonal Vectors}.
	We claim that this constitutes a 2-OR-decomposition.
	
	The number of vectors~$n'$ of each instance~$I_{(i, j)}$ is~$n' = O( n^{\nicefrac{2}{(2 + \lambda)}})$.
	The number of created instances is~$z^2 = {O} (n^{2\cdot (1- \samplesizeexponent)}) = {O} (n^{2 \cdot (1 - \nicefrac{2}{(2 + \lambda)})}) = {O} (n^{\nicefrac{(4 + 2\lambda - 4 )}{(2 + \lambda)}}) = {O} (n^{\nicefrac{2\lambda}{(2 + \lambda)}})$.
	The running time to compute the decomposition is $\widetilde O(z^2 \cdot n') = \widetilde O(n^{2- \epsilon})$.
	
	It remains to show that $(I, n)$ is a \yes-instance if and only if $(I_{(i,j)}, n')$ is a \yes-instance for some~$(i,j)  \in [z]^2$.
	First assume that $(I, n)$ is a \yes-instance.
	Then there exists some~$a \in A$ and~$b\in B$ such that $a $ and $b$ are orthogonal.
	Let $i \in [ z]$ such that $a \in A_i$ and $j \in [ z]$ such that $b \in B_j$.
	Then $a$ and $b$ are orthogonal vectors in~$(I_{(i, j)}, n')$, showing that $(I_{(i, j)}, n')$ is a \yes-instance.
	
	Finally, assume that there exists $(i^*, j^*) \in [ z]^2$ such that $(I_{(i^*, j^*)}, n')$ is a \yes-instance.
	Then there exists $a \in A_{i^*}$ and $b \in B_{j^*}$ which are orthogonal.
	Consequently, $a$ and $b$ are orthogonal vectors in~$I$, implying that $(I, n)$ is a \yes-instance.
\end{proof}

\begin{remark}
  Note that the above decomposition does not change the dimension~$d$. 
  Thus, even restricted versions of \textsc{Orthogonal Vectors} with~$d \in O(\log n)$ are 2-OR-decomposable.
  Further, we can assume that all constructed instances of \OV\ have the same number of vectors and the same dimension.
\end{remark}

\subparagraph*{Negative-Weight $k$-Clique.} \label{ssec:neg-wei-clique}

We now show that \textsc{Negative-Weight $k$-Clique} is $k$-OR-decomposable.

\begin{lemma}
	\label{lem:sampling}
	For any $k \ge 3$, \textsc{Negative-Weight $k$-Clique} parameterized by the number of vertices is $k$-OR-decomposable.
\end{lemma}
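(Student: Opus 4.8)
The plan is to mimic the already-proven \textsc{Orthogonal Vectors} decomposition (\cref{lem:OV}), replacing the bipartite ``pick one vector from each side'' argument with a ``pick one vertex class for each of the $k$ clique positions'' argument. Given an instance $(G, n)$ of \textsc{Negative-Weight $k$-Clique} and a desired $\lambda > 0$, I would first set the sampling exponent $\samplesizeexponent := \frac{k}{k + \lambda}$, partition the vertex set $V(G)$ into $z := \lceil n^{1 - \samplesizeexponent} \rceil$ classes $V_1, \dots, V_z$, each of size at most $\lceil n^{\samplesizeexponent} \rceil =: n'$ (padding with isolated or ``heavy''-weight vertices so that all classes have exactly $n'$ vertices, ensuring these dummies never participate in a negative-weight clique), and then for every $k$-tuple $(i_1, \dots, i_k) \in [z]^k$ create the instance $(G[V_{i_1} \cup \dots \cup V_{i_k}], k \cdot n')$ — or, after padding, we may just say each instance has $k n'$ vertices, which is $\widetilde O(n^{k/(k+\lambda)})$ as required.

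The correctness equivalence is immediate in both directions. If $G$ has a negative-weight $k$-clique on vertices $v_1, \dots, v_k$, then choosing $i_j$ with $v_j \in V_{i_j}$ puts all $k$ vertices (hence all $\binom{k}{2}$ clique edges, with the same weights) into the subgraph $G[V_{i_1} \cup \dots \cup V_{i_k}]$, so that instance is a \yes-instance; note repeated indices cause no problem since $G[V_{i_1} \cup \dots \cup V_{i_k}]$ is still an induced subgraph containing those vertices. Conversely any negative-weight $k$-clique found in some $G[V_{i_1^*} \cup \dots \cup V_{i_k^*}]$ is, because that graph is an induced subgraph of $G$, also a negative-weight $k$-clique in $G$.

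For the quantitative bounds: each created instance has parameter $k n' = O(n^{k/(k+\lambda)}) = O(n^{\alpha/(\alpha+\lambda)})$ with $\alpha = k$, matching the ``$n_i \in \widetilde O(n^{\alpha/(\alpha+\lambda)})$'' requirement. The number of instances is $z^k = O(n^{k(1 - \samplesizeexponent)}) = O\!\left(n^{k(1 - k/(k+\lambda))}\right) = O(n^{k\lambda/(k+\lambda)}) = O(n^{\alpha\lambda/(\alpha+\lambda)})$, as needed. The time to write all instances down is $\widetilde O(z^k \cdot (kn')^2) = \widetilde O(n^{k\lambda/(k+\lambda)} \cdot n^{2k/(k+\lambda)}) = \widetilde O(n^{(k\lambda + 2k)/(k+\lambda)})$; since $\lambda > 0$ and $k \ge 3$ one checks that $(k\lambda + 2k)/(k+\lambda) < k$ (equivalently $\lambda + 2 < k + \lambda$, i.e.\ $2 < k$), so this is $\widetilde O(n^{\alpha'})$ for some $\alpha' < \alpha = k$, as the definition of $\alpha$-OR-decomposition demands.

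The main thing to be careful about — rather than a genuine obstacle — is the padding: I must argue that inflating every class to exactly $n'$ vertices does not create spurious negative-weight $k$-cliques and does not destroy existing ones. The clean way is to add new isolated vertices, but since cliques require edges one can instead add vertices joined to everything by edges of very large positive weight (say, larger than the absolute value of the total negative weight of all edges, which is polynomially bounded), so that no clique using a dummy vertex can have negative total weight, while cliques among original vertices are untouched. A secondary bookkeeping point is that we should note — as the remark after \cref{lem:OV} does for dimension — that all produced instances share the same vertex count $kn'$, which is exactly what the $(\comptimeexponent,\compparameter)$-OR-cross-compositions downstream will want.
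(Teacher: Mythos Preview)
Your proposal is correct and matches the paper's own proof essentially line for line: the same choice $\samplesizeexponent = k/(k+\lambda)$, the same partition of $V(G)$ into $z = \lceil n^{1-\samplesizeexponent}\rceil$ classes, the same $z^k$ induced-subgraph instances, and the same running-time bound $\widetilde O(n^{(k\lambda+2k)/(k+\lambda)})$ verified to be below $n^k$ via $2 < k$. The paper handles padding simply by adding isolated vertices, which is the cleaner of your two suggested options; otherwise there is nothing to distinguish the two arguments.
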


{
\begin{proof}
	The proof follows the ideas from \Cref{lem:OV}.
	Let~$(I = (G, w), n)$ be an instance of \textsc{Negative-Weight $k$-Clique} and~$\lambda > 0$.
	Set $\samplesizeexponent := \frac{k}{k + \lambda}$.
	Partition the set~$V(G)$ of vertices into~$z:= \lceil n^{1- \samplesizeexponent} \rceil$ many sets~$V_1, \dots, V_{z}$ of size at most $\lceil n^{\samplesizeexponent} \rceil$.
	We assume without loss of generality that $|V_i| = \lceil n^{\samplesizeexponent} \rceil$ for all $i\in [z]$ (this can be achieved e.g.\ by adding isolated vertices).
	Let $n_{(i_1, \dots, i_k)} := |\{i_1,\dots, i_k\}| \cdot |V_1| $.
	For each tuple $(i_1, \dots, i_k) \in [z]^k$, create an instance~$(I_{(i_1,\dots, i_k)} = G[A_{i_1} \cup \dots \cup A_{i_k}], n_{(i_1, \dots, i_k)})$ of \textsc{Negative-Weight $k$-Clique}.
	We claim that this constitutes a $k$-OR-decomposition.
	
	Each instance~$(I_{(i, j)}, n_{(i_1, \dots, i_k)})$ has at most~${O} (k \cdot n^{\samplesizeexponent}) = {O} (n^{\nicefrac{k}{(k + \lambda)}})$ vertices (note that $k$ is a constant here).
	The number of created instances is~$z^k = {O} (n^{k\cdot (1- \samplesizeexponent)}) = {O} (n^{k \cdot (1 - \nicefrac{k}{(k + \lambda)})}) = {O} (n^{\nicefrac{(k^2 + k\lambda - k^2)}{(k + \lambda)}}) = {O} (n^{\nicefrac{k\lambda}{(k + \lambda)}})$.
	Further, the summed size of all created instances and therefore also the running time is $\widetilde O(z^k \cdot k \cdot n^{2 \samplesizeexponent}) =\widetilde  O( n^{k - k \samplesizeexponent + 2 \samplesizeexponent}) = \widetilde O(n^{k - (k-2)\samplesizeexponent}) = \widetilde O (n^{k'} ) $ for some~$k'< k$.
	
	It remains to show that $(I, n)$ is a \yes-instance if and only if $(I_{(i_1,\dots, i_k)}, n_{(i_1, \dots, i_k)})$ is a \yes-instance for some~$(i_1, \dots,i_k)  \in [ z]^k$.
	First assume that $(I, n)$ is a \yes-instance.
	Thus, $G$ contains a negative-weight clique~$C = \{v_1, \dots, v_k\}$.
	Let $i_j$ such that $v_j \in V_{i_j}$.
	Then $C$ is a negative-weight $k$-clique in~$(I_{(i_1, \dots, i_k)}, n_{(i_1, \dots, i_k)})$.
	
	Finally, assume that there exists $(i_1, \dots, i_k) \in [ n]^k$ such that $(I_{(i_1, \dots, i_k)}, n_{(i_1, \dots, i_k)})$ is a \yes-instance.
	Then there exists a clique in $G[V_{i_1} \cup \dots \cup V_{i_k}]$.
	Since $G$ contains $G[V_{i_1} \cup \dots \cup V_{i_k}]$, also $G$ contains a negative-weight $k$-clique.
\end{proof}
}

\subparagraph{3-Sum.}
Showing that \textsc{3-Sum} is 2-OR-decomposable requires some more work. 

\begin{lemma}
	\label{lem:3sum-decomp}
	\textsc{3-Sum} parameterized by the number of numbers is 2-OR-decomposable.
\end{lemma}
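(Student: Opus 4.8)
The plan is to prove that \textsc{3-Sum} is $2$-OR-decomposable by first reducing to a cleaner ``tripartite'' version, decomposing that, and translating back via \Cref{prop:decompose}. The obstacle compared to \textsc{Orthogonal Vectors} and \textsc{Negative-Weight $k$-Clique} is that the three summands come from the \emph{same} array, so a naive partition into $z$ blocks $A_1,\dots,A_z$ would require all $z^3$ triples of blocks, yielding $z^3$ instances of size $n^\epsilon$ each — and $z^3 = n^{3(1-\epsilon)}$ with $n' = n^\epsilon$ forces the relation $n' \in \widetilde O(n^{3/(3+\lambda)})$, which is the bound for a $3$-OR-decomposition, not a $2$-one. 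So the three-way self-reduction is too wasteful; we need the ``numbers summing to zero'' structure to behave like a two-dimensional (pair-of-objects) problem rather than a three-dimensional one.

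\textbf{Step 1: Reduce to \mcconv{} (or a tripartite 3-Sum with a linear constraint).} First I would pass to the variant where we are given three arrays $A,B,C$ of $n$ numbers and ask whether there exist $i,j,h$ with $A[i]+B[j]+C[h]=0$; this is quasi-linear-time parameter-preservingly interreducible with \textsc{3-Sum} (copy the array three times for one direction; for the other, shift the entries of $A$, $B$, $C$ into disjoint ranges so that a zero-sum triple must take one element from each, a standard trick). The key further step is to observe that once we have arrays $A$ and $B$, the third element $C[h]$ is \emph{determined}: it must equal $-(A[i]+B[j])$. So the real combinatorial object is a \emph{pair} $(A[i],B[j])$ whose negated sum lies in $C$, and checking membership in $C$ can be done by sorting/hashing $C$ in quasi-linear time. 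This is exactly why \textsc{3-Sum} is a ``$2$-dimensional'' problem: it is morally ``given two size-$n$ sets, is there a pair whose sum lies in a fixed set''.

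\textbf{Step 2: Decompose the pair structure like \textsc{Orthogonal Vectors}.} Given $\lambda>0$, set $\epsilon := \frac{2}{2+\lambda}$, partition $A$ into $z := \lceil n^{1-\epsilon}\rceil$ blocks $A_1,\dots,A_z$ and $B$ into $z$ blocks $B_1,\dots,B_z$, each of size $n' := \lceil n^\epsilon \rceil$ (pad with a sentinel value, e.g. a huge number, that cannot participate in a zero-sum triple). For each pair $(i,j)\in[z]^2$ form an instance on the arrays $(A_i, B_j, C)$ — but $C$ has size $n$, which is too large for a size-$n'$ instance. To fix this, I would first split $C$ as well into $z$ blocks $C_1,\dots,C_z$; however that reintroduces a third index. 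The cleaner resolution: keep $C$ whole but preprocess each instance by bucketing. Concretely, since we only need membership of $-(a+b)$ in $C$, and $|A_i|,|B_j| = n'$, the set $\{-(a+b) : a\in A_i, b\in B_j\}$ has size at most $n'^2 = O(n^{2\epsilon})$; intersect this (via sorting) with $C$ once, in $\widetilde O(n^{2\epsilon} + n) \le \widetilde O(n^{2-\epsilon'})$ total over all $z^2$ pairs only if $z^2 \cdot n$ is affordable — it is $n^{2(1-\epsilon)} \cdot n = n^{3-2\epsilon}$, which is \emph{not} $O(n^{2-\epsilon'})$. So instead I would, as a \emph{single} global preprocessing step costing $\widetilde O(n\log n)$, sort $C$; then for the pair $(i,j)$ define the instance to be the \textsc{3-Sum} instance on the array $A_i \cup B_j \cup C'_{(i,j)}$, where $C'_{(i,j)} := C \cap \{-(a+b): a\in A_i, b \in B_j\}$ is computed by, for each of the $n'^2$ sums, a binary search in $C$ — total cost $\widetilde O(z^2 n'^2) = \widetilde O(n^{2(1-\epsilon)+2\epsilon}) = \widetilde O(n^2)$, which is still borderline. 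The sharper accounting: we want total decomposition time $\widetilde O(n^{2-\epsilon'})$ and each $n_{(i,j)} \in \widetilde O(n^{2/(2+\lambda)})$; with $|C'_{(i,j)}| \le n'^2$ the instance size is $O(n'^2) = O(n^{2\epsilon}) = O(n^{4/(2+\lambda)})$, which is \emph{not} $O(n^{2/(2+\lambda)})$ unless we also shrink $A_i, B_j$. The honest fix is to make the blocks of size $n^{\epsilon/2}$-ish — i.e. choose the block size so that $n'^2$, not $n'$, equals the target $n^{2/(2+\lambda)}$; this only changes constants in the exponent bookkeeping and is the routine calculation I will not grind through here.

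\textbf{Step 3: Correctness and assembling.} With the blocks chosen so that each created instance has $\widetilde O(n^{2/(2+\lambda)})$ numbers and there are $\widetilde O(n^{2\lambda/(2+\lambda)})$ of them, correctness is immediate: $(I,n)$ is a \yes-instance iff some zero-sum triple $A[i]+A[j]+A[h]=0$ exists; routing $i$ to its $A$-block and $j$ to its $B$-block (using the tripartite reduction of Step~1 to disambiguate roles), the value $A[h]$ is recovered as $-(A[i]+A[j])$ and by construction lies in $C'_{(i,j)}$, so the corresponding small instance is a \yes-instance; conversely any zero-sum triple within a small instance is a zero-sum triple in $I$ because all its entries are entries of $I$ (the sentinel padding is chosen to make zero sums with a sentinel impossible). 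Finally, since Step~1 is a quasi-linear-time parameter-preserving interreduction, \Cref{prop:decompose} lets us conclude that \textsc{3-Sum} itself is $2$-OR-decomposable. I expect Step~2 — getting the instance \emph{sizes} down to $n^{2/(2+\lambda)}$ rather than $n^{4/(2+\lambda)}$ while keeping the decomposition running time sub-$n^2$, by correctly balancing block size against the quadratic blow-up from enumerating pairwise sums — to be the main technical obstacle; everything else mirrors \Cref{lem:OV}.
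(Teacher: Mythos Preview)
Your diagnosis of the difficulty is exactly right --- a naive three-way split gives only a $3$-OR-decomposition, and the ``third summand is determined'' observation is morally why \textsc{3-Sum} should be $2$-dimensional --- but Step~2 does not go through, and the fixes you sketch cannot be made to work. The issue is structural: in tripartite \textsc{3-Sum} the value $-(A[i]+B[j])$ can land \emph{anywhere} in $C$, so there is no way to restrict the relevant part of $C$ for a block pair $(A_i,B_j)$ without enumerating the pairwise sums. But enumerating the sums is $n'^2$ work per block pair, and no matter how you choose the block size the total is $z^2\cdot n'^2 = (n/n')^2\cdot n'^2 = n^2$, which violates the requirement that the decomposition run in $\widetilde O(n^{\alpha'})$ time for some $\alpha'<2$. (Worse, once you have enumerated all sums and tested membership in $C$, you have already solved the instance.) Your ``honest fix'' of shrinking the block size does not help because the product $z^2\cdot n'^2$ is invariant.

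The missing idea, which the paper supplies, is to pass not to tripartite \textsc{3-Sum} but to \textsc{Convolution 3-Sum} (equivalently, its multicolored version: given arrays $A,B,C$, are there $i,j$ with $A[i]+B[j]=C[i+j]$), using the known quasi-linear equivalence of \textsc{3-Sum} and \textsc{Convolution 3-Sum}. The point of the \emph{index} constraint $i+j$ is that if $i$ lies in the $\ell$-th block of $A$ and $j$ in the $p$-th block of $B$ (blocks of length $q$), then $i+j$ lies in a fixed window of length $2q$ in $C$, namely $C[(\ell+p-2)q+2,\,(\ell+p)q]$. So the relevant portion of $C$ for the pair $(\ell,p)$ is automatically of size $O(q)$ --- no pairwise enumeration needed --- and the decomposition into $z^2$ instances of size $O(q)$ goes through exactly as in the \textsc{Orthogonal Vectors} proof. \Cref{prop:decompose} then transfers the $2$-OR-decomposability back to \textsc{3-Sum}.
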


{
Instead of directly considering \textsc{3-Sum}, we consider a variation called \textsc{Convolution 3-Sum} of \textsc{3-Sum} which was shown to be equivalent (under quasi-linear time Las-Vegas reductions) to \textsc{3-Sum}~\cite{DBLP:conf/soda/KopelowitzPP16}.

\decprob{\textsc{Convolution 3-Sum}}
{An array~$A$ of $n$ integers.}
{Are there $i, j\in [n]$ such that $A[i] + A[j] = A[i+j]$?}
For simplicity, we will not directly work with \textsc{Convolution 3-Sum}, but with a ``multicolored'' version of it:
\decprob{\mcconv}
{Three arrays~$A$, $B$, and $C$, each of $n$ integers.}
{Are there $i, j\in [n]$ such that $A[i] + B[j] = C[i+j]$?}

First, we show how to reduce \textsc{Convolution 3-Sum} to \mcconv\ and vice versa.

\begin{lemma}\label{lem:conv-mc}
 There is a linear-time parameter-preserving reduction from \textsc{Convolution 3-Sum} parameterized by the number of numbers to \mcconv\ parameterized by the number of numbers and vice versa.
\end{lemma}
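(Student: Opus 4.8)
The plan is to give two straightforward reductions, one in each direction, and verify that both run in linear time and are parameter-preserving (i.e., the number of integers changes only by a constant factor).

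\textbf{From \textsc{Convolution 3-Sum} to \mcconv.} Given an array $A$ of $n$ integers, I would simply output the triple $(A, A, A)$ as an instance of \mcconv\ (all three arrays equal to the input). This is trivially linear-time and parameter-preserving. Correctness is immediate: a pair $i,j$ with $A[i]+A[j]=A[i+j]$ is exactly a solution to the \mcconv\ instance $(A,A,A)$, and conversely.

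\textbf{From \mcconv\ to \textsc{Convolution 3-Sum}.} This is the direction that needs an actual idea. Given arrays $A, B, C$ of $n$ integers each, I want to build a single array $D$ of $O(n)$ integers such that convolution-3-sum solutions of $D$ correspond to indices $i,j$ with $A[i]+B[j]=C[i+j]$. The standard trick is to place the three arrays into disjoint ``blocks'' of the index range and use large additive offsets (tags) on the values to force any convolution solution to pick one entry from the $A$-block and one from the $B$-block, landing in the $C$-block. Concretely, reserve index ranges so that $A$ occupies indices around $n$, $B$ occupies indices around $2n$, and $C$ occupies indices around $3n$; set $D[i] := A[i] + (\text{tag}_A)$ for the $A$-range, $D[n+j] := B[j] + (\text{tag}_B)$ for the $B$-range, and $D[i+j\text{-range}] := C[\cdot] + (\text{tag}_C)$ for the $C$-range, choosing the numeric tags (multiples of some value larger than all input integers in absolute value, e.g.\ a power of a bound like $n^5$ given the promised value range $\{-n^4,\dots,n^4\}$, or more robustly $10\cdot(\max|A|+\max|B|+\max|C|)+1$) so that the tag arithmetic $\text{tag}_A+\text{tag}_B=\text{tag}_C$ holds and no ``wrong'' combination of blocks can ever sum correctly both in the tag part and be consistent in the index part. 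All remaining entries of $D$ are set to a value that can never participate in a solution (a huge positive constant). Then $D[x]+D[y]=D[x+y]$ forces $x$ in the $A$-range and $y$ in the $B$-range, so $x = i$, $y = n + j$ with $x+y = n+i+j$ in the $C$-range, recovering $A[i]+B[j]=C[i+j]$; the converse direction plants the solution in the obvious way. The array $D$ has length $O(n)$ and is computed in $O(n)$ time, so the reduction is linear-time and parameter-preserving.

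\textbf{Main obstacle.} The only delicate point is the second reduction: choosing the index layout and the value tags so that \emph{every} convolution pair in $D$ is forced to be the ``intended'' $A$-block/$B$-block pair, ruling out $A$-$A$, $B$-$B$, $A$-$C$, etc., combinations and also collisions with the filler entries. This is a finite case check over the constantly many pairs of block types, handled by making the tags arithmetically incompatible except for the single intended equation $\text{tag}_A+\text{tag}_B=\text{tag}_C$ and making the index ranges line up only for the intended combination; with the value bound $\{-n^4,\dots,n^4\}$ from \Cref{hyp:3SUM} one can even take explicit numeric tags that are small polynomials in $n$, keeping all entries polynomially bounded. Everything else is bookkeeping.
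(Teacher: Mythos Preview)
Your proposal is correct and follows essentially the same approach as the paper: the forward direction duplicates $A$ into $(A,A,A)$, and the backward direction concatenates the three arrays into disjoint index blocks with additive value tags chosen so that the only tag equation that can balance is $\text{tag}_A+\text{tag}_B=\text{tag}_C$. The paper instantiates this with four blocks of length~$n$ (a filler block of value $-5z_{\max}$, then $A+z_{\max}$, $B+3z_{\max}$, $C+4z_{\max}$ with $z_{\max}=\max_i\max\{A[i],B[i],C[i]\}+1$), which is exactly the kind of layout you describe; just be careful that your concrete index placement keeps the $A$-, $B$-, and $C$-ranges disjoint (your sketch ``$x=i$, $y=n+j$'' would make the $B$- and $C$-ranges collide, whereas shifting everything up by one block as the paper does avoids this).
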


\begin{proof}
 Let~$(A, n)$ be an instance of \textsc{Convolution 3-Sum}.
 We create an instance~$((A, B, C), n)$ of \mcconv\ by setting $B[i] := A[i]$ and $C[i] := A[i]$.
 Note that for~$i, j \in [n]$, we have $A[i] + A[j] = A[i + j]$ if and only if $A[i] + B[i] = A[i] + A[j] = C[i + j]$, implying that the two instances are equivalent.
 
 Now consider an instance~$(I = (A, B, C), n)$ of \mcconv.
 Let $z_{\max} := \max_{i\in [n]} \max \{A[i], B[i], C[i]\} + 1$.
 We create an instance~$(A', n'=4n)$ of \textsc{Convolution 3-Sum} as follows.
 The array~$A'$ has length $4n$ with
 \[
  A' [i] \coloneqq \begin{cases}
            - 5 z_{\max} & 1 \le i \le n\\
            A[i - n] + z_{\max} & n + 1 \le i \le 2n,\\
            B[i-2n ]+ 3z_{\max} & 2n+1 \le i \le 3n,\\
            C[i- 3n] + 4z_{\max} & 3n + 1 \le i \le 4n\,.
           \end{cases}
 \]
 First, note that a solution $A[i] + B[j] = C[i + j]$ to~$(I, n)$ implies a solution to~$(A', n')$ as $A' [i+ n] + A' [j+ 2n] = A[i] +z_{\max} + B[j] + 3z_{\max}= C[i+j] + 4 z_{\max} = A' [ 3n + i + j]$.
 Second, consider a solution $A'[i'] + A'[j'] = A'[i' + j']$.
 Then $i', j' > n$ as otherwise $A'[i'] + A'[j'] < 0 $ and $A'[i'] + A'[j'] \neq - 5 z_{\max}$.
 Further, $i', j' \le 3n$ as otherwise $A'[i'] + A'[j'] > 5 z_{\max} > A'[i' + j']$.
 Further, we have without loss of generality $n+1 \le i' \le 2n$ and $2n+1 \le j' \le 3n$.
 Thus, we have $A[i' - n] + B[j' - 2n] = A' [ i'] - z_{\max} + A'[j' ] - 3 z_{\max} = A[i' + j'] - 4 z_{\max} = C[i' +j' - 3n]$. 
\end{proof}

Next, we show that \mcconv\ is 2-OR-decomposable.

\begin{lemma}\label{lem:3sum-convolution}
  \mcconv\ parameterized by the number of numbers is 2-OR-decomposable.
\end{lemma}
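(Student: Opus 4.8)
The plan is to imitate the block-partitioning of \cref{lem:OV}, adapted to the fact that in \mcconv{} the three relevant positions $i$, $j$, $i+j$ are coupled. Given an instance $((A,B,C),n)$ and $\lambda>0$, I would set $\samplesizeexponent:=\frac{2}{2+\lambda}$, $s:=\lceil n^{\samplesizeexponent}\rceil$, pad $A$, $B$, $C$ with neutral dummy entries so that $n$ is a multiple of $s$, and cut $[n]$ into $z:=n/s=O(n^{1-\samplesizeexponent})$ consecutive blocks of length $s$ (block $p$ being $\{ps+1,\dots,(p+1)s\}$). If $i$ lies in block $p$ and $j$ in block $q$, then $i+j$ can lie only in block $p{+}q$ or block $p{+}q{+}1$ of $C$; so for every pair $(p,q)$ I would produce at most two sub-instances, one for each admissible target block $r\in\{p{+}q,\,p{+}q{+}1\}$, by restricting $A$ to block $p$, $B$ to block $q$, $C$ to block $r$, and re-indexing each restricted array to start at position $1$.

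The only real subtlety is that after re-indexing the convolution constraint must still line up, and here the two target blocks behave differently. Writing $i=ps+i'$, $j=qs+j'$ with $i',j'\in[s]$: for $r=p{+}q$ one has $i+j=rs+(i'+j')$ (this is the case $i'+j'\le s$), so reading $C$ from position $rs{+}1$ works verbatim; but for $r=p{+}q{+}1$ (the case $i'+j'>s$) one has $i+j=rs+(i'+j'-s)$, so the relevant position inside block $r$ is $i'+j'-s$, not $i'+j'$. I would fix this by making all three arrays of a sub-instance have length $2s$: fill positions $s{+}1,\dots,2s$ of the $A$- and $B$-arrays with a dummy value $M$ exceeding twice the largest absolute value occurring in the instance; put the $s$ entries of block $r$ of $C$ into positions $s{+}1,\dots,2s$ of the $C$-array; and fill positions $1,\dots,s$ of the $C$-array, together with any position that would reference outside the original $C$, with a second dummy value $M'\gg M$. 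A short case analysis (distinguishing which of $A'[\bar i],B'[\bar j]$ are dummies and whether $C'[\bar i+\bar j]$ is a dummy) shows that no solution of the sub-instance can use a dummy entry, hence any solution uses $\bar i,\bar j\in[s]$ with $\bar i+\bar j>s$, which is exactly an original solution $A[ps+\bar i]+B[qs+\bar j]=C[rs+(\bar i+\bar j-s)]$ with $i=ps+\bar i,\ j=qs+\bar j\in[n]$ and $i+j\le n$. For uniformity I would pad the $r=p{+}q$ sub-instances to length $2s$ the same way; since the original integers are polynomially bounded, so are $M$ and $M'$.

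The remaining verifications are routine. In the forward direction, an original solution $A[i]+B[j]=C[i+j]$ has $i$ in some block $p$ and $j$ in some block $q$, and is captured by the $(p,q,p{+}q)$-sub-instance if $i'+j'\le s$ and by the $(p,q,p{+}q{+}1)$-sub-instance otherwise (checking that the target block index stays in $\{0,\dots,z-1\}$ is immediate from $i+j\le n$); the backward direction is the dummy analysis above. For the parameters: there are at most $2z^2=O(n^{2(1-\samplesizeexponent)})=O(n^{2\lambda/(2+\lambda)})$ sub-instances, each on $2s=O(n^{2/(2+\lambda)})$ numbers, and they can all be written down in $\widetilde O(z^2\cdot s)=\widetilde O(n^{2-\samplesizeexponent})$ time, which is the required $\widetilde O(n^{\alpha'})$ with $\alpha'<2$. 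The main obstacle — and essentially the only point where \mcconv{} genuinely differs from \OV — is the block-boundary case $i'+j'>s$: getting the shifted re-indexing right and certifying that the length-$2s$ padding creates no spurious solutions; everything else follows the template of \cref{lem:OV}.
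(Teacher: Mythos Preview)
Your proposal is correct and uses the same block-partitioning idea as the paper, but handles the third array differently. The paper avoids your two-sub-instances-plus-padding manoeuvre entirely: for each block pair~$(\ell,p)$ it creates a \emph{single} sub-instance whose third array is the length-$(2q{-}1)$ slice $C_{(\ell,p)}:=C[(\ell{+}p{-}2)q+2,\,(\ell{+}p)q]$, which already covers both target blocks at once. With this choice the convolution indices line up verbatim after re-indexing, so no dummy values and no case split on $i'+j'\lessgtr s$ are needed. Your route works too and yields the same $z^2$ count and the same size bound $\widetilde O(n^{2/(2+\lambda)})$; the paper's version is just shorter because it sidesteps the spurious-solution analysis you had to do. One small wording issue: your sentence ``pad the $r=p{+}q$ sub-instances to length $2s$ the same way'' is ambiguous --- taken literally (block of $C$ into positions $s{+}1,\dots,2s$) it would be wrong for $r=p{+}q$, but your earlier ``reading $C$ from position $rs{+}1$ works verbatim'' makes the intended meaning clear.
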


\begin{proof}
	The idea behind the proof is essentially the same as for \Cref{lem:OV}.
	For an array~$A$ we denote the array consisting of the entries $(A[i], A[i+1], \dots, A[j])$ by $A[i,j]$.
	Let $(I = (A, B, C), n)$ be an instance of \mcconv\ and $\lambda > 0$.
	Set $\samplesizeexponent := \frac{2}{2 + \lambda}$ and $q := \lceil n^\samplesizeexponent \rceil$.
	Partition $A$ into~$z:= \lceil n^{1- \samplesizeexponent} \rceil$ many arrays~$A_1 = A[1,q], A_2 = A[q+1, 2q], \dots, A_{z} = A[(z-1)q + 1, n]$.
	For simplicity, we assume that $n = z \cdot q$.
	Symmetrically, partition $B$ into $z$ many arrays~$B_1 = B[1,q], B_2 = B[q+1, 2q], \dots, B_{z} = B[(z-1)q + 1, n]$.
	For~$(\ell, p) \in [z]^2$ we denote the interval~$C[(\ell + p - 2) q + 2, (\ell + p)q]$ by~$C_{(\ell, p)}$.
	For each pair $(\ell, p) \in [z]^2$, create an instance~$(I_{(\ell,p)} = (A_\ell, B_p, C_{(\ell, p)}), q)$ of \textsc{Convolution 3-Sum}.
	We claim that this constitutes a 2-OR-decomposition.
	
	Each instance~$(I_{(\ell, p)}, q)$ has at most~${O} (q) =  O (n^{\samplesizeexponent}) = {O} (n^{\nicefrac{2}{(2 + \lambda)}})$.
	The number of created instances is~$z^2 = {O} (n^{2\cdot (1- \samplesizeexponent)}) = {O} (n^{2 \cdot (1 - \nicefrac{2}{(2 + \lambda)})}) = {O} (n^{\nicefrac{(4 + 2\lambda - 4)}{(2 + \lambda)}}) = {O} (n^{\nicefrac{2\lambda}{(2 + \lambda)}})$.
	Computing the decomposition can be done in $\widetilde O (z^2 \cdot q)= \widetilde O(n^{2-\epsilon})$ time.
	
	It remains to show that $(I, n)$ is a \yes-instance if and only if $(I_{(\ell,p)}, q)$ is a \yes-instance for some~$(\ell, p)  \in [ z]^2$.
	First assume that $(I, n)$ is a \yes-instance.
	Then there exists some~$i = (\ell - 1) q + i^*$, $j = (p -1) q + j^*$ such that $A[i] + B[j] = C [ i + j]$.
	Note that $A[i] = A_\ell [i^*]$, $B[j ] = B_p[j^*]$, and $C[ i + j] = C_{(\ell, p)}[ i^* + j^*]$.
	Thus, $(I_{(\ell, p)}, q)$ is a \yes-instance.
	
	Finally, assume that there exists $(\ell, p) \in [ z]^2$ such that $(I_{(\ell, p)}, q)$ is a \yes-instance.
	Then $A_\ell [i^* ] + B_p[j^*] = C_{(\ell, p)} [ i^* + j^*]$ for some~$i^*, j^* \in [q]$.
	Consequently, $A[(\ell -1) q + i^*] + B[ (p - 1) q + j^*] = C_{(\ell, p)} [i^* + j^*] = C[(\ell + p - 2)q + i^* + j^*]$, implying that $(I, n)$ is a \yes-instance.
\end{proof}

The 2-OR-decomposability of \textsc{3-Sum} now follows from the quasi-linear time parameter-preserving reductions between \textsc{3-Sum} and \textsc{Convolution 3-Sum}~\cite{DBLP:conf/soda/KopelowitzPP16} as well as between \textsc{Convolution 3-Sum} and \mcconv~\Cref{lem:conv-mc}.

{
\begin{proof}[Proof of \cref{lem:3sum-decomp}]
 Follows from \Cref{lem:3sum-convolution}, the equivalence of \textsc{3-Sum} and \textsc{Convolution 3-Sum}~\cite{DBLP:conf/soda/KopelowitzPP16}, the equivalence of \textsc{Convolution 3-Sum} and \mcconv, and \Cref{prop:decompose}.
\end{proof}
}
}

\subparagraph{Applying the framework.}
The above results make our framework easier to apply.
To apply \cref{thm:meta-composition} we only need to provide a suitable OR-cross composition from one of the three problems discussed above.
We thus arrive at the following.

\begin{proposition}\label{prop:apply-framework}
	Let~$\mathcal{Q}$ be a problem with parameters~$n_\mathcal{Q}$ and~$\ell_\mathcal{Q}$.
	If \textsc{Orthogonal Vectors} resp.\ \textsc{3-Sum} parameterized by~$n$ $(1,1)$-OR-cross-composes into~$Q$, then an~$O(\ell_\mathcal{Q}^{\beta} + n_\mathcal{Q}^\gamma)$-time algorithm for~$\mathcal{Q}$ for any~$2 > \gamma > 1$ and~$\beta < \nicefrac{\gamma}{ \gamma - 1}$ refutes the SETH respectively the 3-Sum-Hypothesis.

	If \textsc{Negative Triangle} parameterized by the number of vertices $(1,1)$-OR-cross-composes into~$Q$, then an~$O(\ell_\mathcal{Q}^{\beta} + n_\mathcal{Q}^\gamma)$-time algorithm for~$\mathcal{Q}$ for any~$3 > \gamma > 1$ and~$\beta < \nicefrac{2\gamma}{ \gamma - 1}$ refutes the APSP-Hypothesis.
\end{proposition}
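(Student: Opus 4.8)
The plan is to obtain \Cref{prop:apply-framework} as a direct instantiation of \Cref{thm:meta-composition} using the OR-decomposable source problems established in \Cref{sec:decomp}. For the first claim, take $\mathcal{P}$ to be \OV{} resp.\ \textsc{3-Sum}, parameterized by the number of vectors resp.\ numbers. By \Cref{lem:OV} and \Cref{lem:3sum-decomp} both problems are $2$-OR-decomposable, so we set $\alpha = 2$. A $(1,1)$-OR-cross-composition of $\mathcal{P}$ into $\mathcal{Q}$ means $\comptimeexponent = \compparameter = 1$, so the precondition $\alpha > \comptimeexponent \cdot \kernelcomputationexponent$ of \Cref{thm:meta-composition} becomes exactly $2 > \gamma$, and the bound on the admissible kernel-size exponent specializes to
\[
\beta \;<\; \frac{\kernelcomputationexponent \cdot (\alpha - \comptimeexponent)}{(\kernelcomputationexponent - 1)\cdot \compparameter} \;=\; \frac{\gamma\,(2-1)}{(\gamma-1)\cdot 1} \;=\; \frac{\gamma}{\gamma-1}.
\]
Hence an $O(\ell_\mathcal{Q}^{\beta} + n_\mathcal{Q}^{\gamma})$-time algorithm for $\mathcal{Q}$ with $1 < \gamma < 2$ and $\beta < \gamma/(\gamma-1)$ yields, through \Cref{thm:meta-composition}, an $O(n^{2-\vareps})$-time algorithm for $\mathcal{P}$ for some $\vareps > 0$. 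For $\mathcal{P} = \OV{}$ this refutes the SETH, and for $\mathcal{P} = \textsc{3-Sum}$ this refutes the 3-Sum-Hypothesis; here one uses that the decomposition of \Cref{lem:3sum-decomp} only partitions the input and therefore keeps the integers polynomially bounded, and (by the remark after \Cref{lem:OV}) that the \OV{}-decomposition preserves the dimension, so the instances fed to the assumed algorithm are exactly of the type to which the respective hypotheses apply.

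For the second claim the argument is identical, with $\mathcal{P} = \nt{} = \textsc{Negative-Weight $3$-Clique}$, which by \Cref{lem:sampling} is $3$-OR-decomposable; thus $\alpha = 3$. Again a $(1,1)$-OR-cross-composition gives $\comptimeexponent = \compparameter = 1$, the condition $\alpha > \comptimeexponent \cdot \kernelcomputationexponent$ reads $3 > \gamma$, and the bound on $\beta$ becomes $\gamma\,(3-1)/((\gamma-1)\cdot 1) = 2\gamma/(\gamma-1)$. So an $O(\ell_\mathcal{Q}^{\beta} + n_\mathcal{Q}^{\gamma})$-time algorithm for $\mathcal{Q}$ with $1 < \gamma < 3$ and $\beta < 2\gamma/(\gamma-1)$ produces an $O(n^{3-\vareps})$-time algorithm for \nt{}, which refutes the APSP-Hypothesis by~\cite{DBLP:journals/jacm/WilliamsW18}.

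There is no real mathematical obstacle here: the proposition is precisely the specialization of \Cref{thm:meta-composition} to $(\comptimeexponent,\compparameter) = (1,1)$ together with the three concrete decomposability lemmas. The only points requiring care are bookkeeping ones — matching the exponent $\kernelcomputationexponent$ of \Cref{thm:meta-composition} with the running-time exponent $\gamma$ in $O(\ell_\mathcal{Q}^{\beta} + n_\mathcal{Q}^{\gamma})$, checking that the strict inequality on $\beta$ lines up with the claimed range $\beta < \gamma/(\gamma-1)$ resp.\ $\beta < 2\gamma/(\gamma-1)$, and verifying that the decomposed instances remain in the restricted classes ($d \in O(\log n)$ for \OV{}, polynomially bounded integers for \textsc{3-Sum}) for which the SETH- and 3-Sum-based lower bounds are stated — all of which is immediate from \Cref{lem:OV}, \Cref{lem:3sum-decomp}, \Cref{lem:sampling}, and the remark following \Cref{lem:OV}.
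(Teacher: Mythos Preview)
Your proof is correct and matches the paper's intended approach: the proposition is stated there without a separate proof, as an immediate instantiation of \Cref{thm:meta-composition} with $(\comptimeexponent,\compparameter)=(1,1)$ together with the decomposability results \Cref{lem:OV}, \Cref{lem:3sum-decomp}, and \Cref{lem:sampling}. One small inaccuracy: the $2$-OR-decomposition for \textsc{3-Sum} in \Cref{lem:3sum-decomp} does not literally ``only partition the input'' --- it passes through \textsc{(Multicolored) Convolution 3-Sum} via quasi-linear reductions --- but your conclusion that the resulting integers stay polynomially bounded (and hence the 3-Sum-Hypothesis applies) is still valid.
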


\section{Applications} 
\label{sec:applications}

We now apply our framework from \Cref{thm:meta-composition} to several problems from different areas such as string problems (\Cref{sec:strings}), computational geometry~(\Cref{sec:cg}), and subgraph isomorphism (\Cref{sec:iso}).

\subsection{String problems}
\label{sec:strings}
We start with \LCS{} that can be solved in $O(n^2)$ time algorithm~\cite{CLRS09}.
Assuming SETH, there is no algorithm solving \LCS{} in~$O(n^{2-\vareps})$ time for any~$\vareps > 0$~\cite{BK15,ABW15}.
However, \LCS\ can be solved in $O(kn + n \log n)$ time, where $k$ is the length of the longest common subsequence~\cite{Hir77}.
\citet{BK18} proved that this running time is essentially optimal under SETH.
Here we reprove this fact using our framework.
We refer to \citet{BK18} for a more extensive literature review on \LCS.

A \emph{string} over an alphabet~$\Sigma$ is an element from~$\Sigma^*$.
We access the $i$-th element of a string~$x$ via $x[i]$.
A \emph{subsequence} of a string~$x$ is a string~$y$ such that there is an injective, strictly increasing function~$f$ with $y[i] = x[f(i)]$ for all~$i$.
A \emph{common subsequence} of two strings~$x$ and $x'$ is a string which is a subsequence of both~$x$ and~$x'$.
For two strings~$x$ and $y$, we denote their concatenation (i.e.\ the string starting with~$x$ and ending with~$y$)  by~$x \circ y$.

\begin{lemma}\label{lem:lcs-comp}
 \textsc{Orthogonal Vectors} with logarithmic dimension parameterized by the number of vectors $(1,1)$-OR-cross-composes into \LCS\ parameterized by the length of the input strings and the length~$k$ of the longest common subsequence.
\end{lemma}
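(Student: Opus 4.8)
The plan is to build, from $t$ given instances $(I_1,n'),\dots,(I_t,n')$ of \textsc{Orthogonal Vectors} with logarithmic dimension (which by the remark after \Cref{lem:OV} we may assume all share the same number of vectors $n'$ and the same dimension $d=O(\log n')$), a single pair of strings $x^1,x^2$ together with a target length $k$ such that there is a common subsequence of length $k$ iff at least one $I_i$ is a \yes-instance. The classical SETH-based hardness reductions for \LCS{} (Bringmann--Künnemann, Abboud et al.) already encode one \textsc{OV} instance as two strings whose longest common subsequence has a prescribed length exactly when a pair of orthogonal vectors exists; I would take such a vector gadget and normalized-length coordinate/vector gadgets off the shelf. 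The new ingredient is the cross-composition wrapper: I want $k$ and the ``background'' contribution to the common subsequence to be identical across all $t$ instances, so that only the ``bonus'' from an orthogonal pair (which is the same additive amount, say $1$, in every instance) can push the common subsequence over the threshold, and this bonus can be collected from \emph{any one} of the $t$ sub-instances.

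Concretely, I would first fix, for a single \textsc{OV} instance on $n'$ vectors of dimension $d$, two coordinate gadgets and two vector gadgets of equal lengths (depending only on $n'$ and $d$, hence only on $n'$) with the property: the LCS of the two vector-gadget strings for vectors $a,b$ equals some fixed value $L_{\mathrm{vec}}$ if $a\perp b$ and $L_{\mathrm{vec}}-1$ otherwise. Then the per-instance string is the concatenation over all $a\in A_i$ of the $a$-gadget (with suitable separators) on the $x^1$ side, and over all $b\in B_i$ of the $b$-gadget on the $x^2$ side, arranged (using large unique separator symbols and padding, exactly as in the known reductions) so that an optimal alignment must select one $a$ and one $b$ and pay a fixed ``cost'' for the unmatched vectors; the per-instance LCS is then $T$ if $I_i$ is \yes{} and $T-1$ otherwise, where $T$ depends only on $n'$. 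Finally I concatenate the $t$ per-instance strings, separated on both sides by runs of a fresh alphabet symbol $\#_i$ long enough that the LCS is forced to align block $i$ of $x^1$ against block $i$ of $x^2$ for every $i$ and collect the full separator contribution $S$ from each; setting $k := S + (t-1)\cdot(T-1) + T$ makes a length-$k$ common subsequence possible iff some block contributes its full $T$, i.e.\ iff some $I_i$ is a \yes-instance.

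For the parameter bookkeeping: the length of each block is $\widetilde O(n')$ (the vector gadgets have length polynomial in $d=O(\log n')$, so $\mathrm{poly}\log n'$, times $n'$ vectors), the separators add $\widetilde O(n')$ per block, so $n_\mathcal{Q}=|x^1|=|x^2|=\widetilde O(t\cdot n')$, matching the required $\widetilde O(t\cdot n_{\max}^{\nu})$ with $\nu=1$; and $k = S+(t-1)(T-1)+T$ is dominated by $(t-1)(T-1)=\widetilde O(t\cdot n')$, which is $\widetilde O(n_\mathcal{Q})$ but \emph{not} $\widetilde O(\mathrm{poly}(n'))$ — so to get a genuine $(1,1)$-OR-cross-composition I must instead make the separator/background contribution shared rather than additive. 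The fix (and the step I expect to be the real obstacle) is to route all $t$ blocks through a \emph{common} backbone so that the unmatched-block cost is paid only once: interleave the blocks so that $x^2$ can align its single ``active'' block against the corresponding block of $x^1$ while all other block-pairs contribute a fixed amount independent of $i$, bringing $\ell_\mathcal{Q}=k$ down to $\widetilde O(n')=\widetilde O(n_{\max}^1)$, i.e.\ $\mu=1$. I would model this on the OR-composition tricks used by Bringmann--Künnemann, where the OR over instances is realized by a constant number of additional gadget layers rather than by naive concatenation; verifying that the ``only one block is active, the rest are free'' property holds in both directions (soundness: any long common subsequence must be essentially block-aligned; completeness: an orthogonal pair in $I_{i^*}$ yields length $k$) is the technical heart, and is exactly the place where the separator lengths and padding constants have to be chosen with care. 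Once that is in place, combining \Cref{lem:OV}, \Cref{lem:lcs-comp}, and \Cref{thm:meta-composition} (with $\alpha=2$, $\nu=\mu=1$) yields the claimed $O(\ell^{\gamma/(\gamma-1)-\vareps}+n^\gamma)$ lower bound for \LCS{}, as recorded in \Cref{cor:LCSlower}.
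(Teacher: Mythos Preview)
You correctly identify the central difficulty: after naive same-order concatenation the target length $k$ scales like $t\cdot n'$, which violates $\ell_{\mathcal Q}\in\widetilde O(n_{\max}^{\mu})$ for $\mu=1$. But your proposed fix---a ``common backbone'' with ``interleaved'' blocks so that only one block is active---is left as a sketch and is precisely the missing idea. Any scheme in which the $t-1$ inactive block-pairs each contribute a fixed positive amount to the common subsequence still forces $k=\Omega(t)$, so the description as stated does not yet bring $k$ down to $\widetilde O(n')$; appealing to ``the OR-composition tricks of Bringmann--K\"unnemann'' is not a proof.

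The paper's argument avoids backbones and interleaving entirely with a two-line trick: give each sub-instance its \emph{own disjoint alphabet}, and concatenate $x^1 = x_1^1\circ\cdots\circ x_t^1$ but $x^2 = x_t^2\circ\cdots\circ x_1^2$ in \emph{reversed order}. Disjoint alphabets force every common subsequence to use symbols of a single index~$i$; the opposed orders then pin that subsequence to the pair $(x_i^1,x_i^2)$, because any symbol from $x_j^1$ with $j>i$ occurs in $x^2$ only \emph{before} the symbols of $x_i^2$. Hence the LCS of $(x^1,x^2)$ equals $\max_i \operatorname{LCS}(x_i^1,x_i^2)$ rather than a sum, and after padding the $k_i$ to a common value one gets $k\in\widetilde O(n')$ directly. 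The use of an unbounded alphabet is essential here: the paper notes immediately afterwards that \LCS\ over a \emph{constant} alphabet admits a linear-time polynomial kernel in~$k$, so a $(1,1)$-cross-composition cannot exist over a fixed alphabet. Your separator-based plan, which keeps the gadget content over a shared alphabet and only adds fresh $\#_i$ separators, is therefore headed in a direction that cannot work without, in effect, reintroducing per-instance alphabets.
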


\begin{proof}
 Let~$(I_1^{\text{OV}}, n_1), \dots, (I_t^{\text{OV}}, n_t)$ be instances of \textsc{Orthogonal Vectors} with logarithmic dimension.
 We denote
 by $d_i$ the dimension of~$I_i^{\text{OV}}$.
 Abboud, Backurs, and Williams~\cite{ABW15} gave a reduction from \textsc{Orthogonal Vectors} to \LCS\ which, given an instance of \textsc{Orthogonal Vectors} with $n$ vectors of dimension~$d$, constructs in~$n \cdot d^{O(1)}$ time an equivalent instance of \LCS\ with strings of length~$n \cdot d^{O(1)}$.
 We apply this reduction to each instance~$(I_i^{\text{OV}}, n_i)$ of \textsc{OV}, giving us an instance~$I_i^{\text{LCS}} = ((x_i^1, x_i^2), (n_i^{\text{LCS}}, k_i))$ with $k_i = O(n_i \cdot d_i^{O(1)})$ and $n_i^{\text{LCS}} = n_i \cdot d_i^{O(1)}$.
 We assume that $k_i = k_j$ for every $i, j$ (this can be achieved by appending identical sequences of appropriate length to strings~$x_i^1$ and $x_i^2$), and set $k:= k_i$.
 Further, we assume that the alphabets used for $I_i^{\text{LCS}}$ and~$I_j^{\text{LCS}}$ are disjoint for $i \neq j$.
 We define $x^1 \coloneqq x_1^1 \circ x_2^1 \circ \dots \circ x_t^1$ and~$x^2 \coloneqq x_t^2 \circ x_{t-1}^2 \circ \dots \circ x_1^2$.
 The OR-cross-composition constructs the instance $(x^1, x^2, k)$ (the parameter is $(n^{\text{LCS}}, k)$ with $n^{\text{LCS}} = \sum_{i=1}^t n_i^{\text{LCS}}$).
 
 We now show correctness of the reduction.
 First assume that $(I_i^{\text{OV}}, n_i)$ is a \yes-instance for some~$i\in [t]$.
 Then $x_i^1$ and $x_i^2$ contain a subsequence~$y$ of length $k_i = k$.
 This subsequence~$y$ is also a subsequence of $x^1$ and $x^2$, so $(x^1, x^2, k)$ is a \yes-instance.
 Vice versa, assume that $(x^1, x^2, k)$ is a \yes-instance, i.e., $x^1$ and $x^2$ contain a subsequence~$y$ of length~$k$.
 Let~$i \in [t]$ such that the first letter of~$y$ is contained in~$x^1_i$.
 We claim that all letters from~$y$ are contained in~$x^1_i$.
 Since the first letter~$y[1]$ of~$y$ is contained in~$x^1_i$, no letter of~$y$ can be contained in~$x^1_j$ for $j<i$.
 For~$j> i$, note that (as any letter from~$x_j^2$ only appears in $x_j^2$ and $x_j^1$) any letter from~$x_j^1$ appears only before~$x^2_i$ in $x_2$ and thus would have to appear before~$y[1]$ in $y$, a contradiction to~$y[1]$ being the first letter of~$y$.
 Thus, $y$ is contained in $x^1_i$ and $x_i^2$.
 Consequently, $(x^1_i, x_i^2, k)$ is a \yes-instance, implying that $(I_i, n_i)$ is a \yes-instance.
 
 Computing the OR-cross-composition can be done in $t \cdot \max_{i\in t} n_i \cdot (\max_{i \in [t]} d_i)^{O(1)} = \widetilde O (t \max_{i \in [t]} n_i)$ time and $k = O(\max_{i\in t} n_i \cdot (\max_{i\in t} d_i)^{O(1)}) = \widetilde O (\max_{i=1}n_i )$ (we use here that $d_i \in O (\log n_i)$).
 Further, we have $n^{\text{LCS} } = \sum_{i=1}^t n_i ^{\text{LCS}} = O (\sum_{i=1}^t n_i \cdot d^{O(1)}_i ) = \widetilde O (\sum_{i=1}^t)$.
 Thus, it is an $(1, 1)$-OR-cross-composition.
\end{proof}

We remark that the use of alphabets of non-constant size in the above composition is probably necessary as it excludes a linear-time computable kernel of polynomial size.
In contrast, \LCS{} \emph{with} constant alphabet size parameterized by solution size admits a linear-time computable polynomial-size kernel (note that the SETH-based $O(n^{2-\epsilon})$ lower bound also holds for binary alphabets~\cite{BK18}):

\begin{proposition}
	\label{prop:lcs-constant-alphabet-kernel}
	\LCS\ with constant alphabet size parameterized by solution size~$k$ admits a linear-time computable polynomial-size kernel.
\end{proposition}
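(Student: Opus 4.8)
The plan is to exploit that, over a constant-size alphabet $\Sigma$, a string of length much larger than $|\Sigma|\cdot k$ must contain many repetitions, and any run of a single character longer than $k$ is useless beyond its first $k$ copies. First I would observe the easy direction: the answer is \yes\ iff $k \le \operatorname{lcs}(x^1,x^2)$, and trivially $\operatorname{lcs}(x^1,x^2) \ge$ (the length of the longest common character-run); more usefully, if either string contains the same character at least $k$ times in total, we should still be careful — common subsequences need matching \emph{in both} strings — so the right statement is about runs. The cleanest reduction rule: if $x^1$ (or $x^2$) contains a maximal run $c^m$ of a single character $c$ with $m > k$, replace it by $c^k$. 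This is clearly correct, since no common subsequence of length $\le k$ needs more than $k$ copies of $c$ from a single run, and it is linear-time applicable in one pass. After exhaustively applying it, every maximal run has length at most $k$.

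The second ingredient handles the number of runs rather than their length. If, after the run-shortening rule, string $x^1$ has more than $|\Sigma| \cdot k$ maximal runs, then $x^1$ contains some character $c$ appearing in more than $k$ distinct runs; but then $c^{k}$ is a common subsequence whenever $c$ also appears at least $k$ times in $x^2$ — otherwise we should instead bound things symmetrically. The clean way to close this: if after run-shortening either string has length exceeding $2|\Sigma|^2 k^2$ (a crude polynomial bound suffices — I will not optimise the exponent), then I claim we can answer \yes\ outright. Indeed, with at most $|\Sigma|$ characters and runs of length at most $k$, a long string forces some character $c$ to occur in at least $\lceil |x^1|/(|\Sigma|\cdot k)\rceil$ distinct runs; doing this for both strings and pigeonholing on the character gives a character $c$ occurring at least $k$ times in both $x^1$ and $x^2$, hence a common subsequence $c^k$ of length $k$. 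So the kernelization algorithm is: apply the run-shortening rule; if the resulting strings are still longer than the threshold, output a trivial \yes-instance; otherwise output the reduced instance, whose size is $O(|\Sigma|^2 k^2) = O(k^2)$ since $|\Sigma|$ is constant. All steps run in linear time, and the output size is polynomial in $k$, giving the claimed linear-time computable polynomial-size kernel.

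The main obstacle is getting the counting argument in the second step exactly right: shortening runs in $x^1$ does not directly limit the \emph{number} of runs, so one must argue that a string with too many runs and bounded run-length (hence large total length) is automatically a \yes-instance, and this requires that the forced high-multiplicity character be high-multiplicity in \emph{both} strings simultaneously. I would handle this by first shortening runs in both strings, then noting that if the shorter of the two reduced strings still has length $> |\Sigma|^2 k$ then pigeonhole on $(\text{character})$ already yields a character with $\ge k$ occurrences in that string; a symmetric bound on the other string plus one more pigeonhole step on which character is heavy in both finishes it — but one has to be slightly careful that "heavy in $x^1$" and "heavy in $x^2$" can be made to coincide, which is why the threshold is taken as a product $|\Sigma|^2 k^2$ rather than $|\Sigma| k$. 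A secondary (purely bookkeeping) point is to make sure the run-shortening and the length check together are genuinely a single linear-time pass, which is routine.
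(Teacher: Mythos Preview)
Your run-shortening rule is correct and is in fact the $t=1$ case of what the paper does. The gap is in the second step: the claim that ``if both reduced strings exceed the threshold then the instance is a \yes-instance'' is simply false, regardless of how large a polynomial threshold you choose. Take $\Sigma=\{a,b,c,d\}$, any $k\ge 3$, and set $x^1=(ab)^N\, c\, d$ and $x^2=(cd)^N\, a\, b$ for arbitrary $N$. All maximal runs have length~$1$, so run-shortening does nothing; both strings have length $2N+2$, which exceeds any fixed polynomial in $k$ once $N$ is large enough. Yet the longest common subsequence has length exactly~$2$: any common subsequence mixing a letter from $\{a,b\}$ with one from $\{c,d\}$ would have to place them in opposite relative orders in $x^1$ and $x^2$, so the LCS lies entirely in $\{a,b\}$ (where $x^2$ contributes only $ab$) or entirely in $\{c,d\}$ (where $x^1$ contributes only $cd$). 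Hence for $k\ge 3$ this is a \no-instance. Your proposed ``one more pigeonhole step'' cannot rescue this, because the characters heavy in $x^1$ form the set $\{a,b\}$ and those heavy in $x^2$ form the disjoint set $\{c,d\}$; raising the threshold from $|\Sigma|k$ to $|\Sigma|^2 k^2$ (or any other polynomial) does nothing to force these sets to intersect.

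The paper's argument is genuinely different and handles exactly this obstacle by generalising run-shortening from single letters to sub-alphabets of every size~$t\le|\Sigma|$. The rule is: whenever $x^i$ contains a substring of length at least $(k+1)^t$ that uses only $t$ distinct letters \emph{and} contains no substring of length $(k+1)^{t-1}$ over at most $t-1$ letters, truncate it to its first $(k+1)^t$ characters. Safeness holds because the inner condition forces each of the $t$ letters to appear in every window of length $(k+1)^{t-1}$, so the retained prefix already contains every length-$k$ string over those $t$ letters as a subsequence. Exhaustive application, implemented in one linear pass by maintaining for each of the $2^{|\Sigma|}$ subsets $S\subseteq\Sigma$ the length of the current maximal suffix using only letters from $S$, yields a kernel of size $O(k^{|\Sigma|})$---polynomial since $|\Sigma|$ is constant, though with a worse exponent than the $O(k^2)$ you were aiming for.
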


{
\begin{proof}
  Consider the following reduction rule which directly leads a polynomial kernel.
  A \emph{substring} of a string~$x$ is a string~$y$ such that $y = (x[i], x[i+1], \dots, x[j])$ for some~$i < j$.
 
 \begin{rrule}\label{rrule:frequent}
	If, for some~$t \in \N$ and $i \in \{1,2\}$, a string $x^i$ contains a substring~$x'$ of length at least~$(k+ 1)^{t}$ over only $t$ different letters and not containing a substring~$x''$ of length~$(k+1)^{t-1}$ over at most~$t-1$ different letters, then we can delete all but the first~$(k + 1)^{t}$ letters of~$x'$.
 \end{rrule}

  First, we argue that the reduction rule is safe.
  Let $I = (x^1, x^2, k)$ be an instance of \LCS.
  We assume without loss of generality that the reduction rule can be applied to~$x^1$, i.e., $x^1$ contains a substring~$x'$ of length~$(k+1)^t$ which contains only~$t$ different letters, and all substrings of~$x'$ containing only $t' < t$ different letters have length at most~$(k+1)^{t'}$.
  Let~$I_{\reduced} = (x^1_{\reduced}, x^2, k)$ be the instance arising from applying \Cref{rrule:frequent}.
  Clearly, if $I$ is a \no-instance, then also $I_{\reduced}$ is also a \no-instance.
  
  Now assume that $I$ is a \yes-instance, i.e., $x^1 $ and $x^2$ contain a common subsequence~$z$ of length $k$.
  By assumption $x'$ contains no substring of length~$(k + 1)^{t-1}$ which contains at most~$t-1$ different letters.
  Thus, in the first $(k + 1)^{t-1} -1$ letters of~$x'$, each of the $t$ letters appears at least once.
  More generally, for any~$i \in [k]$, each letter appears at least once among $x'[(i-1) \cdot (k + 1)^{t-1} + 1], \dots, x'[i\cdot (k + 1)^{t-1}]$.
  Consequently, each string of length~$k$ over the~$t$ appearing letters is a subsequence of~$x'$.
  Hence, $z$ is also a subsequence of~$x^1_{\reduced}$ (and thus a common subsequence of $x^1_{\reduced}$ and $x^2$), showing that $I_{\reduced}$ is also a \yes-instance.
  Next, we analyze the time needed to exhaustively apply \cref{rrule:frequent}.
 
 \begin{claim}
  For constant alphabet size, \cref{rrule:frequent} can be exhaustively applied in linear time.
 \end{claim}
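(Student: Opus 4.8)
The plan is to argue that Reduction Rule~\ref{rrule:frequent} can be applied exhaustively in linear time by a single left-to-right scan, exploiting the hierarchical structure of the rule over the number~$t$ of distinct letters in a substring. The key observation is that for constant alphabet size~$|\Sigma|$, the number of possible values of~$t$ is bounded by the constant~$|\Sigma|$, and the thresholds $(k+1)^1, (k+1)^2, \dots, (k+1)^{|\Sigma|}$ are fixed once~$k$ is known.

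First I would compute, for each position~$i$ in the string~$x^i$ ($i \in \{1,2\}$), the length of the longest substring ending at position~$i$ that uses only~$t$ distinct letters, for each~$t \in [|\Sigma|]$; this can be maintained incrementally in $O(|\Sigma|)$ time per position using a sliding-window / two-pointer technique (keeping, for each~$t$, the leftmost start of a window ending here with exactly~$t$ distinct letters, updated via a count array over the alphabet). Since $|\Sigma|$ is constant, this whole preprocessing is linear. With this information in hand, I would detect, while scanning, the first position at which a ``bad'' substring~$x'$ as in the rule begins: one of length at least $(k+1)^t$ over exactly~$t$ letters with no internal substring of length $(k+1)^{t-1}$ over at most~$t-1$ letters. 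The nesting condition is exactly the statement that the window never contracts to fewer than $t$ letters over any length-$(k+1)^{t-1}$ stretch, which the precomputed per-$t$ window lengths certify directly.

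Next I would handle the deletion efficiently: rather than physically shifting the array each time (which could be quadratic), I would accumulate the list of intervals to delete in one pass and then produce the reduced string by a single compaction pass at the end — or, alternatively, process greedily from left to right, and after performing a deletion, resume the scan from the end of the retained prefix of~$x'$, noting that the truncation can only affect windows that overlap~$x'$ and these are re-examined in amortized-linear total time because each character is deleted at most once and scanned a constant number of times. One must also check that a deletion cannot create a \emph{new} applicable instance of the rule further left that was previously blocked; since truncation only \emph{removes} letters, any window to the left that now becomes ``bad'' would have been a sub-window of a previously-bad window, handled by processing innermost-first / smallest-$t$-first, so a single carefully ordered pass suffices.

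\textbf{The main obstacle} I expect is the bookkeeping to guarantee true linear time for the deletions together with the re-scanning after each truncation — in particular making precise the amortization argument that each character is examined $O(|\Sigma|)$ times in total despite truncations possibly triggering re-examination of a neighborhood, and verifying the ordering (by increasing~$t$, left to right) under which no truncation spawns an un-handled applicable instance. The correctness-of-safety part is already done above; what remains is purely the runtime accounting, and the cleanest route is to fix the processing order so that whenever the rule fires we truncate and the retained part is provably ``rule-free up to here,'' so the scan pointer only moves forward and the total work telescopes to $O(|\Sigma| \cdot n) = O(n)$.
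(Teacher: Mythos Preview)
Your plan is in the right spirit but leaves exactly the hard part open: you yourself flag the amortization for deletions and re-scanning as the ``main obstacle,'' and nothing in the proposal actually discharges it. Tracking windows by the \emph{number}~$t$ of distinct letters forces you to separately verify the nesting condition (no length-$(k{+}1)^{t-1}$ sub-window on $\le t{-}1$ letters) and to worry about whether truncations create new applicable instances to the left; your suggested fix (process ``innermost-first / smallest-$t$-first'') is asserted rather than argued, and it is not obvious that a single forward pass with that ordering really prevents backtracking.

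The paper sidesteps all of this with a simpler device: index by \emph{subset}~$S\subseteq\Sigma$ rather than by cardinality~$t$. In one left-to-right pass over each string, maintain for every~$S$ a single counter recording the length of the current maximal suffix (of the output built so far) that uses only letters from~$S$; when the next input letter would push some counter above~$(k{+}1)^{|S|}$, simply drop that letter and continue. Two things fall out for free. First, the nesting precondition of the rule is automatic: if the counter for~$S$ reaches its threshold while no counter for a proper subset~$S'\subsetneq S$ has, then the current run over~$S$ contains no sub-run of length~$(k{+}1)^{|S|-1}$ over fewer letters. Second, there is no re-scanning or amortization at all---each input character is touched once, with $O(2^{|\Sigma|})=O(1)$ work to update the counters---so linear time is immediate. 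Switching from ``per-$t$ sliding windows plus deletion bookkeeping'' to ``per-subset counters with on-the-fly dropping'' is the idea you are missing.
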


 \begin{claimproof}
  We process~$x^1$ and~$x^2$ from left to right.
  For each subset~$S$ of the alphabet, we have a variable storing the number of consecutive letters from~$S$ are before the current letter.
  If this number exceeds $(k+1)^{|S|}$ for some letter, then we delete this letter.
  Note that whenever we found a string of length~$(k+1)^{|S|}$ containing only letters from~$S$, then this string does not contain a substring of length~$(k+1)^{|S|-1}$ containing only letters from~$S\setminus \{\ell\}$ for some~$\ell \in S$ as in this case, the number stored for~$S\setminus \{\ell\}$ would have exceeded~$(k+1)^{|S|-1}$.
  
  The above procedure clearly runs in linear time as the alphabet size is constant.
 \end{claimproof}
 
 After the exhaustive application of \cref{rrule:frequent}, the size of the instance is clearly $\mathcal{O} (k^{|\Sigma|})$.
 As $|\Sigma|$ is constant, this is a polynomial-sized kernel.
\end{proof}
}

Analogously to \LCS, we derive similar hardness results for the related problems \textsc{Longest Common Weakly Increasing Subsequence} and \textsc{Longest Common Increasing Subsequence}. 
Both problems can be solved in slightly subquadratic time~\cite{DBLP:conf/stacs/Duraj20,DBLP:conf/isaac/AgrawalG20}.
For \textsc{Longest Common Increasing Subsequence}, our lower bound was already shown by \citet{DBLP:journals/algorithmica/DurajKP19}.
We refer to \citet{DBLP:journals/algorithmica/DurajKP19,DBLP:conf/stacs/Duraj20} for a broader literature review on \textsc{Longest Common Subsequence}.

\begin{lemma}
	\label{lem:increasing-lcs-comp}
 \textsc{Orthogonal Vectors} with logarithmic dimension $(1,1)$-OR-cross-composes into \textsc{Longest Common (Weakly) Increasing Subsequence} parameterized by the length~$n$ of the strings and the length~$k$ of the longest common (weakly) increasing subsequence.
\end{lemma}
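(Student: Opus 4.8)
The plan is to mimic the proof of \Cref{lem:lcs-comp} as closely as possible, replacing the Abboud--Backurs--Williams reduction from \textsc{Orthogonal Vectors} to \LCS{} by the corresponding reduction into \textsc{Longest Common (Weakly) Increasing Subsequence}. Such a reduction is available: for \textsc{Longest Common Increasing Subsequence}, \citet{DBLP:journals/algorithmica/DurajKP19} give a quasi-linear-time reduction from \textsc{Orthogonal Vectors} with $n$ vectors of dimension $d$ to an equivalent instance with strings of length $n\cdot d^{O(1)}$ over $\N$ and solution size $k = O(n\cdot d^{O(1)})$; an analogous construction works for the weakly increasing variant. First I would apply this reduction to each input instance $(I_i^{\text{OV}}, n_i)$ to obtain instances $((x_i^1,x_i^2),(n_i',k_i))$, pad so that all $k_i$ equal a common value $k$ (appending a short strictly increasing run that is forced to be used, so solution size rises by the same amount everywhere), and — crucially — shift the numbers used in block $i$ so that every letter appearing in $(x_i^1,x_i^2)$ is strictly smaller than every letter appearing in $(x_j^1,x_j^2)$ for $j<i$, which is the ``increasing'' analogue of the disjoint-alphabet trick used for \LCS{}.

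The composition then sets $x^1 \coloneqq x_1^1 \circ x_2^1 \circ \dots \circ x_t^1$ and $x^2 \coloneqq x_t^2 \circ x_{t-1}^2 \circ \dots \circ x_1^2$, exactly as in \Cref{lem:lcs-comp}, and outputs $(x^1,x^2,k)$ with parameter $(n^{\text{ICS}},k)$ where $n^{\text{ICS}} = \sum_i n_i'$. Forward correctness is immediate: a length-$k$ common (weakly) increasing subsequence of $(x_i^1,x_i^2)$ is also one of $(x^1,x^2)$. For the reverse direction I would combine the two structural arguments. As in \Cref{lem:lcs-comp}, if $y$ is a common subsequence of $x^1$ and $x^2$ then, looking at the block $i$ containing $y[1]$ in $x^1$, no letter of $y$ lies in a block $j<i$ of $x^1$ (it would precede $y[1]$), and no letter lies in a block $j>i$ because in $x^2$ all of block $j$ precedes block $i$, so such a letter would have to occur before $y[1]$ in $y$; hence $y$ lies entirely in block $i$. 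The increasing/weakly increasing constraint is preserved verbatim since $y$ is a subsequence of the same strings, so $(x_i^1,x_i^2,k)$ is a \yes-instance and therefore $(I_i^{\text{OV}},n_i)$ is. Finally the running-time and parameter bounds are the same computation as in \Cref{lem:lcs-comp}: the construction takes $\widetilde O(t\cdot n_{\max})$ time (using $d_i\in O(\log n_i)$), $n^{\text{ICS}} = \widetilde O(\sum_i n_i)$, and $k = \widetilde O(n_{\max})$, so it is a $(1,1)$-OR-cross-composition.

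The main obstacle is making the disjointness trick compatible with monotonicity: for \LCS{} it suffices that the per-block alphabets are disjoint, but here I additionally need them to be \emph{ordered} consistently with the block order in $x^1$ (block $1$'s numbers smallest, block $t$'s largest) so that (a) within a single block the reduction's own increasing subsequence still works unchanged, and (b) concatenating blocks cannot create a spurious long increasing subsequence that threads through several blocks. One must check that the block order chosen for $x^1$ (ascending) together with the reversed order in $x^2$ still rules out cross-block subsequences — which it does, for exactly the reason above — and that no global-monotonicity artifact lets a short contribution from several blocks add up to length $k$; the argument that $y$ lives in one block already forecloses this. A secondary point to verify is that the padding used to equalize the $k_i$ does not interfere with monotonicity; using a disjoint ordered run of fresh largest-so-far numbers appended identically to both strings of every block handles this. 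I expect these to be routine once the ordering is set up, so the real work is just fixing the numbering scheme correctly and then citing \citet{DBLP:journals/algorithmica/DurajKP19} for the single-instance reduction.
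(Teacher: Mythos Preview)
Your proposal is correct and essentially identical to the paper's proof: the paper also applies the Polak/Duraj--K\"unnemann--Polak reduction per instance, equalizes the~$k_i$ by appending identical increasing runs, shifts block~$i$'s numbers by~$i\cdot C_{\max}$ to make the alphabets disjoint, concatenates in the same reversed order for~$x^2$, and then defers correctness to the argument of \Cref{lem:lcs-comp}. Your two descriptions of the block ordering contradict each other (you first say block~$i$'s numbers are smaller than block~$j$'s for~$j<i$, then say block~1's are smallest), but this is harmless---and your worry about ordering is unnecessary---since the disjoint-alphabet argument already confines any common subsequence to a single block regardless of which direction the shifts go.
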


{
\begin{proof}
 The proof is analogous to the proof of \Cref{lem:lcs-comp}.
 Let~$(I_1^{\text{OV}}, n_1), \dots, (I_t^{\text{OV}}, n_t)$ be instances of \textsc{Orthogonal Vector} with logarithmic dimension.
 We denote
 by $d_i$ the dimension of~$I_i^{\text{OV}}$.
 Polak~\cite{DBLP:journals/ipl/Polak18} and Duraj, K\"unnemann, and Polak~\cite{DBLP:journals/algorithmica/DurajKP19} gave a reduction from \textsc{Orthogonal Vectors} to \textsc{Longest Common (Weakly) Increasing Subsequence} which, given an instance of \textsc{Orthogonal Vectors} with $n$ vectors of dimension~$d$, constructs in~$\widetilde O (n \cdot d^{O(1)})$ time an instance of \textsc{Longest Common (Weakly) Increasing Subsequence} of length~$\widetilde O (n \cdot d^{O(1)})$.
 We apply this reduction to each instance~$(I_i^{\text{OV}}, n_i)$ of \textsc{OV}, giving us an instance~$I_i^{\text{LCIS}} = (x_i^1, x_i^2, k_i)$ with $k_i = n_i \cdot d_i^{O(1)}$ and whose strings have length $\widetilde O (n \cdot d^{O(1)})$.
 We assume that $k_i = k_j$ for every $i, j$ (this can be achieved by appending identical sequences of increasing numbers of appropriate length to strings~$x_i^1$ and $x_i^2$), and set $k:= k_i$.
 Let~$C_{\max} $ be the largest number appearing in any of the instances~$I^{\text{LCIS}}_1$, \dots, $I^{\text{LCIS}}_t$.
 We modify instance~$I_i^{\text{LCIS}}$ by increasing each number by~$i \cdot C_{\max}$.
 We define $x^1 \coloneqq x_1^1 \circ x_2^1 \circ \dots \circ x_t^1$ and~$x^2 \coloneqq x_t^2 \circ x_{t-1}^2 \circ \dots \circ x_1^2$.
 The OR-cross-composition constructs the instance $(x^1, x^2, k)$.
 
 The correctness of the reduction can be proven in analogy to \Cref{lem:lcs-comp}.
 
 The OR-cross-composition can be computed in $\widetilde O (t \cdot \max_{i\in t} n_i \cdot (\max_{i \in [t]} d_i)^{O(1)})  = \widetilde O (t \max_{i \in [t]} n_i)$ time and $k = \max_{i\in t} n_i \cdot (\max_{i\in t} d_i)^{O(1)} = \widetilde O (\max_{i=1}n_i )$ (we use here that $d_i \in O (\log n_i)$).
 Thus, it is an $(1, 1)$-OR-cross-composition.
\end{proof}
}

Combining \Cref{prop:apply-framework} and \cref{lem:lcs-comp,lem:increasing-lcs-comp}, we get the following result:

\begin{corollary}\label{cor:LCSlower}
 Unless the SETH fails, there is no $\widetilde O(n^\gamma + k^\beta)$-time algorithm for \LCS\ or \textsc{Longest Common (Weakly) Increasing Subsequence} parameterized by solution size~$k$ for $1 < \gamma < 2$ and $ \beta < \frac{\gamma}{ \gamma - 1}$.
 
 Unless the SETH fails, there is no $\widetilde O(n^\gamma)$-time, $\widetilde O(k^\beta)$-size kernel for \LCS\ or \textsc{Longest Common (Weakly) Increasing Subsequence} parameterized by solution size for $1 < \gamma < 2$ and $ \beta < \frac{\gamma}{2 \cdot (\gamma - 1)}$.
\end{corollary}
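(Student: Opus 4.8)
The plan is to apply the machinery already assembled in the excerpt essentially as a black box. By \Cref{lem:lcs-comp} and \Cref{lem:increasing-lcs-comp}, \textsc{Orthogonal Vectors} with logarithmic dimension $(1,1)$-OR-cross-composes into \LCS{} (and into \textsc{Longest Common (Weakly) Increasing Subsequence}) with the two parameters being the string length $n_\mathcal{Q}$ and the solution size $k$. By the remark following \Cref{lem:OV}, \textsc{Orthogonal Vectors} restricted to logarithmic dimension is still $2$-OR-decomposable, and an $O(n^{2-\vareps})$-time algorithm for it would refute the SETH~\cite{VW18b}. So I would instantiate \Cref{prop:apply-framework} with $\mathcal{Q}$ being each of these string problems: this immediately gives that an $O(k^\beta + n_\mathcal{Q}^\gamma)$-time algorithm with $1 < \gamma < 2$ and $\beta < \gamma/(\gamma-1)$ refutes the SETH, which is the first claim. (The $\widetilde O$ versus $O$ distinction is harmless here since $\beta < \gamma/(\gamma-1)$ is a strict inequality, so polylogarithmic slack can always be absorbed by slightly shrinking $\beta$.)

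For the second, kernelization claim, I would invoke \Cref{cor:meta-kernel} instead of \Cref{thm:meta-composition}. Here I set $\alpha = 2$ (the OV lower bound exponent), $\comptimeexponent = \compparameter = 1$ (from the $(1,1)$-cross-composition), and for the ``solve $\mathcal{Q}$'' exponent I take $\Qoptalgexponent = 2$, using the textbook $O(n_\mathcal{Q}^2)$-time algorithm for \LCS{}~\cite{CLRS09} (and the near-quadratic algorithms for the increasing variants~\cite{DBLP:conf/stacs/Duraj20,DBLP:conf/isaac/AgrawalG20}, again with the polylog slack absorbed). The running-time exponent $\gamma$ in the kernel plays the role of $\kernelcomputationexponent$ in the corollary. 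The condition $\alpha > \comptimeexponent \cdot \kernelcomputationexponent$ becomes $\gamma < 2$, and the corollary's bound
\[
0 < \kernelsizeconstant < \frac{\kernelcomputationexponent \cdot (\alpha - \comptimeexponent)}{(\kernelcomputationexponent - 1)\cdot \compparameter \cdot \Qoptalgexponent} = \frac{\gamma \cdot (2-1)}{(\gamma-1)\cdot 1 \cdot 2} = \frac{\gamma}{2(\gamma-1)}
\]
reproduces exactly the claimed range $\beta < \frac{\gamma}{2(\gamma-1)}$. So a kernel of that size, computable in $\widetilde O(n_\mathcal{Q}^\gamma)$ time, would yield a subquadratic algorithm for \textsc{Orthogonal Vectors} with logarithmic dimension and hence refute the SETH.

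The only genuine checking to do is that the hypotheses of \Cref{cor:meta-kernel} are literally met: that $n_\mathcal{Q}$ (the string length) is bounded by the input size — it is, trivially — and that the cross-composition of \Cref{lem:lcs-comp} indeed has parameters $(n_\mathcal{Q}, k)$ in the roles of $(n_\mathcal{Q}, \ell_\mathcal{Q})$ demanded by the corollary. Both are immediate from the statements already proved. I do not anticipate a real obstacle here; the corollary was designed precisely to be applied this way, and the arithmetic above is the whole content. If anything, the one place to be a little careful is the bookkeeping around the $\widetilde O$ notation in the cross-composition versus the clean $O$ notation in \Cref{cor:LCSlower}'s statement, but since all the relevant inequalities on $\beta$ and $\gamma$ are strict, any polylogarithmic overhead is swallowed by an arbitrarily small decrease in the exponent $\beta$ — so the $O$-form statement follows from the $\widetilde O$-form machinery without loss.
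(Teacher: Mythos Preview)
Your proposal is correct and follows exactly the route the paper takes: the first claim is \Cref{prop:apply-framework} applied via the $(1,1)$-OR-cross-compositions of \Cref{lem:lcs-comp,lem:increasing-lcs-comp}, and the kernel claim is \Cref{cor:meta-kernel} instantiated with $\alpha=2$, $\comptimeexponent=\compparameter=1$, and $\Qoptalgexponent=2$. If anything, you are more explicit than the paper, which compresses the entire argument (including the kernel part) into the single sentence ``Combining \Cref{prop:apply-framework} and \cref{lem:lcs-comp,lem:increasing-lcs-comp}''.
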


We remark that the running time lower bounds are tight;
the tight upper bound follows from the known $O(k n + n \log n)$ time algorithm for \LCS~\cite{Hir77} or the $O(n k \log\log  n + n \log n)$ time algorithm for \textsc{Longest Common (Weakly) Increasing Subsequence}~\cite{DBLP:journals/jda/KutzBKK11} and \Cref{thm:upper-bound}.

\subsection{Computational Geometry}
\label{sec:cg}

We now turn to problems from computational geometry.
We will denote the Euclidean distance between two points~$p$ and~$q$ in the plane by~$\dist (p, q)$.
We start with some definitions related to the discrete Fr\'echet distance.

For a list of points~$P$, a \emph{tour} through~$P$ is a surjective, non-increasing function~$f_P : [2n]\rightarrow [n]$ such that $f_P (1) = 1$ and $f_P (2n) = n$.
The input of \textsc{Discrete Fr\'echet Distance} contains two lists of points~$P = (p_1, \dots, p_n)$ and~$Q = (q_1, \dots, q_n)$.
A pair~$(f_P, f_Q)$ of tours through~$P$ and $Q$ is called a \emph{traversal}.
We call $\max_{i \in [2n]}\dist (p_{f_P (i)}, q_{f_Q (i)})$ the \emph{width} of the traversal.
We say that a traversal~$(f_P, f_Q)$ \emph{traverses} a sublist~$p_i, p_{i+1}, \dots, p_j$ of~$P$ (or $q_i, q_{i+1}, \dots, q_j$ of~$Q$) if, for some~$\ell \in [2n] $ and $i^* \in [n]$), we have $f_P (\ell) = p_i$ and $f_Q (\ell) = i^*$, and for any~$\ell' \in [j-i]$, we have $f_P (\ell + \ell' ) = {i+\ell'}$ and $f_Q (\ell + \ell') = i^*$ (or $f_P (\ell) = i^*$ and $f_Q (\ell) = i$, and for any $\ell' \in [j-i]$, we have $f_P (\ell + \ell' ) = i^*$ and $f_Q (\ell + \ell' ) = {i +\ell'}$).

\textsc{Discrete Fr\'echet Distance} can be solved in $O(n^2\cdot \log\log n / \log^2n )$ time~\cite{DBLP:journals/siamcomp/AgarwalAKS14}.
\citet{DBLP:conf/focs/Bringmann14} gave a linear-time reduction from \textsc{Orthogonal Vectors}\footnote{Bringmann~\cite{DBLP:conf/focs/Bringmann14} reduces from \textsc{Satisfiability}, but his reduction implicitly reduces \textsc{Satisfiability} to \OV\ and then \OV\ to \textsc{Discrete Fr\'echet Distance}.} to \textsc{Discrete Fr\'echet Distance}, showing that assuming SETH, \textsc{Discrete Fr\'echet Distance} cannot be solved in $O(n^{2-\epsilon})$ time for any~$\epsilon > 0$.
We use this reduction to get a $(1,1)$-OR-composition for the parameter $\max_{i\in [2n]} |f_P (i) - f_Q (i)|$, i.e., the maximum number of time steps which $P$ may be ahead of~$Q$ (or vice versa) in the optimal solution.
We will call the parameter~$\max_{i\in [2n]} |f_P (i) - f_Q (i)|$ the \emph{maximum shift} of the traversal~$(f_P,f_Q)$.
For a more extensive literature review on \textsc{Discrete Fr\'echet Distance}, we refer to~\citet{DBLP:journals/siamcomp/AgarwalAKS14}.

\begin{lemma}
	\label{lem:ov-into-dft}
	\textsc{Orthogonal Vectors} admits a $(1, 1)$-OR-cross-composition into \textsc{Discrete Fr\'echet Distance} parameterized by the length of the input lists of points and the maximum shift in a solution.
\end{lemma}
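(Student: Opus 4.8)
The goal is to give a $(1,1)$-OR-cross-composition from \textsc{Orthogonal Vectors} into \textsc{Discrete Fr\'echet Distance}, following the same template as \Cref{lem:lcs-comp}: apply a known quasi-linear reduction to each input instance, then concatenate the resulting point-lists in a ``zig-zag'' order that forces any low-width traversal to stay inside a single block. First I would invoke Bringmann's reduction~\cite{DBLP:conf/focs/Bringmann14}: given an \OV{} instance with $n$ vectors of dimension $d$, it produces in $\widetilde O(n\cdot d^{O(1)})$ time two point-lists $P_i,Q_i$ of length $\widetilde O(n\cdot d^{O(1)})$ and a threshold which we may normalize to a common value $k_{\mathrm{DFD}}=1$ (rescaling coordinates), so that the \OV{} instance is a \yes-instance iff the discrete Fr\'echet distance of $P_i,Q_i$ is at most $1$. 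Applying this to each of the $t$ input instances yields lists $P_1,\dots,P_t$ and $Q_1,\dots,Q_t$; by padding we may assume each $P_i,Q_i$ has the same length $m=\widetilde O(n_{\max})$, and by translating instance $i$ far away in the plane (say, shifting all of its points by $(10\,i\cdot R,0)$ where $R$ bounds the diameter of a single reduced instance) we ensure points from different blocks are at distance $\gg 1$.

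The composition then outputs $P\coloneqq P_1\circ P_2\circ\cdots\circ P_t$ and $Q\coloneqq Q_1\circ Q_2\circ\cdots\circ Q_t$ (here concatenation of point-lists), with threshold $k_{\mathrm{DFD}}=1$. The new lists have length $n^{\mathrm{DFD}}=t\cdot m=\widetilde O(t\cdot n_{\max})$, matching the $\comptimeexponent=1$ requirement, and the whole construction runs in time $\widetilde O(t\cdot n_{\max}+\sum_i|I_i|)$. For correctness: if some $(I_i^{\mathrm{OV}},n_i)$ is a \yes-instance, then there is a width-$1$ traversal of $P_i,Q_i$; we can follow it by first walking the $P$-pointer across $P_1\circ\cdots\circ P_{i-1}$ while the $Q$-pointer sits at $q_1$ of block $1$\,— but this fails because the points are far apart. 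This is the crux and the reason the zig-zag reversal in \Cref{lem:lcs-comp} is needed; for Fr\'echet the analogous trick is to reverse every other block, i.e.\ take $Q\coloneqq Q_1\circ \overline{Q_2}\circ Q_3\circ\overline{Q_4}\circ\cdots$ together with the matching reversal $P\coloneqq P_1\circ\overline{P_2}\circ\cdots$, and to insert a short ``bridge'' gadget of $O(1)$ points between consecutive blocks whose only purpose is to let both pointers advance together from the end of block $i$ to the start of block $i+1$ at width $\le 1$ while contributing nothing that a spurious traversal could exploit. The forward direction then glues the intra-block traversal of block $i$ to the trivial ``diagonal'' traversals of all other blocks (which have width $\le 1$ because $P_j,Q_j$ were built from the same reduction and the diagonal pairing $p^j_\ell\leftrightarrow q^j_\ell$ has width $\le 1$ — this needs to be checked against Bringmann's construction, or arranged by adding a common prefix/suffix to each $P_j,Q_j$).

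For the backward direction, suppose $(P,Q,1)$ admits a width-$1$ traversal $(f_P,f_Q)$. Because points in block $i$ of $P$ and block $j\ne i$ of $Q$ are at distance $>1$, at every time step the $P$-pointer and the $Q$-pointer must lie in the \emph{same} block (or both on the same bridge gadget). Since both pointers are monotone and must traverse blocks $1,2,\dots,t$ in order, the traversal decomposes into per-block sub-traversals; the sub-traversal restricted to block $i$ is a genuine traversal of $P_i,Q_i$ of width $\le 1$, so $(I_i^{\mathrm{OV}},n_i)$ is a \yes-instance for that $i$. Finally, the maximum-shift parameter: within a single block the shift is at most $m=\widetilde O(n_{\max})$, and the bridge gadgets are designed so the two pointers cross them in lockstep, so the overall maximum shift is $\widetilde O(n_{\max})$, giving $\compparameter=1$ as required for a $(1,1)$-OR-cross-composition.

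**Main obstacle.** The delicate point is the second bullet: controlling the maximum shift. Naively concatenating forces one pointer to ``wait'' at the boundary of a block while the other catches up, which would blow the shift up to $\Theta(t\cdot n_{\max})$ and kill the parameter bound. Getting $\compparameter=1$ requires that the block separation (far-apart translation) plus the bridge gadgets let both pointers advance essentially simultaneously across block boundaries; verifying that such a gadget exists and that it does not create spurious width-$1$ traversals that skip the ``real'' work inside a block is the heart of the proof. I expect this is handled exactly as the alphabet-disjointness + reversal trick is handled in \Cref{lem:lcs-comp}, adapted to the geometric setting by spatial separation rather than disjoint alphabets.
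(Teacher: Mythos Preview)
Your proposal has a genuine structural gap: the construction you describe yields an \emph{AND}-composition, not an OR-composition. If every block of $P$ is translated far from every other block of $Q$, then a width-$1$ traversal must, as you correctly argue, stay inside the same block at every time step; but since the traversal is monotone and must visit all blocks in order, this forces a width-$1$ sub-traversal of \emph{every} pair $(P_i,Q_i)$, i.e.\ every $I_i^{\mathrm{OV}}$ would have to be a \yes-instance. Your forward direction tries to dodge this by assuming a ``trivial diagonal'' traversal of width $\le 1$ for the non-solution blocks, but Bringmann's reduction is designed precisely so that \no-instances have discrete Fr\'echet distance strictly larger than $1$; if a diagonal pairing of width $\le 1$ always existed, the OV-to-DFD reduction itself would be vacuous. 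The zig-zag reversal analogy from \Cref{lem:lcs-comp} does not carry over either: in LCS the reversal works because a subsequence can use an \emph{arbitrary} subset of positions, whereas a Fr\'echet traversal must be surjective and monotone on both lists, so reversing blocks does not let one pointer ``skip'' a block.

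The paper's proof takes a genuinely different route. It does \emph{not} spatially separate the blocks; all instances live in the same bounded region and share the same special points $s_A,s_B,c_A,c_B,\ell_A,\ell_B$ extracted from Bringmann's construction (\Cref{prop:bringmann}). Between consecutive blocks it inserts the anchor points $\ell_A^i,c_A^i$ in $P_A$ and $\ell_B^i$ in $P_B$, and appends one extra \no-instance $I_{t+1}$. The asymmetry of \Cref{cond:cA,cond:cB} lets the $P_A$-pointer sweep an entire block while the $P_B$-pointer sits at $s_B=c_B$, so \no-blocks can be ``absorbed'' one side at a time; the blocking points $\ell_A,\ell_B$ (\Cref{cond:lA,cond:lB}) together with \Cref{cond:no} force that at some index $i^*$ the two lists $P_A^{i^*}$ and $P_B^{i^*}$ are traversed simultaneously, which is exactly the OR. The maximum-shift bound then falls out because the two pointers are never more than one block apart. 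The missing idea in your proposal is this asymmetric ``absorb all but one block'' mechanism; spatial separation alone cannot produce it.
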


{
Our OR-cross-composition is based on the reduction from \OV\ to \textsc{Discrete Fr\'echet Distance} by Bringmann~\cite{DBLP:conf/focs/Bringmann14} which has the following properties.

\begin{proposition}[{Bringmann~\cite[Section III.A]{DBLP:conf/focs/Bringmann14}}]\label{prop:bringmann}
  Given an instance~$((A, B), n)$ of \OV\, one can compute in $\widetilde O(n)$ time an instance $(P_A, P_B, 1)$ of \textsc{Discrete Fr\'echet Distance} such that
  \begin{enumerate}
   \item the Fr\'echet distance of~$P_A$ and~$ P_B$ is 1 if and only if $((A, B), n)$ is a \yes-instance,\label{cond:equiv}
   \item $P_A$ starts at $s_A \coloneqq (-\frac{1}{3}, \frac{1}{5})$ and $s_A$ appears only at the start of $P_A$,\label{cond:sA}
   \item $P_B$ starts at $s_B \coloneqq (-\frac{1}{3}, 0)$,\label{cond:sB}
   \item each point from~$P_B$ has distance at most 1 from~$c_A \coloneqq (0, \frac{1}{3})$,
   \label{cond:cA}
   \item each point from~$P_A$ has distance at most 1 from~$c_B \coloneqq s_B$,
   \label{cond:cB}
   \item $\ell_A \coloneqq (-\frac{4}{3}, 0)$ has distance larger than~1 to all points from~$Q\setminus \{s_B\}$, and
   \label{cond:lA}
   \item $\ell_B \coloneqq (-\frac{4}{3}, \frac{1}{5})$ has distance larger than~1 to all points of~$P_A\setminus \{s_A\}$.
   \label{cond:lB}
   \item 
   one point in~$P_B$ has distance more than 1 from~$s_A$.
\label{cond:no}
  \end{enumerate}
\end{proposition}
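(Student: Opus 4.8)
The plan is to adopt Bringmann's construction from \cite[Section III.A]{DBLP:conf/focs/Bringmann14} essentially verbatim and check that it meets all eight requirements: only \cref{cond:equiv} is substantial, while \cref{cond:sA,cond:sB,cond:cA,cond:cB,cond:lA,cond:lB,cond:no} are ``syntactic'' and follow by inspecting the explicit coordinates of the gadget points. Recall the three building blocks. A \emph{coordinate gadget} assigns to each coordinate value four points placed in a tiny box near the origin — two used by $P_A$, two by $P_B$ — so that, writing $0_A,1_A$ for the $P_A$-points and $0_B,1_B$ for the $P_B$-points, one has $\dist(1_A,1_B)>1$ while the three other pairwise distances are at most~$1$; thus the $A$-side and $B$-side of coordinate~$i$ are compatible exactly when $a[i]\cdot b[i]=0$. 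Concatenating the coordinate gadgets with rigid connector points yields \emph{vector gadgets} $VG_A(a)$, $VG_B(b)$ for which a width-$1$ traversal of $(VG_A(a),VG_B(b))$ exists iff $a\perp b$ (the connectors force the traversal to be the ``diagonal'' one, after which each coordinate gadget must be matched compatibly). Finally the vector gadgets are combined by a doubly nested OR-gadget: $P_A$ lists $VG_A(a_1),\dots,VG_A(a_n)$ separated by copies of the waiting point $c_A=(0,\tfrac13)$, $P_B$ lists $VG_B(b_1),\dots,VG_B(b_n)$ separated by copies of the waiting point $c_B=s_B=(-\tfrac13,0)$, and the four designated points $s_A=(-\tfrac13,\tfrac15)$, $s_B$, $\ell_A=(-\tfrac43,0)$, $\ell_B=(-\tfrac43,\tfrac15)$ anchor the two curves at their common start.

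For \cref{cond:equiv} I would argue both directions explicitly. If $((A,B),n)$ is a \yes-instance, fix $a\in A$, $b\in B$ with $a\perp b$ and build a width-$1$ traversal of $(P_A,P_B)$: keep $P_B$ parked at $s_B$ while $P_A$ reads its entire prefix up to $VG_A(a)$ (legal since, by \cref{cond:cB}, every point of $P_A$ is within distance~$1$ of $c_B=s_B$), then park $P_A$ at $c_A$ while $P_B$ advances to $VG_B(b)$ (legal by \cref{cond:cA}), traverse $(VG_A(a),VG_B(b))$ along its diagonal (legal since $a\perp b$), and finish symmetrically. Conversely, given any width-$1$ traversal, the guard points pin down its global shape: $\ell_B$ together with \cref{cond:lB} forces $P_B$'s first move to happen while $P_A$ is still at $s_A$, so $P_A$ runs its whole prefix against $s_B$; then the waiting points $c_A$, $c_B$ absorb all but one block $VG_A(a)$ on the $P_A$-side and all but one block $VG_B(b)$ on the $P_B$-side; and the vector-gadget analysis plus the inequalities $\dist(1_A,1_B)>1$ force $a[i]\cdot b[i]=0$ for all~$i$, i.e.\ $a\perp b$. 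This backward direction is exactly Bringmann's correctness argument and is the main obstacle: carrying it out honestly requires re-deriving the rigidity of the coordinate gadgets and of the OR-gadget, which is where all the work sits; nothing new is needed beyond his paper.

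The remaining properties are immediate from the coordinates. \cref{cond:sA} and \cref{cond:sB}: $s_A$ and $s_B$ are by construction the first points of $P_A$ and $P_B$, and $s_A$ is a dedicated point at $(-\tfrac13,\tfrac15)$ that is not reused inside any gadget, so it occurs only at the start. \cref{cond:cA} and \cref{cond:cB}: all points of $P_B$ (coordinate-gadget points, connectors, and copies of $c_A$) lie in a small region, hence within distance~$1$ of $c_A=(0,\tfrac13)$, and symmetrically every point of $P_A$ lies within distance~$1$ of $c_B=s_B$ — these bounds are exactly the design constraint that makes the nested OR-gadget work, so one just checks Bringmann's placement respects them. \cref{cond:lA} and \cref{cond:lB}: every point of $P_B$ other than $s_B$ has $x$-coordinate at least $-\tfrac13$ and differs from $(-\tfrac13,0)$, so its distance to $\ell_A=(-\tfrac43,0)$ strictly exceeds $|{-\tfrac43}-({-\tfrac13})|=1$; the same computation with second coordinate $\tfrac15$ gives \cref{cond:lB}. \cref{cond:no}: exhibit one concrete point of $P_B$ (e.g.\ a coordinate-gadget point, whose distance from $(-\tfrac13,\tfrac15)$ one computes to exceed~$1$), which shows $P_B$ cannot be traversed while staying matched with $s_A$. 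Combining these verifications with the trivial $\widetilde O(n)$ running-time bound — each of the $O(n)$ gadgets has constant size and a $\mathrm{poly}(d)$ description — finishes the proof.
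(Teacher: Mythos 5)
Your proposal matches the paper's treatment: the paper gives no separate proof of this proposition, but simply imports Bringmann's construction by citation and relies on the fact that \cref{cond:equiv} is his correctness theorem while \cref{cond:sA,cond:sB,cond:cA,cond:cB,cond:lA,cond:lB,cond:no} are read off from the explicit coordinates of his gadget points — exactly the split you make. Your sketch is, if anything, more detailed than the paper's (which offers none), and the only caution is that your informal description of the gadget geometry (e.g.\ ``tiny box near the origin'') should not be taken literally, since the actual verification must use Bringmann's true coordinates, as you indeed acknowledge.
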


\begin{proof}[Proof of \cref{lem:ov-into-dft}]
Let $(I_1^{\text{OV}} = (A_1, B_1), n_1), \dots, (I_t^{\text{OV}}= (A_t , B_t), n_t)$ be instances of \textsc{Orthogonal Vectors}.
To simplify notation, we assume that $I_1^{\text{OV}}$, \dots, $I_t^{\text{OV}}$ all are of the same dimension~$d$ and that there is some~$n \in \N$ such that~$n_i = n$ for all~$i \in [t]$.
Further, we assume that no two vectors of different instances of~$I_t^{\text{OV}}$ are orthogonal.
This can be achieved by increasing the dimension by $2\log t$ and using the additional dimensions to add to each~$a \in A_i$ the binary encoding of~$i$ and its ``inverse'' (i.e., replacing 0's by 1's and vice versa).
To each $b \in B_i$, we append the ``inverse'' of the binary encoding of~$i$ and the binary encoding of~$i$.

We apply the reduction from \cref{prop:bringmann} to each instance~$(I_i^{\text{OV}}, n)$, resulting in an instance~$I_i^{\text{DFD}} = (P_A^i, P_B^i, 1)$ of \textsc{Discrete Fr\'echet Distance}.
We set $n_{i}^{\text{DFD}} := |P_A^i|$ to be the length of~$P_A^i$.
For~$i \in [t]$, let $s_A^i \coloneqq s_A, s_B^i \coloneqq s_B, c_A^i \coloneqq c_A, \ell_A^i \coloneqq \ell_A, \ell_B^i \coloneqq \ell_B$ (where $s_A, s_B, c_A, c_B, \ell_A$, and $\ell_B$ are be defined as in \Cref{prop:bringmann}).
Further, let $I_{t+1}^{\text{DFD}}$ be an instance arising from applying \Cref{prop:bringmann} to an arbitrary \no-instance of \OV.
From this, we construct an instance~$(I^{\text{DFD}} = (P_A, P_B, 1), (2t + \sum_{i=1}^{t+1} n_i^{\text{DFD}} , 2 \max_{i\in [t+1]} n_i^{\text{DFD}}))$ of \textsc{Discrete Fr\'echet Distance} parameterized by the length of the point lists and the maximum shift as follows.
We set $P_A \coloneqq \bigcirc_{i \in [t]} (P_A^i\circ \ell_A^i \circ c_A^i) \circ P_A^{t+1}$ and $P_B \coloneqq \bigcirc_{i \in [t+ 1]} (P_B^i \circ \ell_B^i) \circ c_B$.
To ensure that~$P_A$ and $P_B$ have the same number of points, we append an appropriate number of copies of~$c_B$ at the end of~$P_B$.

On an intuitive level, the idea of the construction is as follows: 
The asymmetry in the construction allows to traverse an arbitrary part of~$P_A$ while being in~$s_B^i = c_B$ in~$P_B$ for some~$i \in [t+1]$. 
However, the converse is not true and, thus, traversing in~$P_B$ as far as possible is desirable.
In order to traverse~$\ell_B^i$ we have to be in~$s_A^j$ or~$\ell_A^{j}$ for some~$i \in [t+1]$. 
We cannot stay in~$s_A^j$ since by \Cref{cond:no} at least one point in~$P_{B_i}$ has distance larger than one from~$s_A^i$.
Hence, there are two options: 
The traversal contains~$(s_A^j,\ell_B^i)$ for $i,j \in [t+1]$.
As \Cref{cond:no} forces us out of the~$s_A$-points, we will have $j>i$ when following this option.
Thus, only using this option results in getting stuck at the end as the last~$\ell_{B_{t+1}}$ cannot be traversed.
To prevent this and ``get ahead'' in~$P_B$, we have to use the second option at some point:
The traversal contains~$(\ell_A^i,\ell_B^i)$ for some~$i \in [t]$.
We prove that in this case~$P_{A_i}$ and~$P_{B_i}$ have to be traversed simultaneously, that is, $I_i^{\text{DFD}}$ is a yes-instance.
The benefit of the second option is that we can follow~$(\ell_A^i,\ell_B^i)$ by $(c_A^i,s_B^{i+1})$ and then traverse~$P_{B_{i+1}}$. 
Then, being ``ahead'' in~$P_B$, we can use the first option for the rest of the traversal. 

We now show the correctness of the OR-cross-composition.
\begin{claim}\label{claim:hin}
	If there is some~$i^* \in [t]$ such that there are $a \in A_{i^*}$ and $b \in B_{i^*}$ which are orthogonal, then there is a traversal~$T$ of $(P_A, P_B)$ of width at most 1 and maximum shift~$2 \max_{i\in [t+1]} n_i^{\text{DFD}}$.
\end{claim}

\begin{claimproof}
	We define a traversal as follows:
	We start with~$(s_A^1, s_B^1)$.
	For each $i < i^*$, we define the following traversal~$T_i$ (which first traverses~$P_A^i \circ \ell_A^i \circ c_A^i$ and afterwards $P_B^i \circ \ell_B^i$):
	Traverse~$P_A$ until the end of~$P_A^i$.
	By \Cref{cond:cB}, this traversal so far has width~1.
	Afterwards, go to~$(\ell_A^i, s_B^i)$, followed by~$(c_A^i, s_B^i)$.
	Then traverse~$P_B^i$.
	By \Cref{cond:cA}, this traversal still has width~1.
	We continue with~$(s_A^{i+1}, \ell_B^i)$ and~$(s_A^{i+1}, s_B^{i+1})$.
  
  By \Cref{cond:equiv}, there is a traversal of~$(P_A^{i^*}, P_B^{i^*})$ of width~1.
  We continue with this traversal.
  We continue with~$(\ell_A^{i^*}, \ell_B^{i^*})$ and $(c_A^{i^*}, s_B^{i^*+1})$.
  Then, we traverse~$P_B^{i^*+1}$.
  This is followed by~$(s_A^{i^*+1}, \ell_B^{i^*+1})$ and $(s_A^{i^* + 1}, s_B^{i^* +2})$.
  Thus, we are now at the start of~$P^{i^*+1}_A$ in $P_A$ and~$P_B^{i^*+2}$ in $P_B$.
  
  Afterwards, for $t \ge i > i^*$, we traverse~$P_A^{i}$ and $P_B^{i+1}$ as in the first step:
  First, traverse~$P_A^i$ until the end of~$P_A^i$.
  By \Cref{cond:cB}, this traversal so far has width~1.
  Afterwards, go to~$(\ell_A^{i}, s_B^{i+1})$, followed by~$(c_A^i, s_B^{i+1})$.
  Then traverse~$P_B^{i+1}$.
  By \Cref{cond:cA}, this traversal still has width~1.
  We continue with~$(s_A^{i+1}, \ell_B^{i+1})$ and $(s_A^{i+1}, s_B^{i+2})$.
  
  Finally, we are at $s_A^{t+1}$ and the end of~$P_B^{t+1}$.
  We go to~$(s_A^{t+1}, \ell_B^{t+1})$, followed by~$(s_A^{t+1}, c_B)$ and afterwards traverse~$P_A^{t+1}$.
  By \Cref{cond:cB}, the constructed traversal still has width~1.
  
  Thus, we found a traversal of~$(P_A, P_B)$ of width at most~1.
  It is easy to verify that this traversal has a maximum shift of~$2 \max_{i\in [t+1]} n_i^{\text{DFD}}$.
\end{claimproof}

\begin{claim}\label{claim:rueck}
 If there is no~$i^* \in [t]$ such that there are $a \in A_{i^*}$ and $b \in B_{i^*}$ which are orthogonal, then the Fr\'echet distance of~$P_A$ and $P_B$ is larger than 1.
\end{claim}

\begin{claimproof}
	Assume towards a contradiction that there is a traversal~$T$ of width 1.
	First, we show that $T$ does not reach the first point~$s_B^i$ of~$P_B^i$ for each~$i \in [t + 1]$ before it reaches the first point~$s_A^i$ of~$P_A^i$.
	
	Assume towards a contradiction that~$s_B^i$ is reached before~$s_A^i$ and that~$i$ is minimal.
	Clearly~${i > 1}$ as~$T$ starts with~$(s_A^1,s_B^1)$.
	By choice of~$i$ we have that~$s_A^{i-1}$ is reached before or at the same time as~$s_B^{i-1}$.
	Note that before~$s_B^i$ is reached (which is before~$s_A^i$ is reached) $\ell_B^{i-1}$ has to be reached.
	By \Cref{cond:lB}, the only points from~$P_A$ at distance at most one to~$\ell_B^{i-1}$ are~$\ell_A^j$ and~$s_A^j$ for~$j \in [t]$.
	By choice of~$i$ it follows that~$(s_A^{i-1},\ell_B^{i-1})$ or~$(\ell_A^{i-1},\ell_B^{i-1})$ is part of~$T$.
	We show contradictions in both cases:
	Since~$s_A^{i-1}$ is reached before or simultaneously to~$s_B^{i-1}$, it follows from \Cref{cond:no} that~$(s_A^{i-1},\ell_B^{i-1})$ is not part of~$T$.
	Note that~$(\ell_A^{i-1},\ell_B^{i-1})$ cannot be part of~$T$ as the immediate predecessor consists of the respective last points in~$P_{A_i}$ and~$P_{B_i}$:
	As~$s_A^{i-1}$ is reached before~$s_B^{i-1}$, this would imply~$P_{A_i}$ and~$P_{B_i}$ having a Fr\'echet distance of at most~1, contradicting \Cref{cond:equiv} and~$P_{A_i}$ and~$P_{B_i}$ containing no orthogonal vector.

	Thus, for each~$j \in [t + 1]$ we have that $T$ reaches~$s_A^{j}$ before $s_B^{j}$.
	\Cref{cond:equiv} implies that there is no traversal of width~1 that reaches the last point of~$P_B^{t+1}$ as well as the last point of $P_A^{t+1}$.
	Thus, $T$ must reach the last point of $P_B^{t+1}$ while only partially traversing $P_A^{t+1}$.
	Afterwards, $T$ must move to $\ell_B^{t+1}$, implying that $T$ is at $s_A^{t+1}$ at this point (as all other points have distance larger than 1 to~$\ell_B^{t+1}$).
	However, by \Cref{cond:no}, there is some point in $P_B^{t+1}$ of distance larger than 1 to~$s_A^{t+1}$, a contradiction to $T$ traversing through~$P_B^{t+1}$ while staying at~$s_A^{t+1}$.
\end{claimproof}

The OR-cross-composition can clearly be computed in $\widetilde O(t \max_{i \in [t]} n_i )$ time.
The correctness follows from \Cref{claim:hin,claim:rueck}.
The maximum shift of the constructed instance is $O(n) $ by \Cref{claim:hin}.
Consequently, it is an $(1,1)$-OR-composition.
\end{proof}
}

Combining \Cref{prop:apply-framework} and \Cref{lem:ov-into-dft}, yields the following:

\begin{corollary}\label{cor:DFDlower}
	Unless the SETH fails, there is no $\widetilde O(n^\gamma + \parameter^\beta)$-time algorithm \textsc{Discrete Fr\'echet Distance} parameterized by maximum shift~$\parameter$ for $1 < \gamma < 2$ and $ \beta < \frac{\gamma}{ \gamma - 1}$.
 
	Unless the SETH fails, there is no $\widetilde O(n^\gamma)$-time, $\widetilde O(\parameter^\beta)$-size kernel for \textsc{Discrete Fr\'echet Distance} parameterized by maximum shift for $1 < \gamma < 2$ and $ \beta < \frac{\gamma}{2 \cdot (\gamma - 1)}$.
\end{corollary}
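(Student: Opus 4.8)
The plan is to read the corollary off from the machinery already assembled: \Cref{lem:ov-into-dft} supplies the missing cross-composition, and \Cref{prop:apply-framework} (respectively \Cref{thm:meta-composition} and \Cref{cor:meta-kernel}) turns it into the two lower bounds. Concretely, for the running-time part I would instantiate \Cref{prop:apply-framework} with $\mathcal{Q} = {}$\textsc{Discrete Fr\'echet Distance}, taking $n_\mathcal{Q}$ to be the length of the two input point lists and $\parameter_\mathcal{Q}$ the maximum shift. Its only hypothesis is a $(1,1)$-OR-cross-composition from \textsc{Orthogonal Vectors} parameterized by the number of vectors into $\mathcal{Q}$ with exactly these parameters, which is precisely \Cref{lem:ov-into-dft}. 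Hence any $\widetilde O(n^\gamma + \parameter^\beta)$-time algorithm for \textsc{Discrete Fr\'echet Distance} with $1 < \gamma < 2$ and $\beta < \nicefrac{\gamma}{\gamma-1}$ would refute the SETH, which is the first claim.

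For the kernelization part I would bypass \Cref{prop:apply-framework} and apply \Cref{cor:meta-kernel} directly with the same ingredients. Here $\mathcal{P} = {}$\textsc{Orthogonal Vectors}, which is $2$-OR-decomposable by \Cref{lem:OV}, so $\alpha = 2$; the cross-composition of \Cref{lem:ov-into-dft} is a $(1,1)$-OR-cross-composition, so $\nu = \mu = 1$; and \textsc{Discrete Fr\'echet Distance} admits an $\widetilde O(n^2)$-time algorithm~\cite{DBLP:journals/siamcomp/AgarwalAKS14} with $n_\mathcal{Q}$ bounded by the input size, so $\xi = 2$. The precondition $\alpha > \nu\cdot\gamma$ becomes $\gamma < 2$, and the admissible range $0 < \beta < \frac{\gamma\cdot(\alpha-\nu)}{(\gamma-1)\cdot\mu\cdot\xi}$ collapses to $0 < \beta < \frac{\gamma}{2(\gamma-1)}$, exactly the range in the statement. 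Thus an $\widetilde O(n^\gamma)$-time, $\widetilde O(\parameter^\beta)$-size kernel in that range, followed by the $\widetilde O(n^2)$-time algorithm, would solve \textsc{Orthogonal Vectors} in $O(n^{2-\vareps})$ time and refute the SETH.

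Since all the creative work lives in \Cref{lem:ov-into-dft}, I do not expect a genuine obstacle, but the proof should flag two routine points. First, when \Cref{thm:meta-composition}/\Cref{cor:meta-kernel} feeds the output of the $2$-OR-decomposition of \textsc{Orthogonal Vectors} into the cross-composition, the number $t$ of sub-instances and their common size $n_{\max}$ are both polynomial in the size of the original \textsc{Orthogonal Vectors} instance; one should check that the dimension increase of $O(\log t) = O(\log n_{\max})$ made inside the proof of \Cref{lem:ov-into-dft} (to keep vectors of distinct sub-instances non-orthogonal) therefore stays polylogarithmic, so that the cross-composition truly runs in $\widetilde O(t\cdot n_{\max} + \sum_i |I_i|)$ time and yields $n_\mathcal{Q} \in \widetilde O(t\cdot n_{\max})$ and $\parameter_\mathcal{Q} \in \widetilde O(n_{\max})$, i.e.\ is genuinely $(1,1)$. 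Second, one should confirm that the $n^{2-o(1)}$ conditional lower bound invoked as the ``$\alpha = 2$ barrier'' applies to \textsc{Orthogonal Vectors} in the (poly)logarithmic-dimension regime used in \Cref{lem:ov-into-dft}; both facts are standard, and everything else is just substitution of the constants $\alpha = \xi = 2$ and $\nu = \mu = 1$ into \Cref{thm:meta-composition} and \Cref{cor:meta-kernel}.
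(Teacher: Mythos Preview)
Your proposal is correct and follows exactly the paper's approach: the paper simply states that the corollary follows by combining \Cref{prop:apply-framework} and \Cref{lem:ov-into-dft}, and you have spelled out this combination (including the kernel part via \Cref{cor:meta-kernel} with $\alpha=\xi=2$, $\nu=\mu=1$) in appropriate detail. The two routine checks you flag are indeed routine and do not constitute gaps.
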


It is easy to extend the $O(n^2)$-algorithm for \textsc{Discrete Fr\'echet Distance}~\cite{EiterM94} to run in~$O(n \parameter)$ time, where~$\parameter$ is the minimal (over all solutions) maximum shift of an optimal solution~\cite{DBLP:conf/spire/Barbay18}.
This together with \Cref{thm:upper-bound} shows that the running time lower bound from \cref{cor:DFDlower} is tight.

We now give lower bounds for another problem from computational geometry, \textsc{Planar Motion Planning}, based not on the hardness of \textsc{Orthogonal Vectors} but on the hardness of \textsc{3-Sum}.
\textsc{Planar Motion Planning} can be solved in $ O(n^2 )$ time~\cite{DBLP:conf/swat/Vegter90}.
Assuming the \textsc{3-Sum} conjecture, \textsc{Planar Motion Planning} cannot be solved in $O(n^{2-\epsilon})$ for any~${\epsilon > 0}$~\cite{DBLP:journals/comgeo/GajentaanO95}.
We say a segment is in the \emph{vicinity} of another segment if they have distance at most the length of the rod.

\begin{lemma}\label{lem:pmp-comp}
 \textsc{3-Sum} $(1,1)$-OR-cross-composes into \textsc{Planar Motion Planning} parameterized by the maximum number of segments any segment has in its vicinity.
\end{lemma}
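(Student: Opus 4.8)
The plan is to follow the template of \cref{lem:lcs-comp} and \cref{lem:ov-into-dft}: run the known fine-grained reduction on every input instance and then glue the resulting motion-planning instances into a single one whose free space connects source to goal iff the free space of at least one small instance does. First I would distil, analogously to \cref{prop:bringmann}, the features of the \textsc{3-Sum}-hardness reduction of Gajentaan and Overmars~\cite{DBLP:journals/comgeo/GajentaanO95} that I need: given a \textsc{3-Sum} instance with $n$ numbers it produces, in $\widetilde O(n)$ time, a rod together with a set~$\Gamma$ of $O(n)$ non-crossing axis-parallel segments, a source placement~$s$ and a goal placement~$g$, such that (i)~the rod can be moved from~$s$ to~$g$ iff the instance is a \yes-instance; (ii)~$s$ lies in a ``docking corridor'' on the left boundary and $g$ in a docking corridor on the right boundary of an axis-parallel bounding box, outside of which the rod never has to move; and (iii)~the rod's length and the box's side length are functions of $n$ (and of the bit-length of the numbers) only. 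By padding, we may assume all input instances $(I_1,n_1),\dots,(I_t,n_t)$ share the same size $n\coloneqq n_{\max}$ and the same number range, so the reductions yield \emph{the same} rod and boxes $\Gamma_1,\dots,\Gamma_t$ of the same diameter.

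Next I would assemble the output instance. Place $\Gamma_1,\dots,\Gamma_t$ with pairwise horizontal distance $D$, chosen (as some $D\in O(n)$, since both the box diameter and the rod length are $O(n)$) large enough that no segment of $\Gamma_i$ lies within the rod's length of any segment of $\Gamma_j$ for $i\neq j$. Add a long horizontal \emph{selection corridor} $C_{\mathrm{sel}}$ running below all boxes, and from it, for each $i$, a vertical \emph{branch corridor} of length exceeding the rod length leading up to the docking corridor at $s_i$; symmetrically add a \emph{collection corridor} $C_{\mathrm{col}}$ above all boxes reached from each $g_i$ by a vertical branch corridor. At each place where the rod must change direction, widen the corridor into a room just large enough to rotate the rod. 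The global source $S$ is the left end of $C_{\mathrm{sel}}$ and the global goal $G$ is an end of $C_{\mathrm{col}}$; the two corridors are joined to nothing except the box interiors, and both have their remaining ends walled off, so the only routes from $C_{\mathrm{sel}}$ to $C_{\mathrm{col}}$ pass through some $\Gamma_i$ from $s_i$ to $g_i$. To control the parameter, every long corridor wall is subdivided into segments of length $O(D)=O(n)$.

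Correctness of the OR is then immediate: if $(I_{i^*},n_{i^*})$ is a \yes-instance, the rod slides from $S$ along $C_{\mathrm{sel}}$ to the $i^*$-th branch, enters $\Gamma_{i^*}$, traverses it by~(i), leaves at $g_{i^*}$, and slides along $C_{\mathrm{col}}$ to $G$; conversely any motion from $S$ to $G$ leaves $C_{\mathrm{sel}}$ and later reaches $C_{\mathrm{col}}$, hence (by the no-shortcut property) traverses some $\Gamma_i$ entirely, certifying via~(i) that $(I_i,n_i)$ is a \yes-instance. The instance has $n_\mathcal{Q}=t\cdot O(n)+O(t)=\widetilde O(t\cdot n_{\max})$ segments and is built in $\widetilde O(t\cdot n_{\max})$ time. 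Its parameter $\ell_\mathcal{Q}$, the largest number of segments within the rod's length of a fixed segment, is $O(n)$: a segment of $\Gamma_i$ sees the $O(n)$ other segments of $\Gamma_i$ plus $O(1)$ segments of its branch corridor, and no segment of $\Gamma_j$, $j\neq i$, by the choice of $D$; a corridor-wall segment of length $O(D)$ sees only its $O(1)$ neighbouring wall pieces, $O(1)$ branch corridors and $O(1)$ rotation rooms. Hence $\ell_\mathcal{Q}\in\widetilde O(n_{\max})$, and thus this is an $(1,1)$-OR-cross-composition.

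The step I expect to be hardest is the ``no-shortcut'' guarantee together with rod manoeuvrability: one has to design the junctions among $C_{\mathrm{sel}}$, $C_{\mathrm{col}}$, the branch corridors and the docking corridors precisely enough that the (potentially long) rod can be turned around every corner, that the free space connects $S$ to $G$ \emph{only} via a full traversal of some $\Gamma_i$, and that no junction gadget places more than $O(n)$ segments into any vicinity. A smaller obstacle is obtaining properties~(i)--(iii) from~\cite{DBLP:journals/comgeo/GajentaanO95} in this black-box form; if the published construction is not already axis-parallel or does not already present a single left entrance and single right exit, one first reworks it into that shape, which is routine for reductions of this kind.
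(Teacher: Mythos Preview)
Your proposal is correct in outline but takes a more laborious route than the paper.  The paper exploits the concrete shape of the Gajentaan--Overmars instance: the rod starts in a large \emph{upper} rectangle, must traverse a narrow passage encoding the \textsc{3-Sum} instance, and ends in a large \emph{lower} rectangle.  Given this, the paper's composition is simply to \emph{identify} all the upper rectangles (widening as necessary) and all the lower rectangles, laying the~$t$ narrow passages side by side between them; the rod can then freely roam the common upper room, attempt any passage, and reach the common lower room iff some passage admits a crossing.  No extra corridors, branch junctions, or rotation rooms are needed, and the only step to bound the parameter is to chop the long walls of the two big rectangles into pieces no longer than the rod.

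Your version instead keeps the~$t$ reduced instances disjoint and threads selection/collection corridors among them, which forces you to design turning gadgets and to argue a no-shortcut property---exactly the part you flag as hardest.  That machinery is avoidable once you use the actual geometry of~\cite{DBLP:journals/comgeo/GajentaanO95} (upper room / passage / lower room) rather than the abstracted ``left docking corridor / right docking corridor'' interface you posit in~(ii); the latter does not match the published reduction and would indeed require the reworking you anticipate.  Both approaches yield a $(1,1)$-OR-cross-composition with the same parameter bound, but the paper's buys simplicity by merging shared free space directly, whereas yours buys generality (it would work for any hardness reduction with a single entrance and single exit) at the cost of the additional gadgetry.
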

\begin{proof}[Proof sketch]
 Let~$I_1^{\text{3-Sum}}, \dots, I_t^{\text{3-Sum}}$ be instances of \textsc{3-Sum}.
 We denote by $n_i$ the number of numbers of~$I_1^{\text{3-Sum}}$.
 Gajentaan and Overmars~\cite{DBLP:journals/comgeo/GajentaanO95} gave a reduction from \textsc{3-Sum} to \textsc{Planar Motion Planning} which, given an instance of \textsc{3-Sum} with $n$ numbers, constructs in~$\widetilde O (n )$ time an instance of \textsc{Planar Motion Planning} where the rod initially is in a large ``upper'' rectangle and has to reach a ``lower'' rectangle through a narrow passage (see \Cref{fig:pmp} for a proof by picture).
 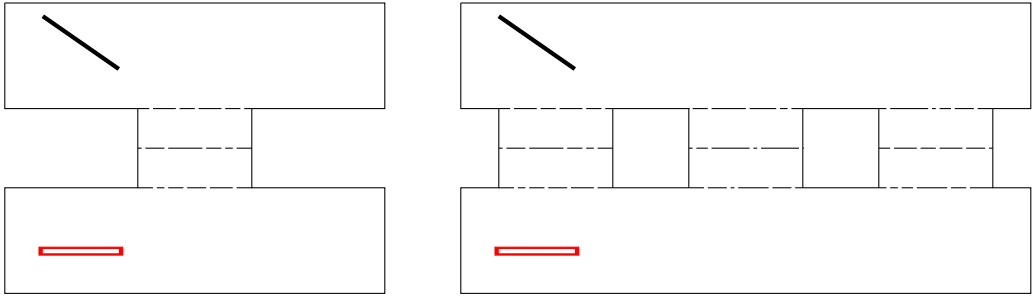
\begin{figure}
  \begin{center}
   \begin{tikzpicture}[yscale=0.35,xscale=0.5]
    \draw (0, 0) -- (10, 0) -- (10, -4) -- (6.5, -4);
    \draw (3.5, -4) -- (0, -4) -- (0, 0);
    
    \draw (3.5, -4) -- (3.5, -7) -- (0, -7) -- (0, -11) -- (10, -11) -- (10, -7) -- (6.5, -7) -- (6.5, -4);
    
    \draw[ultra thick] (1, -0.5) -- (3,-2.5);
    
    \draw (3.5, -4) -- (3.8, -4);
    \draw (3.9, -4) -- (4.5, -4);
    \draw (4.6, -4) -- (5, -4);
    \draw (5.1, -4) -- (5.7, -4);
    \draw (5.8, -4) -- (6.2, -4);
    \draw (6.3, -4) -- (6.5, -4);
    
    \begin{scope}[yshift = -1.5cm]
    \draw (3.5, -4) -- (3.6, -4);
    \draw (3.7, -4) -- (4.2, -4);
    \draw (4.3, -4) -- (5.2, -4);
    \draw (5.3, -4) -- (5.7, -4);
    \draw (5.8, -4) -- (6., -4);
    \draw (6.1, -4) -- (6.5, -4);
    \end{scope}
    
    \begin{scope}[yshift = -3cm]
    \draw (3.5, -4) -- (3.9, -4);
    \draw (4, -4) -- (4.2, -4);
    \draw (4.3, -4) -- (4.7, -4);
    \draw (4.8, -4) -- (5.3, -4);
    \draw (5.4, -4) -- (6., -4);
    \draw (6.1, -4) -- (6.5, -4);
    \end{scope}
    
    \draw[ fill, color=red] (0.9, -9.25) rectangle (3.1, -9.55);
    \draw[ultra thick, color=white] (1, -9.4) -- (3,-9.4);
    
    \begin{scope}[xshift=12cm]
         \draw (0, 0) -- (15, 0) -- (15, -4) -- (14, -4);
		\draw (1, -4) -- (0, -4) -- (0, 0);
		
		\draw (1, -4) -- (1, -7) -- (0, -7) -- (0, -11) -- (15, -11) -- (15, -7) -- (14, -7) -- (14, -4);
		
		\draw (4, -4) -- (6, -4);
		\draw (4, -7) -- (6, -7);
		
		\draw (9, -4) -- (11, -4);
		\draw (9, -7) -- (11, -7);

		\draw (4, -4) -- (4, -7);
		\draw (6, -4) -- (6, -7);
		\draw (9, -4) -- (9, -7);
		\draw (11, -4) -- (11, -7);
		
		\draw[ultra thick] (1, -0.5) -- (3,-2.5);
		
		\begin{scope}[xshift = -2.5cm]
		\draw (3.5, -4) -- (3.8, -4);
		\draw (3.9, -4) -- (4.5, -4);
		\draw (4.6, -4) -- (5, -4);
		\draw (5.1, -4) -- (5.7, -4);
		\draw (5.8, -4) -- (6.2, -4);
		\draw (6.3, -4) -- (6.5, -4);
		
		\begin{scope}[yshift = -1.5cm]
		\draw (3.5, -4) -- (3.6, -4);
		\draw (3.7, -4) -- (4.2, -4);
		\draw (4.3, -4) -- (5.2, -4);
		\draw (5.3, -4) -- (5.7, -4);
		\draw (5.8, -4) -- (6., -4);
		\draw (6.1, -4) -- (6.5, -4);
		\end{scope}
		
		\begin{scope}[yshift = -3cm]
		\draw (3.5, -4) -- (3.9, -4);
		\draw (4, -4) -- (4.2, -4);
		\draw (4.3, -4) -- (4.7, -4);
		\draw (4.8, -4) -- (5.3, -4);
		\draw (5.4, -4) -- (6., -4);
		\draw (6.1, -4) -- (6.5, -4);
		\end{scope}
		\end{scope}

		\begin{scope}[xshift = 2.5cm]
		\draw (3.5, -4) -- (3.5, -4);
		\draw (3.6, -4) -- (4., -4);
		\draw (4.1, -4) -- (5, -4);
		\draw (5.1, -4) -- (5.8, -4);
		\draw (5.9, -4) -- (6.1, -4);
		\draw (6.2, -4) -- (6.5, -4);
		
		\begin{scope}[yshift = -1.5cm]
		\draw (3.5, -4) -- (3.6, -4);
		\draw (3.7, -4) -- (4., -4);
		\draw (4.1, -4) -- (5.1, -4);
		\draw (5.2, -4) -- (5.3, -4);
		\draw (5.4, -4) -- (6.4, -4);
		\draw (6.5, -4) -- (6.5, -4);
		\end{scope}
		
		\begin{scope}[yshift = -3cm]
		\draw (3.5, -4) -- (3.9, -4);
		\draw (4, -4) -- (4.5, -4);
		\draw (4.6, -4) -- (4.7, -4);
		\draw (4.8, -4) -- (5.6, -4);
		\draw (5.7, -4) -- (6.2, -4);
		\draw (6.3, -4) -- (6.5, -4);
		\end{scope}
		\end{scope}
		
		\begin{scope}[xshift = 7.5cm]
		\draw (3.5, -4) -- (3.7, -4);
		\draw (3.8, -4) -- (4.8, -4);
		\draw (4.9, -4) -- (5, -4);
		\draw (5.1, -4) -- (5.4, -4);
		\draw (5.5, -4) -- (6., -4);
		\draw (6.1, -4) -- (6.5, -4);
		
		\begin{scope}[yshift = -1.5cm]
		\draw (3.5, -4) -- (3.8, -4);
		\draw (3.9, -4) -- (4.5, -4);
		\draw (4.6, -4) -- (5., -4);
		\draw (5.1, -4) -- (5.7, -4);
		\draw (5.8, -4) -- (6.3, -4);
		\draw (6.4, -4) -- (6.5, -4);
		\end{scope}
		
		\begin{scope}[yshift = -3cm]
		\draw (3.5, -4) -- (3.7, -4);
		\draw (3.8, -4) -- (4., -4);
		\draw (4.1, -4) -- (4.4, -4);
		\draw (4.5, -4) -- (5.3, -4);
		\draw (5.4, -4) -- (5.8, -4);
		\draw (5.9, -4) -- (6.5, -4);
		\end{scope}
    \end{scope}
    \draw[ fill, color=red] (0.9, -9.25) rectangle (3.1, -9.55);
    \draw[ultra thick, color=white] (1, -9.4) -- (3,-9.4);
    \end{scope}
   \end{tikzpicture}

  \end{center}
  \caption{Left: Exemplary illustration of an instance of \textsc{Planar Motion Planning} constructed by the reduction from \textsc{3-Sum} from Gajentaan and Overmars~\cite{DBLP:journals/comgeo/GajentaanO95}. Right: An example for the OR-cross-composition of three instances of \textsc{3-Sum} into \textsc{Planar Motion Planning}.
  The goal position of the rod is surrounded by red.
  \label{fig:pmp}
  }
 \end{figure}
 We apply this reduction to each instance~$I_i^{\text{3-Sum}}$ of \textsc{3-Sum}, giving us an instance~$I_i^{\text{PMP}}$ with $\widetilde O (n_i)$ many segments.
 From these instances, we construct an instance of \textsc{Planar Motion Planning} as follows:
 We identify the large rectangles in which the rod starts from each~$I_i^{\text{PMP}}$.
 The narrow passages are copied next to each other (if the large starting rectangle is not wide enough, then we make it wider.
 
 The correctness of the reduction is obvious.
 
 Any segment from an instance~$I_i^{\text{3-Sum}}$ has distance at most the length of the rod except for the bounding boxes.
 By splitting the segments of the bounding box into many small segments not longer than the rod, we get that for each segment~$s$ there are at most $O (n)$ segments whose distance to~$s$ is at most the length of the rod.
\end{proof}

Combining \Cref{prop:apply-framework} and \Cref{lem:pmp-comp} yields the following:

\begin{corollary}\label{cor:PMPlower}
 Unless the 3SUM-hypothesis fails, there is no $\widetilde O(n^\gamma + \parameter^\beta)$-time algorithm \textsc{Planar Motion Planning} parameterized by the maximum number~$\parameter$ of segments any segment has in its vicinity for $ \beta < \frac{\gamma}{ \gamma - 1}$.
 
 Unless the 3SUM-hypothesis fails, there is no $\widetilde O(n^\gamma)$-time, $\widetilde O(\parameter^\beta)$-size kernel for \textsc{Planar Motion Planning} parameterized by the maximum number~$\parameter$ of segments any segment has in its vicinity for $ \beta < \frac{\gamma}{2 \cdot (\gamma - 1)}$.
\end{corollary}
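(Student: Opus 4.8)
The plan is to read this off directly from the meta-framework, using the $(1,1)$-OR-cross-composition from \textsc{3-Sum} to \textsc{Planar Motion Planning} already established in \Cref{lem:pmp-comp} together with the $2$-OR-decomposability of \textsc{3-Sum} from \Cref{lem:3sum-decomp}. First I would handle the running-time lower bound: since \textsc{3-Sum} is $2$-OR-decomposable and $(1,1)$-OR-cross-composes into \textsc{Planar Motion Planning}, \Cref{prop:apply-framework} applies with $\optalgexponent = 2$ and $\comptimeexponent = \compparameter = 1$, and tells us that an $\widetilde O(n^\gamma + \parameter^\beta)$-time algorithm for \textsc{Planar Motion Planning} with $1 < \gamma < 2$ and $\beta < \frac{\gamma}{\gamma-1}$ would, via \Cref{thm:meta-composition}, yield an $O(n^{2-\vareps})$-time algorithm for \textsc{3-Sum} and thus refute the 3SUM-hypothesis. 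That is literally the second statement of \Cref{prop:apply-framework} instantiated with $\mathcal{Q} = {}$\textsc{Planar Motion Planning}, so essentially nothing new has to be argued.

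For the kernelization statement I would additionally invoke the classical $O(n^2)$-time algorithm for \textsc{Planar Motion Planning}~\cite{DBLP:conf/swat/Vegter90}, i.e.\ $\Qoptalgexponent = 2$ in the notation of \Cref{cor:meta-kernel}. Feeding the same parameters ($\optalgexponent = 2$, $\comptimeexponent = \compparameter = 1$, $\Qoptalgexponent = 2$) into \Cref{cor:meta-kernel} shows that an $\widetilde O(n^\gamma)$-time, $\widetilde O(\parameter^\beta)$-size kernel with
\[ 0 < \beta < \frac{\kernelcomputationexponent \cdot (\optalgexponent - \comptimeexponent)}{(\kernelcomputationexponent - 1)\cdot \compparameter \cdot \Qoptalgexponent} = \frac{\gamma}{2(\gamma-1)} \]
would again refute the 3SUM-hypothesis. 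Intuitively: composing such a kernel with the $O(n^2)$-time algorithm produces an $\widetilde O(n^\gamma + \parameter^{2\beta})$-time algorithm for \textsc{Planar Motion Planning}, to which the first part already applies — the extra factor of two in the denominator is exactly this post-kernelization quadratic blow-up. Again \Cref{cor:meta-kernel} packages this so that only the substitution of constants remains.

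The genuine content has all been discharged beforehand: the geometric construction of \Cref{lem:pmp-comp} (placing the narrow passages of the individual \textsc{3-Sum}-instances side by side under one wide starting rectangle and subdividing bounding boxes so that every segment has only $\widetilde O(n_{\max})$ segments in its vicinity), the $2$-OR-decomposability of \textsc{3-Sum} from \Cref{lem:3sum-decomp}, and \Cref{thm:meta-composition}/\Cref{cor:meta-kernel}. So for the corollary itself I do not foresee any real obstacle; the only things worth double-checking are that \Cref{lem:pmp-comp} indeed delivers a $(1,1)$-OR-cross-composition — that is, $n_{\mathcal{Q}} \in \widetilde O(t \cdot n_{\max})$ and vicinity parameter $\widetilde O(n_{\max})$, which is precisely what is argued there — and that $\Qoptalgexponent = 2$ is the correct exponent to plug in when deriving the kernel bound $\beta < \frac{\gamma}{2(\gamma-1)}$.
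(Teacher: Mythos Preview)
Your proposal is correct and matches the paper's own argument: the paper simply states that combining \Cref{prop:apply-framework} with \Cref{lem:pmp-comp} yields the corollary, and you have spelled out precisely that combination (with the kernel part coming from \Cref{cor:meta-kernel} using the $O(n^2)$ algorithm, i.e.\ $\Qoptalgexponent=2$). If anything you are more explicit than the paper, which leaves the kernel derivation implicit.
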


The lower bound for the running time is tight by the $ O (K^2 \log n)$ time algorithm from~\citet{DBLP:journals/algorithmica/SifronyS87} (where $K$ is the number of segment pairs whose distance is less than the length of the rod), the observation that $K^2 \le n \cdot \ell$, and \cref{thm:upper-bound}.

\subsection{Graph Problems}
\label{sec:iso}

We now turn to graph problems.
First, we consider \textsc{Minimum Weight $k$-Clique} parameterized by the maximum size of a connected component.

\begin{proposition}\label{prop:min-k-clique}
 \textsc{Minimum Weight $k$-Clique} $(1,1)$-OR-cross-composes into \textsc{Minimum Weight $k$-Clique} parameterized by the maximum size of a connected component.
\end{proposition}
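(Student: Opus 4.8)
The plan is to realize this as the \emph{disjoint-union} self-composition sketched in the introduction, preceded by a normalization of the weight thresholds. Let $(G_1,W_1),\dots,(G_t,W_t)$ be the input instances of \textsc{Minimum Weight $k$-Clique}, where $k$ is the fixed constant of the problem (so that all $t$ instances ask for a clique of the same size); write $n_i\coloneqq|V(G_i)|$ and $n_{\max}\coloneqq\max_i n_i$. Since the fine-grained OR-cross-composition definition does not allow us to assume that the $W_i$ coincide, I would first rescale each instance: multiply every edge weight of $G_i$ by $\binom{k}{2}$ and then subtract $W_i$ from it, obtaining a graph $G_i'$. A $k$-clique has exactly $\binom{k}{2}$ edges, so a $k$-element vertex set inducing a clique of total weight $w$ in $G_i$ induces in $G_i'$ a clique of weight $\binom{k}{2}(w-W_i)$, which is at most $0$ if and only if $w\le W_i$. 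Because $k$ is constant and the $W_i$ are polynomially bounded, the weights of $G_i'$ remain polynomially bounded, and this step runs in time linear in $\sum_i|E(G_i)|$.

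Next I would output the instance $(G,0)$, where $G\coloneqq G_1'\sqcup\dots\sqcup G_t'$ is the disjoint union, with parameter $(n^{\mathcal Q},\ell^{\mathcal Q})$, $n^{\mathcal Q}=\sum_i n_i$ being the number of vertices of $G$ and $\ell^{\mathcal Q}$ the order of a largest connected component of $G$. Every $k$-clique of $G$ lies inside a single component $G_i'$ (a clique is connected), so $G$ has a $k$-clique of weight at most $0$ if and only if some $G_i'$ does, which by the previous paragraph holds if and only if some $(G_i,W_i)$ is a yes-instance; this is exactly the equivalence required of the composition. For the size and time bounds, $n^{\mathcal Q}=\sum_i n_i\le t\cdot n_{\max}$, so $n^{\mathcal Q}\in\widetilde O(t\cdot n_{\max}^{1})$, and $\ell^{\mathcal Q}=\max_i n_i\le n_{\max}$, so $\ell^{\mathcal Q}\in\widetilde O(n_{\max}^{1})$; the whole construction takes time linear in $\sum_i|I_i^{\mathcal P}|$, which fits the budget $\widetilde O(t\cdot n_{\max}+\sum_i|I_i^{\mathcal P}|)$. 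Hence this is a $(1,1)$-OR-cross-composition.

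I do not expect a genuinely hard step here: the composition is the folklore ``Direct-OR'' disjoint-union reduction, and the only point that needs care is equalizing the per-instance thresholds, which the scaling-and-shifting map does without inflating either the vertex count or the bit-length of the weights. (If one prefers the threshold-free \textsc{Negative-Weight $k$-Clique}, the normalization step becomes vacuous and the plain disjoint union is directly a $(1,1)$-OR-cross-composition of that problem into its component-order-parameterized version; the two formulations are interchangeable via the same linear-time, parameter-preserving scaling map.) Finally, specialized to $k=3$ and fed into \Cref{thm:meta-composition} together with the $3$-OR-decomposability from \Cref{lem:sampling}, this proposition is what produces the $\Omega(\ell^{2\gamma/(\gamma-1)-\vareps})$-type lower bound for \nt{} parameterized by the maximum component order.
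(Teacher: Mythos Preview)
Your proposal is correct and follows the same approach as the paper: the composition is simply the disjoint union of the input graphs, and the parameter bound is immediate since a largest connected component of the union has at most $n_{\max}$ vertices. The only difference is that you add an explicit threshold-normalization step (scaling and shifting weights so all instances share the common threshold $0$), which the paper silently omits; this extra care is harmless and arguably makes the argument more complete, but it is not a different route.
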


\begin{proof}
 Let $G_1, \dots, G_t$ be instances of \textsc{Minimum Weight $k$-Clique}.
 The cross-composition just computes the disjoint union~$G$ of $G_1, \dots, G_t$.
 
 Clearly, the cross-composition is correct, runs in linear time, and the maximum size of a connected component is bounded by the maximum size of $G_1, \dots, G_s$.
\end{proof}

As \nt\ is the special case of \textsc{Minimum Weight $k$-Clique},
combining \Cref{prop:apply-framework} and \Cref{prop:min-k-clique} yields the following lower bound:

\begin{corollary}\label{cor:ntlower}
 Unless the APSP-hypothesis fails, there is no $\widetilde O(n^\gamma + \parameter^\beta)$-time algorithm for \nt\ parameterized by the maximum size~$\parameter$ of a connected component for $1 < \gamma < 3$ and $ \beta < \frac{2 \cdot \gamma}{ \gamma - 1}$.
 
 Unless the APSP-hypothesis fails, there is no $\widetilde O(n^\gamma)$-time, $\widetilde O(\parameter^\beta)$-size kernel for \nt\ parameterized by the maximum size~$\parameter$ of a connected component for $1 < \gamma < 3$ and $ \beta < \frac{2 \cdot \gamma}{3 \cdot (\gamma - 1)}$.
\end{corollary}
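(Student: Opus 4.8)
The plan is to derive both statements as direct instantiations of the framework of \Cref{sec:framework}, exploiting that \nt{} is exactly \textsc{Negative-Weight $3$-Clique}.

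\textbf{Ingredients.} First I would assemble the two pieces the framework needs. By \Cref{lem:sampling} applied with $k = 3$, \nt{} parameterized by the number of vertices is $3$-OR-decomposable. For the cross-composition, I would observe that the disjoint-union construction of \Cref{prop:min-k-clique} applies verbatim for $k = 3$: given instances $G_1,\dots,G_t$ of \nt{} with at most $n_{\max}$ vertices each, the disjoint union $G$ contains a negative triangle if and only if some $G_i$ does, is computable in linear time, has $O(t \cdot n_{\max})$ vertices, and has largest connected component of order at most $n_{\max}$. Hence this is a $(1,1)$-OR-cross-composition from \nt{} into \nt{} parameterized by the maximum size $\ell$ of a connected component.

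\textbf{Running-time bound.} Now I would simply invoke \Cref{prop:apply-framework} with $\mathcal{Q} = {}$\nt{} parameterized by $\ell$ and the above cross-composition (equivalently, \Cref{thm:meta-composition} with $\alpha = 3$, $\comptimeexponent = 1$, $\compparameter = 1$, using that \textsc{Negative-Weight $3$-Clique} has no $O(n^{3-\vareps})$-time algorithm unless APSP fails~\cite{DBLP:journals/jacm/WilliamsW18}). The side condition $\alpha > \comptimeexponent \cdot \gamma$ is exactly $\gamma < 3$, and the admissible range $\beta < \frac{\gamma(\alpha-\comptimeexponent)}{(\gamma-1)\compparameter}$ becomes $\beta < \frac{2\gamma}{\gamma-1}$; for $\gamma,\beta$ in this range an $\widetilde{O}(n^\gamma + \ell^\beta)$-time algorithm for \nt{} would solve \textsc{Negative-Weight $3$-Clique} in $O(n^{3-\vareps})$ time, refuting the APSP-hypothesis.

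\textbf{Kernel bound and main difficulty.} For the kernelization statement I would instead apply \Cref{cor:meta-kernel}. The one extra ingredient is an unconditional algorithm for \nt{}: the $O(n^3/2^{\Theta(\log^{0.5}n)})$-time algorithm of~\cite{DBLP:journals/siamcomp/Williams18,DBLP:journals/talg/ChanW21} shows \nt{} is decidable in $\widetilde{O}(n^3)$ time, and $n$ is bounded by the input size, so we may take $\xi = 3$. Plugging an $\widetilde{O}(n^\gamma)$-time, $\widetilde{O}(\ell^\beta)$-size kernel into this algorithm yields an $\widetilde{O}(n^\gamma + \ell^{3\beta})$-time algorithm, and \Cref{cor:meta-kernel}'s condition $\beta < \frac{\gamma(\alpha-\comptimeexponent)}{(\gamma-1)\compparameter\xi}$ with $\alpha=3$, $\comptimeexponent=\compparameter=1$, $\xi=3$ reads $\beta < \frac{2\gamma}{3(\gamma-1)}$; in this range such a kernel again contradicts APSP. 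There is essentially no hard step here: the proof is parameter bookkeeping against $\alpha = 3$ and a $(1,1)$-cross-composition. The only points needing care are confirming that the disjoint union really has cross-composition parameters $(\comptimeexponent,\compparameter) = (1,1)$ (i.e. $n_\mathcal{Q} \in \widetilde{O}(t\cdot n_{\max})$ and $\ell_\mathcal{Q} \in \widetilde{O}(n_{\max})$), and, for the kernel part, taking $\xi$ to be the exponent of the best \emph{known} algorithm for \nt{} rather than its conjectured lower-bound exponent.
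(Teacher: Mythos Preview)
Your proposal is correct and follows essentially the same approach as the paper: invoke the $(1,1)$-OR-cross-composition from \Cref{prop:min-k-clique} (specialized to $k=3$) together with \Cref{prop:apply-framework} (which already packages the $3$-OR-decomposability and the APSP-based lower bound for \nt{}), and for the kernel statement plug in $\xi = 3$ via \Cref{cor:meta-kernel}. The paper's proof is the one-line remark preceding the corollary; you have simply unpacked the parameter bookkeeping explicitly.
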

An algorithm running in $O(\ell^2 n)$ where~$\ell$ is the maximum size of a connected component is trivial.
This together with \cref{thm:upper-bound} implies that the running time lower bound is tight.

Next, we turn to \textsc{2nd Shortest Path} which can be solved in $\widetilde O(mn)$~\cite{MR292489} or in $O (M n^\omega)$ time (where $M$ is the largest edge weight)~\cite{DBLP:journals/talg/0001W20}.
If the graph is undirected~\cite{DBLP:journals/networks/KatohIM82} or one aims to find a 2nd shortest walk~\cite{DBLP:journals/siamcomp/Eppstein98}, then there is a quasi-linear-time algorithm.
For unweighted directed graphs, the problem can be solved in $\widetilde O (m \sqrt{n})$ time~\cite{DBLP:journals/talg/RodittyZ12}.
An $\epsilon$-approximation can be computed in $\widetilde O (\frac{m}{\epsilon})$ time~\cite{DBLP:conf/soda/Bernstein10}.

\begin{lemma}\label{lem:2sp}
 \nt\ $(1, 1)$-OR-cross-composes into \textsc{2nd Shortest Path} parameterized by directed feedback vertex number.
\end{lemma}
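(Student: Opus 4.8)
The plan is to give a $(1,1)$-OR-cross-composition from \nt{} into \textsc{2nd Shortest Path} parameterized by directed feedback vertex set; since \nt{} is $3$-OR-decomposable (\Cref{lem:sampling}), \Cref{prop:apply-framework} then immediately yields the desired lower bound. Given \nt{} instances $G_1,\dots,G_t$, I first normalize them: pad each $G_i$ with isolated vertices so that they all share the vertex set $[N]$ with $N:=n_{\max}$ (this creates no new triangles), and, if convenient, shift all edge weights by a common constant so that ``$G_i$ has a negative triangle'' becomes ``$G_i$ has a triangle of weight at most some fixed threshold''. The target instance consists of one directed \emph{spine} $s=p_0\to p_1\to\dots\to p_N=t$ whose $N$ edges (the ``gates'', one per vertex of $[N]$) all carry a fixed large weight $-W$, together with, for each $i\in[t]$, a \emph{detour gadget} on $O(N)$ fresh vertices built from the adjacency structure of $G_i$. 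The detour gadget for $G_i$ lets an $s$--$t$ path leave the spine just before the gate of a vertex $v$, follow three consecutive gadget edges mirroring the three edges of a triangle of $G_i$ through $v$, and re-enter the spine just after that gate; thus it bypasses exactly one gate at a total cost equal to (a fixed offset plus) the weight of that triangle.

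With $W$ chosen large enough, the full spine is the unique shortest $s$--$t$ path: every other $s$--$t$ path bypasses at least one gate, which, because gate weights are $-W$, is strictly more expensive, and among all bypassing paths the cheapest uses a single detour skipping a single gate -- a detour skipping $j>1$ gates would trace a non-triangular three-edge walk between two distinct vertices and lose an extra $\Omega(W)$. Hence the second shortest $s$--$t$ path has length exactly (offset) $+\ \min_{i\in[t]}\min\{w_i(T):T\text{ a triangle of }G_i\}$, so choosing $k$ to be that offset plus the threshold makes ``the second shortest $s$--$t$ path has length at most $k$'' equivalent to ``some $G_i$ has a negative triangle''.

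For the parameters: the spine contributes $N+1$ vertices and each of the $t$ detour gadgets contributes $O(N)$ vertices, so $n^{\mathcal Q}\in\widetilde O(t\cdot n_{\max})$; the construction -- reading the instances and writing down $O(\sum_i m_i)$ edges -- runs in $\widetilde O(t\cdot n_{\max}+\sum_i|I_i^{\mathcal P}|)$ time, as a $(1,1)$-OR-cross-composition demands. Crucially, each detour gadget is internally acyclic and meets the rest of the graph only through the spine vertices $p_0,\dots,p_N$; therefore $\{p_0,\dots,p_N\}$ is a directed feedback vertex set and $\parameter^{\mathcal Q}\in\widetilde O(n_{\max})$ -- in particular the feedback vertex set is \emph{shared} by all $t$ instances and does not grow with $t$. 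Combining this with \Cref{prop:apply-framework} gives the lower bound recorded in \Cref{tab:overview}.

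The step I expect to be the main obstacle is the design of the detour gadget, so that a single-gate bypass costs precisely a \emph{triangle} weight and not the weight of some other three-edge walk out of $v$. A gadget with only $O(n_{\max})$ vertices cannot ``remember'' the anchor vertex $v$ in the naive way (a private copy of $G_i$'s neighbourhoods for each $v$ would need $\Theta(n_{\max}^2)$ vertices per instance), so the detour must be forced back to the correct gate using the $\pm W$ weighting on the gates and a careful orientation of the gadget layers; simultaneously one must keep the shared spine a hitting set for every cycle (so the feedback vertex set stays $\widetilde O(n_{\max})$ overall, not $\widetilde O(t\cdot n_{\max})$) and retain enough control over negative cycles that ``shortest $s$--$t$ path'' behaves as intended (e.g.\ via the weight shift and, if necessary, restricting to graphs without negative cycles, under which the problem is still fixed-parameter tractable in the feedback vertex set). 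Reconciling these competing requirements -- triangle-faithfulness, a small shared feedback set, and the size/weight budget -- is the technical heart of the lemma, and is essentially the directed-Replacement-Paths-is-APSP-hard construction carried out in an OR fashion over the $t$ instances.
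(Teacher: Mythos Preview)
Your plan coincides with the paper's proof: apply the \citet{DBLP:journals/jacm/WilliamsW18} reduction from \nt{} to \textsc{2nd Shortest Path} to each $G_i$ separately and then identify the spines $s,p_1,\dots,p_N,t$ of the $t$ resulting instances; the shared spine is a directed feedback vertex set of size $N=n_{\max}$, each $G_i$ contributes only $O(N)$ fresh gadget vertices, and the composed instance is a yes-instance iff some $G_i$ contains a negative triangle.

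The obstacle you flag is not one. The Williams--Williams gadget already achieves triangle-faithfulness with just three copies $a^v,b^v,c^v$ of each vertex, hence $O(N)$ vertices per instance; it may have up to $\Theta(N^2)$ \emph{edges}, but those are paid for by the $\sum_i|I_i^{\mathcal P}|$ term in the running-time budget of a $(1,1)$-composition, so your $\Theta(n_{\max}^2)$-vertex worry never materialises. The paper moreover takes the spine edges to have weight~$0$ and all detour edges to be strictly positive, which makes the spine the unique shortest $s$--$t$ path without introducing any negative edge weights; your $-W$ gates and the attendant concern about negative cycles are therefore unnecessary. In short, you do not need to design the gadget or the weighting scheme from scratch---invoke the known reduction as a black box and verify the single new observation, namely that after identifying the spines the set $\{p_1,\dots,p_N\}$ hits every directed cycle of the composed graph.
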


\begin{proof}
 Let $(I^{\text{NT}}_1, n_1), \dots, (I^{\text{NT}}_t, n_t)$ be instances of \nt.
 We assume without loss of generality that each instance of \nt\ has the same number~$n$ of vertices, i.e., $n_i = n $ for all~$i \in [t]$, and that the largest absolute value of an edge weight is the same for all instances.
 \citet{DBLP:journals/jacm/WilliamsW18} gave a linear-time reduction from \nt\ to \textsc{2nd Shortest Path} which, given an instance~$(I^{\text{NT}}_i = (G_i, w_i), n)$ of \nt, creates an instance~$I^{\text{2SP}}_i$ of \textsc{2nd Shortest Path} as follows:
 For each vertex~$v \in V(G_i)$, add three vertices~$a^v, b^v$, and $c^v$.
 Further, add two vertices~$s^i$ and~$t^i$ as well as a path $s^i, p_1^i, p_2^i, \dots, p_n^i, t^i$, all of whose edges have weight 0.
 Further, there are edges~$(p_i, a^v)$, $(a^v, b^u)$, $(b^u, c^v)$, and $(c^v, p_i)$ for every~$i\in [n], u, v \in V(G_i)$.
 All these edges have positive weight (depending on the edge-weights in~$E(G_i)$). 
 Identifying $s^i$ with $s^j$, $t^i$ with $t^j$, and $p^i_\ell$ with $p^j_\ell$ for all $i,j \in [t]$ and $j \in [\ell]$ then results in one instance of \textsc{2nd Shortest Path} with a directed feedback vertex set of size~$n$, namely $\{p_1, \dots, p_n\}$.
 As the constructed instance is equivalent to the disjunction of~$I^{\text{2SP}}_1$, \dots, $I^{\text{2SP}}_t$ (it is never beneficial to leave $s, p_1, \dots, p_n, t$ more than once as all edges leaving this path have positive weight), we have a $(1, 1)$-OR-cross-composition.
\end{proof}

Combining \Cref{prop:apply-framework} with \Cref{lem:2sp} yields the following:

\begin{corollary}\label{cor:nt-2SPlower}
	Unless the APSP-hypothesis fails, there is no $\widetilde O(n^\gamma + \parameter^\beta)$-time algorithm for \textsc{2nd Shortest Path} parameterized by directed feedback vertex set~$\parameter$ parameterized by the maximum size~$\parameter$ of a connected component for $1 < \gamma < 3$ and $ \beta < \frac{2\gamma}{ \gamma - 1}$.
 
	Unless the APSP-hypothesis fails, there is no $\widetilde O(n^\gamma)$-time, $\widetilde O(\parameter^\beta)$-size kernel for \textsc{2nd Shortest Path} parameterized by directed feedback vertex set parameterized by the maximum size~$\parameter$ of a connected component for $1 < \gamma < 3$ and $ \beta < \frac{2\gamma}{3(\gamma - 1)}$.
\end{corollary}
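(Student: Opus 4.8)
The plan is to derive both statements by substituting constants into the general framework, with \cref{lem:2sp} supplying the required cross-composition. For the running-time bound I would invoke \cref{prop:apply-framework} with $\mathcal{Q} = \textsc{2nd Shortest Path}$ parameterized by directed feedback vertex set number: by \cref{lem:2sp}, \nt{} parameterized by the number of vertices $(1,1)$-OR-cross-composes into~$\mathcal{Q}$, so \cref{prop:apply-framework} immediately yields that an $O(\parameter^\beta + n^\gamma)$-time algorithm with $1 < \gamma < 3$ and $\beta < \frac{2\gamma}{\gamma - 1}$ would refute the APSP-hypothesis. Unwinding \cref{prop:apply-framework}, this is \cref{thm:meta-composition} applied with $\mathcal{P} = \nt$, which is $3$-OR-decomposable by \cref{lem:sampling} (case $k = 3$), and with $\optalgexponent = 3$ and $\comptimeexponent = \compparameter = 1$: the hypothesis $\optalgexponent > \comptimeexponent \cdot \gamma$ reads $\gamma < 3$, and the admissible interval $0 < \beta < \frac{\gamma(\optalgexponent - \comptimeexponent)}{(\gamma - 1)\compparameter}$ becomes exactly $0 < \beta < \frac{2\gamma}{\gamma - 1}$. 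Since \nt{} admits no $O(n^{3-\vareps})$-time algorithm under the APSP-hypothesis~\cite{DBLP:journals/jacm/WilliamsW18}, the first part follows.

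For the kernelization part I would feed \cref{lem:2sp} into \cref{cor:meta-kernel} instead. That corollary needs an $\widetilde O(n_\mathcal{Q}^{\Qoptalgexponent})$-time algorithm for \textsc{2nd Shortest Path} in which the size parameter $n_\mathcal{Q}$ (the number of vertices) is bounded by the input size; the classical $\widetilde O(mn)$-time algorithm of \citet{MR292489} qualifies with $\Qoptalgexponent = 3$, as $m = O(n^2)$ gives $mn = O(n^3)$ and $n$ is at most the input size. Substituting $\optalgexponent = 3$, $\comptimeexponent = \compparameter = 1$, and $\Qoptalgexponent = 3$ into the bound $0 < \beta < \frac{\gamma(\optalgexponent - \comptimeexponent)}{(\gamma - 1)\compparameter\,\Qoptalgexponent}$ of \cref{cor:meta-kernel} gives $0 < \beta < \frac{2\gamma}{3(\gamma - 1)}$, so an $\widetilde O(n^\gamma)$-time, $\widetilde O(\parameter^\beta)$-size kernel in that range would, through \cref{cor:meta-kernel}, again imply an $O(n^{3-\vareps})$-time algorithm for \nt{} and contradict the APSP-hypothesis.

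The only point requiring a check is that the cross-composition built in \cref{lem:2sp} indeed has the $(1,1)$ profile, i.e.\ that the composed instance satisfies $n_\mathcal{Q} \in \widetilde O(t \cdot n_{\max})$ and $\parameter_\mathcal{Q} \in \widetilde O(n_{\max})$; this is already verified inside that proof, since each of the $t$ \nt-gadgets adds only $O(n)$ new vertices (the $a^v, b^v, c^v$) while the path $p_1, \dots, p_n$ together with $s$ and $t$ is shared, and $\{p_1, \dots, p_n\}$ is a directed feedback vertex set of size $n = n_{\max}$. I therefore do not anticipate a genuine obstacle: all the substantive work lives in \cref{lem:sampling}, \cref{thm:meta-composition}, \cref{cor:meta-kernel}, and \cref{lem:2sp}, and this corollary reduces to bookkeeping the constants $\optalgexponent = 3$, $\comptimeexponent = \compparameter = 1$, and (for the kernel bound) $\Qoptalgexponent = 3$.
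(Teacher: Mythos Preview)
Your proposal is correct and matches the paper's approach: the paper derives the corollary simply by ``Combining \Cref{prop:apply-framework} with \Cref{lem:2sp}'', and you have spelled out exactly this combination (plus the implicit appeal to \Cref{cor:meta-kernel} with $\Qoptalgexponent = 3$ for the kernel statement). The only addition on your side is the explicit bookkeeping of the constants, which the paper leaves to the reader.
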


In contrast to the other problems studied in this paper, we do not know whether the running time lower bounds for \textsc{2nd Shortest Path} are tight.

\subsection{Triangle Collection}

Last, we consider the \textsc{Triangle Collection} problem.
\textsc{Triangle Collection} can trivially be solved in $O(n^3)$ time, but does not admit an $O(n^{3-\epsilon})$ time algorithm assuming SETH, the \textsc{3-Sum} conjecture, \emph{or} the \textsc{APSP} conjecture~\cite{DBLP:journals/siamcomp/AbboudWY18}.
For this problem, we were unable to apply our framework directly, i.e., to find an OR-cross-composition from an OR-decomposable problem to \textsc{Triangle Collection}.
However, we can still get a lower bound in a very similar fashion by combining decomposition and composition into one step.
The difference is that the decomposition of \textsc{Triangle Collection} that we use in the proof of the following result does not admit the ``OR''-property. 

\begin{proposition}\label{prop:tclower}
	Unless \OV, \textsc{3-Sum}, or \textsc{APSP}-hypothesis fails, there is no $\widetilde O(n^\gamma + \parameter^\beta)$-time algorithm for \textsc{Negative Triangle} parameterized by the maximum size~$\parameter$ of a connected component for $1 < \gamma < 3$ and $ \beta < \frac{2 \cdot \gamma}{ \gamma - 1}$.
 
	Unless \OV, \textsc{3-Sum}, or \textsc{APSP}-hypothesis fails, there is no $\widetilde O(n^\gamma)$-time, $\widetilde O(\parameter^\beta)$-size kernel for \textsc{Triangle Collection} parameterized by the maximum size~$\parameter$ of a connected component for $1 < \gamma < 3$ and $ \beta < \frac{2 \cdot \gamma}{3 \cdot (\gamma - 1)}$.
\end{proposition}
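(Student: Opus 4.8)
The statement has two parts. The first part — the $\widetilde O(n^\gamma+\parameter^\beta)$-time lower bound for \textsc{Negative Triangle} parameterized by maximum component size, for $1<\gamma<3$ and $\beta<\tfrac{2\gamma}{\gamma-1}$ — is in fact immediate from \cref{cor:ntlower}: that corollary already rules out such an algorithm under the \emph{weaker} assumption that the APSP-hypothesis alone holds, hence a fortiori under the (stronger) assumption that the \OV{}, \textsc{3-Sum}-, and APSP-hypotheses all hold. So no new work is needed for \textsc{Negative Triangle}, and the plan is to concentrate on the \textsc{Triangle Collection} kernel bound, for which I will first establish the analogous \emph{running-time} lower bound for \textsc{Triangle Collection}.

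\textbf{Running-time lower bound for \textsc{Triangle Collection}.} The plan is to imitate the decompose-then-compose argument of the framework, except that \textsc{Triangle Collection} is a \emph{conjunctive} problem (one asks, for \emph{every} triple of colours, whether there is a rainbow triangle), so no OR-cross-composition is available and the two steps must be merged into a single reduction. Starting from an instance $(G,c)$ of \textsc{Triangle Collection} on $n$ vertices, I would partition $V(G)$ into $z$ parts $V_1,\dots,V_z$ of size at most $\lceil n/z\rceil$ (with $z=n^x$ fixed below) and let $H$ be the disjoint union of the induced subgraphs $G[V_i\cup V_j\cup V_k]$ over all triples $(i,j,k)\in[z]^3$, each carrying the restriction of $c$. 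The key observation is that for any three colours $a,b,c$, the graph $G$ contains a triangle coloured $\{a,b,c\}$ if and only if some $G[V_i\cup V_j\cup V_k]$ does — a triangle of $G$ has its three vertices in at most three of the parts — equivalently if and only if some connected component of $H$ does; quantifying over all colour triples, $(G,c)$ is a yes-instance of \textsc{Triangle Collection} iff $H$ is. Here $H$ has $n_H\le 3z^2 n$ vertices, its largest connected component has $\ell_H\le 3\lceil n/z\rceil$ vertices, and $H$ is computable in $\widetilde O(z\cdot n^2)$ time.

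Now suppose \textsc{Triangle Collection} admitted an $\widetilde O(n^\gamma+\parameter^\beta)$-time algorithm with $1<\gamma<3$ and $\beta<\tfrac{2\gamma}{\gamma-1}$. Running it on $H$ costs $\widetilde O\bigl(z^{2\gamma}n^\gamma+(n/z)^\beta\bigr)$; writing $z=n^x$ this is $\widetilde O\bigl(n^{2\gamma x+\gamma}+n^{\beta(1-x)}\bigr)$, while the construction costs $\widetilde O(n^{2+x})$. I would choose $x$ in the open interval $\bigl(\max\{0,\,1-\tfrac{3}{\beta}\}\,,\ \tfrac{3-\gamma}{2\gamma}\bigr)$, which is nonempty precisely because $\beta<\tfrac{2\gamma}{\gamma-1}$ (and which lies in $(0,1)$ since $\gamma>1$). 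For such $x$ each of the exponents $2\gamma x+\gamma$, $\beta(1-x)$, $2+x$ is strictly below $3$, so the whole procedure runs in $O(n^{3-\vareps})$ for some $\vareps>0$, solving \textsc{Triangle Collection} in truly subcubic time and thereby refuting the SETH, the \textsc{3-Sum}-hypothesis, \emph{and} the APSP-hypothesis~\cite{DBLP:journals/siamcomp/AbboudWY18} — in particular the \textsc{3-Sum}- and APSP-hypotheses, which are among the assumptions of the proposition. This establishes the running-time lower bound for \textsc{Triangle Collection} that the kernel bound relies on.

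\textbf{Kernel lower bound for \textsc{Triangle Collection}.} Here I would argue exactly as in \cref{cor:meta-kernel}, using that \textsc{Triangle Collection} is trivially solvable in $O(n^3)$ time: an $\widetilde O(n^\gamma)$-time, $\widetilde O(\parameter^\beta)$-size kernel, followed by the trivial cubic algorithm applied to the kernel (which has $\widetilde O(\parameter^\beta)$ vertices), yields an $\widetilde O(n^\gamma+\parameter^{3\beta})$-time algorithm for \textsc{Triangle Collection}; applying the running-time lower bound just proved with $3\beta$ in place of $\beta$ gives a contradiction whenever $3\beta<\tfrac{2\gamma}{\gamma-1}$, i.e.\ whenever $\beta<\tfrac{2\gamma}{3(\gamma-1)}$, as claimed. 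The only genuinely delicate point in the whole proof is the correctness of the single-step decomposition: one must verify that taking the \emph{disjoint union} of the induced subgraphs $G[V_i\cup V_j\cup V_k]$ (rather than an OR-composition, which is unavailable for a conjunctive problem) still preserves yes/no-ness, which works because the per-colour-triple existential quantifier commutes with the partition of $V(G)$ into parts. The remaining ingredients — the vertex/component-size counts, the construction time, and checking that the interval for $x$ is nonempty exactly when $\beta<\tfrac{2\gamma}{\gamma-1}$ — are routine bookkeeping of exponents.
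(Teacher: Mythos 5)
Your proposal is correct and follows essentially the same route as the paper: the same one-shot decomposition of $V(G)$ into $z$ parts, the disjoint union of all $z^3$ induced subgraphs $G[V_i\cup V_j\cup V_k]$, a balanced choice of $z=n^x$ to make both the construction and the assumed $\widetilde O(n^\gamma+\parameter^\beta)$ algorithm run in truly subcubic total time, and the same kernel-plus-cubic-algorithm observation giving the $\beta<\frac{2\gamma}{3(\gamma-1)}$ bound. Your explicit interval for $x$ (nonempty exactly when $\beta<\frac{2\gamma}{\gamma-1}$) is just a cleaner rendering of the paper's choice $\lambda=\beta/\gamma-1$, and dispatching the first (\textsc{Negative Triangle}) claim via \cref{cor:ntlower} is a legitimate reading of the statement as written.
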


\begin{proof}
 Fix $1 < \gamma < 3$.
 Let~$G$ be in an instance of \textsc{Triangle Collection}.
 Partition $V(G)$ into $z \coloneqq n^{\nicefrac{\lambda}{3 + \lambda}}$ sets~$V_1, \dots, V_z$ of size $q \coloneqq n^{\nicefrac{3}{(3+ \lambda)}}$, where $\lambda \coloneqq \nicefrac{\beta}{\gamma} -1$.
 For each~$(i,j, k ) \in [z]^3$, let~$G_{(i,j,k)}$ be the graph induced by~$V_i \cup V_j \cup V_k$.
 Finally, let~$H$ be the union of all~$G_{(i,j,k)}$.
 Note that $H$ corresponds to the output of the OR-decomposition in our framework.
 Clearly $G$ has a triangle collection if and only if $H$ has.
 Further, $H$ can be computed in $\widetilde O (q^2 \cdot z^3 ) = \widetilde O (n^{\nicefrac{3(\lambda + 2)}{(3 + \lambda)}})$ time (note that each of the $z^3$ graphs~$G_{(i,j,k)}$ has size $O(q^2)$ as a $q$-vertex graph may have $O(q^2)$ many edges).
 
 Now assume that there is a $\widetilde O (n^{\gamma} + \ell^\beta)$-algorithm for \textsc{Triangle Collection} with $\beta < \frac{2 \cdot \gamma}{ \gamma - 1}$ and apply this algorithm to~$H$.
 This clearly solves~$H$.
 The running time for this step is $\widetilde O((z^3\cdot q)^{\gamma} + q^{\beta}) = \widetilde O (n^{ \gamma\cdot (\nicefrac{3 \cdot (\lambda + 1)}{(3 + \lambda)})} + n^{\beta \cdot \nicefrac{3}{(3+ \lambda)}})$ time.
 Note that $\nicefrac{\gamma\cdot 3\cdot (\lambda + 1)}{(3+\lambda)} = 3 \cdot \frac{ \nicefrac{\beta}{\gamma}}{2 + \nicefrac{\beta}{\gamma}} < 3$.
 Further, it holds that 
 \[
  \beta \cdot \nicefrac{3}{(3+ \lambda)} = 3 \cdot \frac{\beta}{2+ \nicefrac{\beta}{\gamma}} = 3 \cdot \frac{\beta \gamma}{2\gamma + \beta} = 3 \cdot \frac{\gamma}{\nicefrac{2\gamma}{\beta} + 1} < 3 \cdot \frac{\gamma}{\nicefrac{2 \gamma}{\bigl(\frac{2\gamma}{(\gamma -1)}\bigr)} + 1}= 3 \cdot \frac{\gamma}{(\gamma - 1) + 1} = 3
 \]
 where we used the assumption $\beta < \frac{2 \cdot \gamma}{ \gamma - 1}$ for the inequality.
 Consequently, we can solve \textsc{Triangle Collection} in $O (n^{3-\epsilon})$ time for some  $\epsilon > 0$.
 
 The statement for kernels follows from the observation that an $O(\ell^\beta)$-size, $O(n^{\gamma})$-time kernel for \textsc{Triangle Collection} directly implies an $O(n^{\gamma} + \ell^{3 \cdot \beta})$-time algorithm for \textsc{Triangle Collection}.
\end{proof}

As an $O(n \ell^2)$ time algorithm for \textsc{Triangle Collection} is trivial, it follows from \cref{thm:upper-bound} that the running time lower bound is tight.

\section{Conclusion}

We introduced a framework for extending conditional running time lower bounds to parameterized running time lower bounds and applied it to various problems.
Beyond the clear task to apply the framework to further problems there are further challenges for future work.
For example, can we get ``AND-hard'' problems so that we can use AND-cross compositions similar to the ones used to exclude compression~\cite{Dru15}?
Moreover, can the framework be adapted to cope with dynamic, counting, or enumerating problems?

\bibliographystyle{plainnat}
\bibliography{bib}

\end{document}